\let\ps@pprintTitle\ps@plain
 \journal{ }
\newtheorem{theorem}{Theorem}[section]
\newtheorem{lemma}[theorem]{Lemma}
\newtheorem{remark}[theorem]{Remark}
\newtheorem{proposition}[theorem]{Proposition}
\newtheorem{prob}{RH problem}[section]
\newtheorem{prob2}{$\bar{\partial}$-RH problem}[section]
\newtheorem{prob3}{$\bar{\partial}$-problem}[section]
\numberwithin{equation}{section}
\DeclareMathOperator*{\im}{Im}
\DeclareMathOperator*{\re}{Re}
\DeclareFontFamily{U}{mathx}{}
\DeclareFontShape{U}{mathx}{m}{n}{<-> mathx10}{}
\DeclareSymbolFont{mathx}{U}{mathx}{m}{n}
\DeclareMathAccent{\widehat}{0}{mathx}{"70}
\DeclareMathAccent{\widecheck}{0}{mathx}{"71}
\begin{document}

\begin{frontmatter}
\title{Painlev\'e   transcendents in the  defocusing mKdV equation  with  non-zero boundary conditions}

\author[inst2]{Zhaoyu Wang$^{\dag,}$}

\author[inst2]{Taiyang Xu$^{\ast,}$}

\author[inst2]{Engui Fan$^{\star,}$  }

\address[inst2]{ School of Mathematical Sciences and Key Laboratory of Mathematics for Nonlinear Science, Fudan University, Shanghai, 200433, China\\
$\dag$ E-mail address: wang\_zy@fudan.edu.cn;\ \ \
* E-mail\, address: tyxu19@fudan.edu.cn;\\
$\star  $Corresponding author and e-mail address: faneg@fudan.edu.cn  }

\begin{abstract}
We consider the Cauchy problem for the  defocusing modified Korteweg-de Vries (mKdV) equation  with non-zero boundary conditions
\begin{align}
    &q_t(x,t)-6q^2(x,t)q_{x}(x,t)+q_{xxx}(x,t)=0,    \nonumber\\
    &q(x,0)=q_{0}(x)\to \pm 1, \ \ x\rightarrow\pm\infty,\nonumber
\end{align}
which can be   characterized using a   Riemann-Hilbert  problem through the inverse scattering transform.  Using the $\bar\partial$-generalization of the Deift-Zhou nonlinear steepest descent approach, combined with the double scaling limit  technique,  we obtain the long-time   asymptotics  of the solution of the Cauchy problem for the defocusing
mKdV equation in the transition region $|x/t+6|t^{2/3}< C$ with $C>0$.  The asymptotics
 can be expressed in terms of the solution of the second  Painlev\'{e}  transcendent.
\end{abstract}

\begin{keyword}
defocusing mKdV equation, \  Riemann-Hilbert problem, \ $\bar\partial$-steepest descent method, \ Painlev\'{e}  transcendents, \ long-time asymptotics

\textit{Mathematics Subject Classification:} 35P25; 35Q51; 35Q15; 35B40; 35C20.
\end{keyword}

\end{frontmatter}

\tableofcontents

\section{Introduction }

This  paper is concerned  with     the  Painlev\'e  asymptotics  of the defocusing  modified Korteweg-de Vries (mKdV)  equation with non-zero boundary conditions
\begin{align}
    &q_t(x,t)-6q^2(x,t)q_{x}(x,t)+q_{xxx}(x,t)=0,\label{dmkdv}\\
    &q(x,0)=q_{0}(x)\to \pm1, \ \ x\rightarrow\pm\infty,\label{bdries}
\end{align}
where $q_0(x) -\tanh{(x)} \in H^{4,4}(\mathbb{R})$.
The mKdV equation     arises in various  physical fields, such as acoustic wave and phonons in a certain anharmonic
lattice \cite{Zab1967,Ono1992}, as well as Alfv\'en wave in a cold collision-free plasma \cite{kaku1969, Kha1998}.
  The  mKdV equation on the line is locally well-posed \cite{mkdv1} and globally well-posed in $H^s(\mathbb{R})$ for $s \ge \frac{1}{4}$ \cite{mkdv2, mkdv3, mkdv4}.
   Recently,  the global  well-posedness to the Cauchy problem  for the mKdV equation was further generalized to the space   $H^{s}(\mathbb{R})$ for $s >-\frac{1}{2}$
   \cite{mkdv6}.

It is  well-established that
   the defocusing mKdV
equation \eqref{dmkdv} with zero boundary conditions (ZBCs, i.e., $q_0(x)\rightarrow 0$ as $x\rightarrow\pm\infty$), does not exhibit solitons due to the absence of discrete spectrum in the
 self-adjoint ZS-AKNS scattering operator    (see (\ref{lax pair}) below) \cite{DeiftP2}.
Numerous studies have been conducted to analyze the long-time asymptotic behavior of solutions to   the equation \eqref{dmkdv} in the continuous spectrum without   solitons.
The earliest work can be traced back to Segur and Ablowitz \cite{AblwzP1}, who derived  the leading asymptotics of the solutions of the mKdV and Korteweg-de Vries (KdV) equations, including  the full information on the phase.
The Deift-Zhou nonlinear steepest descent method \cite{DeiftP2} has significantly influenced research on the long-time  behavior of the mKdV equation (\ref{dmkdv}), rigorously deriving the asymptotics for all relevant regions, including the self-similar Painlev\'e region, under the conditions of soliton free.
The long-time asymptotic behavior of the solution  to the mKdV equation  with step-like initial data has been extensively studied in previous works \cite{MK0,MK1,GM,VA1,VA1ST}.
Boutet de Monvel \emph{et al.}  discussed the initial boundary value problem of the defocusing mKdV equation in the finite interval
 using the Fokas method \cite{Monvel}.
Moreover, the long-time asymptotics  of  the solution to the defocusing
mKdV  equation (\ref{dmkdv})  was established for initial data   in  a  weighted Sobolev space  $H^{2,2}(\mathbb{R})$ without considering solitons  \cite{chenliu}.
Furthermore, Charlier and Lenells investigated the Airy and Painlev\'e asymptotics for the mKdV equation  \cite{Charlier2020}, and later, Huang and Zhang extended these asymptotics to the entire mKdV hierarchy \cite{Huang2020}.
The Painlev\'e asymptotics in transition regions also appear in other integrable systems.
Segur and Ablowitz described the asymptotics in the transition region for the  KdV  equation \cite{AblwzP1}. The connection  between    the  tau-function of the sine-Gordon  reduction and the Painlev\'e \uppercase\expandafter{\romannumeral3} equation was established through  the RH  approach \cite{Its3}.
Additionally, Boutet de Monvel \emph{et al.} obtained  the Painlev\'e asymptotics for the Camassa-Holm equation by using the nonlinear steepest descent approach \cite{Monvel2010}.
More recently, we  found  the Painlev\'e asymptotics for the defocusing nonlinear Schr\"odinger (NLS) equation with non-zero boundary conditions (NZBCs, i.e., $q_0(x)\nrightarrow 0$ as $x\rightarrow\pm\infty$)  \cite{wfp}.
Moreover, it also appears in the modified  Camassa-Holm equation \cite{xyz}.

However, it is worth noting that the defocusing mKdV equation \eqref{dmkdv} with  NZBCs  \eqref{bdries}
allows the existence of  solitons   due to the presence of  a non-empty  discrete spectrum. The corresponding
 $N$-soliton solutions  were skillfully constructed using the inverse scattering transform (IST)  \cite{Z&Y}.
  Recently,  by  using the $\bar\partial$ steepest descent method, which  was  introduced in  \cite{McL1,McL2} and has been extensively implemented
in  the long-time asymptotic analysis  and  soliton resolution conjecture of   integrable systems  \cite{fNLS,Liu3,LJQ,YF1,WF},
the  long-time asymptotics  of the solution to  the Cauchy problem \eqref{dmkdv}-\eqref{bdries} was obtained in three different  regions:  a solitonic region $-6<\xi\le-2$ (where $\xi:=x/t$) \cite{zx1} and
two solitonless region $ \xi<-6$ and $  \xi>-2$ \cite{zx2} (see Figure \ref{cone}).
The remaining question is: How to describe   the  asymptotics of the solution to  the Cauchy problem \eqref{dmkdv}-\eqref{bdries}
in the transition region $\xi \approx  -6$ ?

In this paper,
  we demonstrate that  the long-time asymptotics   of the solution to the Cauchy problem \eqref{dmkdv}-\eqref{bdries} in this transition region   can be expressed in terms  of    the solution of the  Painlev\'e \uppercase\expandafter{\romannumeral2} equation.
In the generic case
of the mKdV equation,
   the  norm  $ (1- |r(z)|^2)^{-1}$    blows up as $z  \to   \pm 1$.  This is not merely a technical difficulty,
   but indicates the    emergence of a new phenomenon that
   cannot be treated in  the same manner as the   cases in  \cite{zx1,zx2}. In the context of our research, we confirm   the Painlev\'e asymptotics  in the  transition region $\xi \approx  -6$.
Compared  with  the case of ZBCs \cite{DeiftP2,Charlier2020,Huang2020},   the case of  NZBCs  we considered    meets substantial difficulties.
    Firstly, due to the effect of solitons on the Cauchy problem \eqref{dmkdv}-\eqref{bdries}, a more detailed description is necessary to formulate a solvable model.
 Secondly, in the case of  the mKdV equation \eqref{dmkdv}   with  ZBCs,    the phase function is given by
$$\theta(z)=2z^3+zx/t, $$
and the corresponding  RH problem  can directly   match   the    Painlev\'{e} \uppercase\expandafter{\romannumeral2} model RH problem (see  \ref{appx})  \cite{DeiftP2,Charlier2020}.
However, in  the   case  of NZBCs  \eqref{dmkdv}-\eqref{bdries},   the phase function becomes
$$\theta(z)= \frac{1}{2}(z-z^{-1})\left[x/t+(z+z^{-1})^2 +2\right], $$
whose  corresponding  RH problem cannot  directly   match   a solvable     Painlev\'{e}  RH problem.
To confront this difficulty,  we propose a key technique to approximate the phase function of  this  RH problem to that of  the Painlev\'{e} \uppercase\expandafter{\romannumeral2} model RH problem using double series (see (\ref{tthe1})-(\ref{tthea}) below).
 By doing so, we find the Painlev\'e  asymptotics for the mKdV equation under   NZBCs in the transition region $|x/t+6|t^{2/3}< C$ with $C>0$.

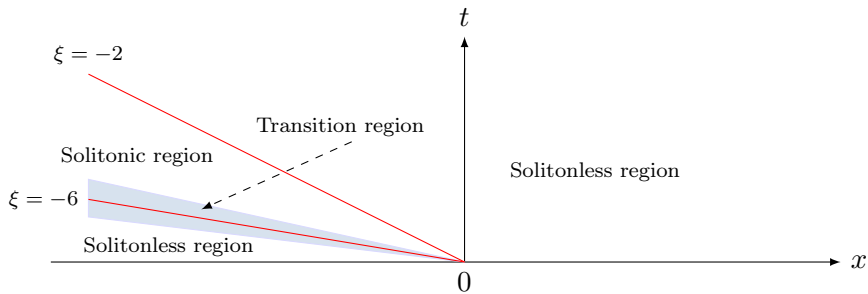
\begin{figure}[htbp]
    \begin{center}
    \begin{tikzpicture}[node distance=2cm]
    \draw[-latex](-5.5,0)--(5,0)node[right]{$x$};
    \draw[-latex](0,0)--(0,3)node[above]{$t$};
     \draw[blue!15, fill=LightSteelBlue!50] (0,0)--(-5,0.6)--(-5,1.1)--(0,0);
    \draw[ red](0,0)--(-5,5/6)node[left,black]{\scriptsize{$\xi=-6 $}};
    \draw[ red](0,0)--( -5,2.5)node[above,black]{\scriptsize{$\xi=-2 $}};

    \node[below]{$0$};
    \coordinate (A) at (-5.2, 0.2);
	\fill (A) node[right] {\scriptsize{Solitonless region}};
    \coordinate (C) at (-5.5, 1.4);
    \fill (C) node[right] {\scriptsize{Solitonic region}};
     \coordinate (D) at (3, 1.2);
	 \fill (D) node[left] {\scriptsize{Solitonless region}};
    \coordinate (E) at (-0.4, 1.8);
	 \fill (E) node[left] {\scriptsize{Transition region}};
       \draw[dashed,-latex](-1.5,1.6)--(-3.5,0.7);
    \end{tikzpicture}
    \caption{\small  The $(x,t)$-plane is divided into  three kinds of asymptotic regions:
    Solitonic region, $-6 <\xi\le-2 $; Solitonless region,  $\xi<-6 $ and  $\xi>-2 $;  Transition region, $\xi\approx -6 $.} \label{cone}
    \end{center}
    \end{figure}

The organization of the present paper is as follows:
In Section \ref{sec1}, we  consider the forward scattering transform for the Cauchy problem (\ref{dmkdv})-(\ref{bdries}),
 including  the properties of the Jost functions and the scattering data derived from the initial data.
In Section \ref{rh},  we perform the inverse scattering transform and establish a  matrix-valued RH problem associated with  this Cauchy problem.  Furthermore,
 we  transform the original RH problem to a regular RH problem, removing the effect of solitons and the spectral singularities.
In Section \ref{sec3},  we investigate the Painlev\'e  asymptotics in the transition region
$\left|x/t +6 \right|t^{2/3}<C$ for any $C>0$.
The  $\bar\partial$-steepest descent method and  the double scaling limit technique    are applied to deform  the regular RH problem  into
a solvable RH problem that matches with  the  Painlev\'{e} \uppercase\expandafter{\romannumeral2} model RH problem.
Following the analysis presented in the preceding sections, we establish   the main result on the Painlev\'e asymptotics for the defocusing mKdV equation in Theorem \ref{th}.


\begin{remark}
 For the generic case when $  |r(\pm 1)|=1 $,
Cuccagna and Jenkins proposed a new way in \cite{CJ}  to get rid of  the  restrictive condition $\|r\|_{L^{\infty}(\mathbb{R})}<1$
 by specially  handling the singularity caused by   $ |r(\pm 1) | = 1$.
 By using this method, the asymptotic properties of the similar and self-similar region can be matched,
 and thus  no new shock wave asymptotic  forms  appear.
 Because of this,  the  Painlev\'e asymptotics given   in Theorem \ref{th} are still effective when $|r(\pm 1)|=1$ generically.
\end{remark}

\paragraph{\bf Notations}
We introduce some notations that  will  be used  in this paper:
\begin{itemize}
    \item  $L^{p,s}(\mathbb{R})$ defined with the norm $\|q\|_{L^{p,s}(\mathbb{R})} := \|\langle x \rangle^s q\|_{L^{p}(\mathbb{R}) }$,  where $\langle x \rangle=\sqrt{1+x^2}$.
    \item $W^{k,p}(\mathbb{R})$ defined with the norm $\|q\|_{W^{k,p}(\mathbb{R})}:=\sum_{j=0}^k \|\partial^jq\|_{L^p(\mathbb{R})}$, where $\partial^jq$ is the $j^{th}$ weak derivative of $q$.
   \item$H^k(\mathbb{R})$ defined with the norm $\|q\|_{H^k(\mathbb{R})}:=\|\langle x \rangle^k \widehat q\|_{L^2(\mathbb{R})}$, where $\widehat q$ is the Fourier transform of $q$.
  \item $H^{k,s} (\mathbb{R}) : = H^k(\mathbb{R}) \cap L^{2,s}(\mathbb{R})$ defined with the norm
    $\|q\|_{H^{k, s}(\mathbb{R})}  :=     \|  q \|_{H^k(\mathbb{R})}+\|\langle x \rangle^s q\|_{L^{2}(\mathbb{R}) }.$
    \item As usual, the three Pauli matrices are defined by
    \begin{equation}
    \sigma_1 = \left(\begin{array}{cc}
  0 & 1\\
  1 & 0\\
 \end{array}\right), \quad \sigma_2 = \left(\begin{array}{cc}
  0 & -i\\
  i & 0\\
 \end{array}\right), \quad \sigma_3 = \left(\begin{array}{cc}
  1 & 0\\
  0 & -1\\
 \end{array}\right). \nonumber
    \end{equation}
    \item For a complex-valued function $f(z)$ where $z \in \mathbb{C}$, we use
   $  f^*(z):= \overline{f(\bar{z})}$
  to denote the Schwarz conjugation.
    \item $\hat{\sigma}_3 $ acts on a matrix $A$ by   $e^{\hat{\sigma}_3 }A = e^{\sigma_3}  A e^{-\sigma_3} $.
\end{itemize}

\section{Forward  Scattering Transform} \label{sec1}
In this section, we review main results about the forward scattering transform of the defocusing mKdV equation with  weighted Sobolev initial data. A comprehensive exposition of these results can be found in \cite{zx1}.

\subsection{Jost functions }
The Lax pair of the defocusing mKdV equation \eqref{dmkdv} is given by
\begin{equation}\label{lax pair}
    \Phi_x=X\Phi, \quad \Phi_t=T\Phi,
\end{equation}
where
\begin{align*}
    &X=ik\sigma_3+Q,\ \ T=4k^2X-2ik\sigma_3(Q_x-Q^2)+2Q^3-Q_{xx},
\end{align*}
$k \in \mathbb{C}$ is a spectral parameter, and
$$Q=Q(x,t)=\left(\begin{array}{cc}
  0 & q(x,t)\\
   q(x,t) & 0\\
 \end{array}\right).$$

Under the boundary condition (\ref{bdries}), we then get the asymptotic
spectral problem
\begin{equation}\label{asy spec prob}
    \phi_{\pm, x}= X_{\pm}\phi_{\pm}, \quad
    \phi_{\pm, t}= T_{\pm}\phi_{\pm}, \quad  x\rightarrow\pm\infty,
\end{equation}
where 
\begin{equation*}
    X_{\pm}=ik\sigma_3+Q_{\pm}, \quad T_{\pm}=(4k^2+2)X_{\pm}, \quad  Q_{\pm} = \pm \sigma_1.
\end{equation*}
The eigenvalues of $X_{\pm}$ are $\pm i\lambda$, which satisfy the equality
\begin{equation}
    \lambda^2=k^2-1.
\end{equation}
Since the eigenvalue $\lambda$ is multi-valued, we introduce the following uniformization variable
\begin{equation}
    z=k+\lambda,
\end{equation}
and obtain two single-valued functions
\begin{equation}
    \lambda(z)=\frac{1}{2}(z-z^{-1}),\quad k(z)=\frac{1}{2}(z+z^{-1}).
\end{equation}
We derive from the asymptotic spectral problem \eqref{asy spec prob} that
\begin{equation}
    \phi_{\pm}(z)= E_{\pm}(z)e^{ i\lambda(z)x\sigma_3},
\end{equation}
where
\begin{equation*}
    E_{\pm}(z)=I\mp z^{-1}\sigma_2.
\end{equation*}
As usual, we define the Jost functions $\Phi_{\pm}$ such that
\begin{equation*}
\Phi_{\pm}(z)\sim E_{\pm}(z)e^{i\lambda(z)x\sigma_3}, \quad {\rm as} \ \ x\to\pm\infty.
\end{equation*}
Subsequently, the modified Jost functions are defined by
\begin{equation}\label{muPhirelation}
    \mu_{\pm}(z)=\Phi_{\pm}(z)e^{-i\lambda(z)x\sigma_3},
\end{equation}
and we then have
\begin{align*}
    &\mu_{\pm}(z)\sim E_{\pm}(z), \quad {\rm as} \ x\rightarrow\pm\infty, \\
    &{\rm det}(\Phi_{\pm}(z))={\rm det}(\mu_{\pm}(z))={\rm det}(E_{\pm}(z))=1-z^{-2}.
\end{align*}
Furthermore, $\mu_{\pm}(z)$ could be defined by the Volterra type integral equations
\begin{align}
    &\mu_{\pm}(z)=E_{\pm}(z)+\int_{\pm\infty}^{x} E_{\pm}(z)e^{i\lambda(z)(x-y)\hat{\sigma}_3}\left(E^{-1}_{\pm}(z)\Delta Q_{\pm}\left(y\right)\mu_{\pm}\left( z;y\right)\right)\mathrm{d}y, \quad z\neq \pm 1,\label{Vo1}\\
    &\mu_{\pm}(z)=E_{\pm}(z)+\int_{\pm\infty}^{x} \left(I+\left(x-y\right)\left(Q_{\pm}\pm i\sigma_3\right)\right)\Delta Q_{\pm}\left(y\right)\mu_{\pm}\left(z;y\right)\mathrm{d}y, \quad z=\pm 1,\label{Vo2}
\end{align}
where $\Delta Q_{\pm}=Q-Q_{\pm}$.

Denote
$\mathbb{C}^\pm =\{z \in \mathbb{C}:\pm \im z >0\}$.
 Let $\mu_{\pm}(z)=\left( \mu_{\pm,1}(z),  \mu_{\pm,2}(z)\right)$.  The properties of $\mu_{\pm} (z)$  are concluded in the following lemma  \cite{zx1}.
\begin{lemma}\label{lsym1}
    	 Given $n\in\mathbb{N}_0$, let $q_0(x)-\tanh{(x)}\in L^{1,n+1}(\mathbb{R})$ and $q'_0(x) \in W^{1,1}(\mathbb{R})$. Then

\begin{itemize}

 \item   { Analyticity:}  For $z\in\mathbb{C}\setminus\{0\}$, $\mu_{+,1}(z)$ and $\mu_{-,2}(z)$ can be analytically extended to $\mathbb{C}^+$ and continuously extended to $\mathbb{C}^+\cup \mathbb{R}$;
    $\mu_{-,1}(z)$ and $\mu_{+,2}(z)$ can be analytically extended to $\mathbb{C}^-$ and continuously extended to $\mathbb{C}^-\cup \mathbb{R}$.

    \item {Symmetry:} $\mu_{\pm}(z)$ satisfies the symmetries
         	\begin{equation} \label{symmu}
    	\mu_{\pm}(z)=\sigma_1\overline{\mu_{\pm}(\bar{z})}\sigma_1=\overline{\mu_{\pm}(-\bar{z})}=\mp z^{-1}\mu_{\pm}\left(z^{-1}\right)\sigma_2.
    	\end{equation}
  \item    { Asymptotic behavior as $z\to \infty$:  } For $\im z \ge 0$, as $z\to\infty$,
    	    \begin{align*}
    &\mu_{+,1}(z)=e_1+z^{-1}\left(\begin{array}{c}
                                    -i\int_{x}^{\infty}(q^2-1)\mathrm{d}x\\
                                    -iq
                                    \end{array}\right)+\mathcal{O}\left(z^{-2}\right),\\
    &\mu_{-,2}(z)=e_2+z^{-1}\left(\begin{array}{c}
                                    iq\\
                                    i\int^{x}_{-\infty}(q^2-1)\mathrm{d}x
                                    \end{array}\right)+\mathcal{O}\left(z^{-2}\right);
    \end{align*}
    For $\im z \le 0$, as $z\to\infty$,
     \begin{align*}
    &\mu_{-,1}(z)=e_1+z^{-1}\left(\begin{array}{c}
                                    -i\int^{x}_{-\infty}(q^2-1)\mathrm{d}x\\
                                    -iq
                                    \end{array}\right)+\mathcal{O}\left(z^{-2}\right),\\
    &\mu_{+,2}(z)=e_2+z^{-1}\left(\begin{array}{c}
                                    iq\\
                                    i\int_{x}^{\infty}(q^2-1)\mathrm{d}x
                                    \end{array}\right)+\mathcal{O}\left(z^{-2}\right),
    \end{align*}
    where $e_1 = (1,0)^\textnormal{T}$ and $e_2 = (0,1)^\textnormal{T}$.
  \item    { Asymptotic behavior as $z \to 0$:} For $z \in \mathbb{C}^+$, as $z\to 0$,
     \begin{equation*}
    \mu_{+,1}(z)=-i z^{-1}e_2+\mathcal{O}(1),\hspace{0.5cm}\mu_{-,2}(z)=-i z^{-1}e_1+\mathcal{O}(1);
    \end{equation*}
    For $z \in \mathbb{C}^-$, as $z \to 0$,
        \begin{equation*}
    \mu_{-,1}(z)=iz^{-1}e_2+\mathcal{O}(1),\hspace{0.5cm}\mu_{+,2}(z)=iz^{-1}e_1+\mathcal{O}(1).
    \end{equation*}
  \end{itemize}

    \end{lemma}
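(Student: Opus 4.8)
The plan is to read off all four properties from the Volterra integral equations \eqref{Vo1}--\eqref{Vo2} together with the explicit structure of $E_\pm(z)$ and $\lambda(z)=\tfrac12(z-z^{-1})$. First I would recast \eqref{Vo1} as a fixed-point equation $\mu_\pm=E_\pm+\mathcal{K}_\pm\mu_\pm$ and iterate it into a Neumann series $\mu_\pm=\sum_{j\ge0}\mathcal{K}_\pm^{\,j}E_\pm$. The essential point is the bound on the iterated kernels: the factor $e^{i\lambda(z)(x-y)\hat\sigma_3}$ contributes off-diagonal entries $e^{\pm 2i\lambda(z)(x-y)}$, and along the contour $y$ runs between $x$ and $\pm\infty$, so for each fixed column exactly one of these exponentials stays bounded precisely when $z$ lies in the corresponding half-plane, matching the stated analyticity domains. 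Under the hypothesis $q_0\in\tanh(x)+L^{1,n+1}(\mathbb{R})$ the difference $\Delta Q_\pm=Q-Q_\pm\in L^1$, so the $j$-th iterate is dominated by $\tfrac{1}{j!}\|\Delta Q_\pm\|_{L^1}^{\,j}$ and the series converges absolutely and locally uniformly away from $z=0$. Since a locally uniform limit of analytic functions is analytic, this yields the half-plane analyticity of $\mu_{+,1},\mu_{-,2}$ and of $\mu_{-,1},\mu_{+,2}$; continuity up to $\mathbb{R}$ follows by dominated convergence, with the branch points $z=\pm1$ (where $\lambda=0$) handled through the regularized representation \eqref{Vo2}.

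For the symmetry relations \eqref{symmu} I would verify that each candidate, namely $\sigma_1\overline{\mu_\pm(\bar z)}\sigma_1$, $\overline{\mu_\pm(-\bar z)}$, and $\mp z^{-1}\mu_\pm(z^{-1})\sigma_2$, satisfies the same Volterra equation as $\mu_\pm(z)$ with the same normalization, after which uniqueness of the Volterra solution forces equality. Concretely, the reality and off-diagonal symmetry of $Q$ give $\overline{X(\bar z)}=\sigma_1 X(z)\sigma_1$ and $\overline{X(-\bar z)}=X(z)$, since $k(\bar z)=\overline{k(z)}$ and $k(-\bar z)=-\overline{k(z)}$; while the invariance of $\lambda$ and $k$ under $z\mapsto z^{-1}$ together with the transformation $E_\pm(z^{-1})=\mp z\,E_\pm(z)\sigma_2$ produces the third relation. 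In each case one checks that the boundary term $E_\pm$ transforms correctly and that the integral operator is intertwined by the relevant involution.

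The large-$z$ asymptotics follow from inserting the ansatz $\mu_\pm=\mu^{(0)}+z^{-1}\mu^{(1)}+\mathcal{O}(z^{-2})$ into \eqref{Vo1}. Because $\lambda(z)\sim z/2$ as $z\to\infty$, the oscillatory off-diagonal entries become negligible after one integration by parts, and matching powers of $z^{-1}$ identifies $\mu^{(0)}=e_1$ (respectively $e_2$) along with the first correction expressed through $q$ and $\int_x^\infty(q^2-1)\,dy$; the error estimate comes from feeding the expansion back into the Neumann series, using $q_0'\in W^{1,1}(\mathbb{R})$ to control the remainder. The small-$z$ behavior is governed instead by the singularity of $E_\pm(z)$, whose columns behave like $\mp iz^{-1}e_2$ and $\pm iz^{-1}e_1$. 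I would show that the integral contribution in \eqref{Vo1} is subordinate to this pole, so that $\mu_\pm$ inherits the simple pole at $z=0$ with the stated residues, the half-plane distinction arising from which of $e^{\pm2i\lambda(z)(x-y)}$ decays as $z\to0$.

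I expect the main obstacle to be uniform control near the singular set $\{0,\pm1\}$: the exponential $e^{2i\lambda(z)(x-y)}$ degenerates at the branch points $z=\pm1$ and blows up at $z=0$, so the naive Neumann-series estimate breaks down there. Reconciling the two Volterra representations \eqref{Vo1}--\eqref{Vo2} across $z=\pm1$, and separating the genuine $z^{-1}$ pole from the oscillatory correction at $z=0$, is where the weighted-space hypotheses $q_0\in\tanh(x)+L^{1,n+1}(\mathbb{R})$ and $q_0'\in W^{1,1}(\mathbb{R})$ must be exploited most carefully.
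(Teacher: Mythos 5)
The paper contains no proof of this lemma: it is quoted directly from the earlier work \cite{zx1}, so the only meaningful comparison is with the standard argument used there, and your proposal follows exactly that route (Volterra fixed-point equation and Neumann series with the $\tfrac{1}{j!}\|\Delta Q_\pm\|_{L^1}^j$ bound for analyticity, uniqueness of the Volterra solution for the symmetries, expansion of the integral equation for the asymptotics, with \eqref{Vo2} handling $z=\pm1$ where $\det E_\pm$ vanishes). One factual slip is worth correcting: $\lambda(z)=\tfrac12(z-z^{-1})$ is \emph{anti}-invariant under $z\mapsto z^{-1}$, i.e. $\lambda(z^{-1})=-\lambda(z)$; only $k$ is invariant. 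The third relation in \eqref{symmu} closes precisely because this sign flip is cancelled by the $\sigma_2$-conjugation, which sends $\sigma_3\mapsto-\sigma_3$ in the exponential $e^{i\lambda(x-y)\hat\sigma_3}$; with $\lambda$ literally invariant, as you state, the intertwining computation would fail by a sign, though your own consistency checks (that $E_\pm(z^{-1})=\mp z\,E_\pm(z)\sigma_2$, which is correct, and that the kernel is intertwined) would force the repair. A small simplification you could also exploit: once the large-$z$ asymptotics and the third symmetry are established, the small-$z$ behavior follows immediately from $\mu_{\pm,1}(z)=\mp iz^{-1}\mu_{\pm,2}(z^{-1})$ rather than from a separate analysis of the pole of $E_\pm$ in the integral equation.
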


\subsection{Scattering data }
The Jost functions   $\Phi_{\pm}(z)$  satisfy the linear relation
 \begin{equation}
    \Phi_{+}(z)=\Phi_{-}(z)S(z),
    \end{equation}
where $S(z)$ is the scattering matrix given by
    $$S(z)=\begin{pmatrix} a(z) & \overline{b(z)} \\ b(z) & \overline{a(z)} \end{pmatrix},\quad z\in \mathbb{R}\setminus \{0, \pm 1\},$$
   and $a(z)$ and $b(z)$ are the scattering coefficients, by which   we define the reflection coefficient
	    \begin{equation}
	    	r(z) := \frac{b(z)}{a(z)}. \label{reflec}
	    \end{equation}
The scattering coefficients  and the reflection coefficients have  the following  properties \cite{zx1}.

\begin{lemma} \label{lsym2}
  Let $q_0(x)-\tanh{(x)}\in L^{1,2}(\mathbb{R})$ and $q_0'(x) \in W^{1,1}(\mathbb{R})$. Then
\begin{itemize}

        \item The scattering coefficients can be expressed in terms of the Jost functions as
        \begin{equation}\label{azbz}
            a(z)=\frac{{\rm det}(\Phi_{+,1}, \Phi_{-,2})}{1-z^{-2}}, \quad b(z)=\frac{{\rm det}(\Phi_{-,1}, \Phi_{+,1})}{1-z^{-2}},
        \end{equation}
        where $\Phi_{\pm,j}(z),\, j=1,2,$ are the $j^{\text{th}}$ column of $\Phi_{\pm}(z)$.
    \item   $a(z)$ can be analytically extended to $ \mathbb{C}^+$.  Moreover, the
    zeros of $a(z)$ in  $\mathbb{C}^+ $ are simple, finite, and located on the unit circle. $b(z)$ and $r(z)$ are defined only for $z\in \mathbb{R}\setminus\{0, \pm 1\}$.

        \item For each $z\in \mathbb{R}\setminus\{0, \pm 1\}$, we have
        \begin{equation}\label{relazbz}
            {\rm det}S(z)=|a(z)|^2-|b(z)|^2=1, \quad |r(z)|^2=1-|a(z)|^{-2}<1.
        \end{equation}
        \item  $a(z)$, $b(z)$, and $r(z)$ satisfy the symmetries
        \begin{align}
            &a(z)=\overline{a(-\bar{z})}=-\overline{a(\bar{z}^{-1})},\\
            &b(z)=\overline{b(-\bar{z})}=\overline{b(\bar{z}^{-1})},\\
            &r(z)=\overline{r(-\bar{z})}=-\overline{r(\bar{z}^{-1})}.
        \end{align}
        \item The scattering data has the following asymptotics
        \begin{align}
        &\lim_{z\rightarrow\infty}(a(z)-1)z=i\int_{\mathbb{R}}(q^2-1)\mathrm{d}x, \ z\in\overline{\mathbb{C}^+}, \label{asymptoticsfora1}\\
        &\lim_{z\rightarrow0}(a(z)+1)z^{-1}=i\int_{\mathbb{R}}(q^2-1)\mathrm{d}x, \ z\in\overline{\mathbb{C}^+},\label{asymptoticsfora2}\\
        &|b(z)|=\mathcal{O}(|z|^{-2}),\hspace{0.5cm} \text{as } |z|\rightarrow\infty,\ \ z\in\mathbb{R}, \label{asymptoticsforb1}\\
        &|b(z)|=\mathcal{O}(|z|^{2}),\hspace{0.5cm} \text{as } |z|\rightarrow0, \ z\in\mathbb{R},\label{asymptoticsforb2}\\
        &r(z)\sim z^{-2},\  |z|\rightarrow\infty,  \ \ r(z)\sim z^2, \ |z|\rightarrow 0.
        \end{align}
\end{itemize}
\end{lemma}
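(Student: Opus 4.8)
The plan is to derive every assertion from the two Wronskian representations \eqref{azbz} together with the analytic, symmetric and asymptotic properties of the modified Jost functions already collected in Lemma \ref{lsym1}. First I would establish \eqref{azbz} itself: writing the scattering relation $\Phi_{+}=\Phi_{-}S$ columnwise gives $\Phi_{+,1}=a\Phi_{-,1}+b\Phi_{-,2}$, so pairing against $\Phi_{-,2}$ and against $\Phi_{-,1}$ in a determinant annihilates one term in each case, and dividing by $\det\Phi_{-}=1-z^{-2}$ (Lemma \ref{lsym1}) yields the formulas for $a$ and $b$. Since the two exponential factors cancel in the Wronskian, $\det(\Phi_{+,1},\Phi_{-,2})=\det(\mu_{+,1},\mu_{-,2})$, whence the analytic continuation of $a$ to $\mathbb{C}^{+}$ and its continuity up to $\mathbb{R}$ are inherited directly from the corresponding properties of $\mu_{+,1}$ and $\mu_{-,2}$; the functions $b$ and $r=b/a$ instead mix $\mu_{+,1}$ and $\mu_{-,1}$, which are analytic on opposite half-planes, and are therefore only defined for real $z$.

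The algebraic identities come next. The determinant relation $\det S=1$, that is $|a|^{2}-|b|^{2}=1$, is immediate from $\det\Phi_{+}=\det\Phi_{-}$; dividing by $|a|^{2}$ gives $|r|^{2}=1-|a|^{-2}$, which is strictly below $1$ on $\mathbb{R}\setminus\{0,\pm1\}$ precisely because $a$ has no zero there. The three symmetries of $a$, $b$ and $r$ I would obtain by feeding the symmetries \eqref{symmu} of $\mu_{\pm}$ into \eqref{azbz} and tracking how the prefactor $1-z^{-2}$ transforms under $z\mapsto-\bar z$ and $z\mapsto z^{-1}$; equivalently, each symmetry of $\Phi_{\pm}$ induces the matching symmetry of $S=\Phi_{-}^{-1}\Phi_{+}$ and hence of its entries.

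For the asymptotic statements I would insert the four expansions of $\mu_{+,1}$ and $\mu_{-,2}$ from Lemma \ref{lsym1} into \eqref{azbz} and expand the ratio, using that the Wronskian is independent of $x$ so that it may be evaluated in the convenient limits $x\to\pm\infty$; this produces the limits of $(a-1)z$ at $z=\infty$ and of $(a+1)z^{-1}$ at $z=0$ in terms of $i\int_{\mathbb{R}}(q^{2}-1)\,dx$. The decay rates $|b|=\mathcal{O}(|z|^{-2})$ as $|z|\to\infty$ and $\mathcal{O}(|z|^{2})$ as $|z|\to0$, together with $r\sim z^{-2}$ and $r\sim z^{2}$, follow from the oscillatory factor $e^{2i\lambda x}$ in $\det(\Phi_{-,1},\Phi_{+,1})$ after integration by parts in the Volterra equation \eqref{Vo1}, where the two orders of decay are supplied by the hypotheses $q_{0}\in\tanh(x)+L^{1,2}(\mathbb{R})$ and $q_{0}'\in W^{1,1}(\mathbb{R})$.

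The main obstacle is the spectral content of the second bullet: the removability of the apparent singularities of $a$ at $z=\pm1$ and the fact that the zeros of $a$ in $\mathbb{C}^{+}$ are simple, finite in number, and confined to the unit circle. At $z=\pm1$ one has $\lambda=0$, and I would argue from the degenerate Volterra representation \eqref{Vo2} that $\det(\mu_{+,1},\mu_{-,2})$ vanishes at exactly the rate needed to cancel the zero of $1-z^{-2}$, so that $a$ extends continuously with a finite value. The localization of the zeros onto $|z|=1$ reflects that the discrete spectrum of the self-adjoint scattering operator lies in the gap $k\in(-1,1)$, that is $\lambda\in i\mathbb{R}$, which the uniformization $z=k+\lambda$ maps precisely to the unit circle, and is consistent with the symmetry $a(z)=-\overline{a(\bar z^{-1})}$; finiteness follows from $a\to1$ at infinity together with the continuity up to $\mathbb{R}$, and simplicity from a trace-type identity giving $a'(z_{j})\neq0$ at each zero.
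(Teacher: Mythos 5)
The paper gives no proof of this lemma at all: it is quoted from \cite{zx1}, so your proposal can only be compared with the standard argument that citation points to. Most of your outline does follow that standard route correctly: the Wronskian formulas \eqref{azbz} from the columnwise scattering relation $\Phi_{+,1}=a\Phi_{-,1}+b\Phi_{-,2}$; analyticity of $a$ in $\mathbb{C}^+$ because the exponentials cancel in $\det(\Phi_{+,1},\Phi_{-,2})=\det(\mu_{+,1},\mu_{-,2})$; $\det S=1$ giving $|a|^2-|b|^2=1$ and hence $|a|\ge 1$ and $|r|<1$ on $\mathbb{R}\setminus\{0,\pm1\}$; the symmetries pushed through \eqref{symmu}; the limits \eqref{asymptoticsfora1}--\eqref{asymptoticsfora2} from the expansions of Lemma \ref{lsym1} and the $x$-independence of the Wronskian; the decay of $b$ by integration by parts (where the hypotheses on $q_0$, $q_0'$ enter); localization of the zeros on $|z|=1$ via self-adjointness of the scattering operator; and simplicity via the derivative identity that appears in the paper as \eqref{conect}, i.e.\ $a'(\eta_n)\neq 0$. (Minor quibble: $|r|^2=1-|a|^{-2}<1$ holds because $|a|$ is \emph{finite}, not because $a$ is nonzero; nonvanishing of $a$, from $|a|\ge1$, is what makes $r$ well defined.)

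The genuine error is your treatment of $z=\pm 1$. You claim that the degenerate Volterra representation \eqref{Vo2} forces $\det(\mu_{+,1},\mu_{-,2})$ to vanish at $\pm 1$ ``at exactly the rate needed'' to cancel the zero of $1-z^{-2}$, so that $a$ extends continuously there. This is false precisely in the generic case, which is the case of principal interest in this paper: as the text immediately following the lemma records, generically $a(z)=\pm s_{\pm}/(z\mp 1)+\mathcal{O}(1)$ with $s_{\pm}=\frac{1}{2}{\rm det}\left(\Phi_{+,1}(\pm 1),\Phi_{-,2}(\pm 1)\right)\neq 0$, i.e.\ the Wronskian does \emph{not} vanish at $\pm 1$ and $a$ has simple poles there; only the ratio $r=b/a$ stays bounded, with $|r(\pm1)|=1$ as in \eqref{rzlim}. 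What \eqref{Vo2} actually yields is continuity of the Wronskian at $\pm1$, hence at worst simple poles of $a$, with boundedness only in the non-generic case $s_\pm=0$. This error propagates into your finiteness argument: to rule out accumulation of zeros of $a$ at the boundary points $\pm 1$ you cannot invoke continuity of $a$ there; instead one uses $|a|^2=1+|b|^2\ge 1$ on $\mathbb{R}\setminus\{0,\pm1\}$ together with either the blow-up of $|a|$ near $\pm1$ (generic case) or $|a(\pm1)|\ge 1$ by continuity (non-generic case). Read charitably, the bullet ``$a$ has no singularity on the real axis'' should be understood as absence of \emph{zeros} (spectral singularities) of $a$ on $\mathbb{R}$ --- which your $|a|\ge 1$ observation does establish --- not as boundedness at the branch points, which generically fails; your proposal instead manufactures a cancellation that contradicts the paper's own generic-case formulas.
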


In the generic case, although $a(z)$ and $b(z)$ have singularities at points $\pm 1$, the reflection coefficient $r(z)$ remains bounded
at $z=\pm 1$ with $|r(\pm 1)|=1$. Indeed, as $z\rightarrow \pm 1$,
\begin{align}
    &a(z)=\pm\frac{ s_{\pm}}{z\mp 1}+\mathcal{O}(1),\ \ b(z)=-\frac{i s_{\pm}}{z\mp 1}+\mathcal{O}(1), \nonumber
\end{align}
where $s_{\pm}=\frac{1}{2}{\rm det}\left(\Phi_{+,1}(\pm 1), \Phi_{-,2}(\pm 1)\right)$. Then,
\begin{equation}\label{rzlim}
    \lim_{z\rightarrow\pm 1}r(z)=\mp i.
\end{equation}
While in the non-generic case, $a(z)$ and $b(z)$ are continuous at $z=\pm 1$ with $|r(\pm 1)|<1$.

It can be shown that the following lemma holds \cite{zx1,zx2}.
\begin{lemma}
	Given $q_0(x)-\tanh{(x)}\in L^{1,2}(\mathbb{R})$ and $q'_0(x) \in W^{1,1}(\mathbb{R})$, then $r(z)  \in H^{1}(\mathbb{R})$.
\end{lemma}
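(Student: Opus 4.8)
The plan is to verify the two defining requirements of $H^1(\mathbb{R})$ separately, namely $r\in L^2(\mathbb{R})$ and $r'\in L^2(\mathbb{R})$ with the derivative taken along the real axis. Throughout I would work from the representation $r=b/a$ of \eqref{reflec} together with the determinant formulas \eqref{azbz}, in which the common factor $1-z^{-2}$ cancels, so that $r(z)=\det(\Phi_{-,1},\Phi_{+,1})/\det(\Phi_{+,1},\Phi_{-,2})$. This reduces every decay and regularity statement about $r$ to a corresponding statement about the Jost functions, whose analyticity, asymptotics, and Volterra integral equations \eqref{Vo1}--\eqref{Vo2} are recorded in Lemma \ref{lsym1}.

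For the $L^2$ bound I would split $\mathbb{R}$ into three pieces. Near infinity and near the origin the asymptotics of Lemma \ref{lsym2}, $r(z)\sim z^{-2}$ as $|z|\to\infty$ and $r(z)\sim z^2$ as $|z|\to0$, give $\int_{|z|>1}|r|^2\,dz<\infty$ and $\int_{|z|<\delta}|r|^2\,dz<\infty$. On the remaining compact set the only possible trouble is at $z=\pm1$, where \eqref{azbz} shows that $a$ and $b$ each have a simple pole; but \eqref{rzlim} gives $\lim_{z\to\pm1}r(z)=\mp i$, so $r$ extends continuously across $\pm1$ and is bounded on the middle region. Combining the three pieces yields $r\in L^2(\mathbb{R})$.

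The substantive part is $r'\in L^2(\mathbb{R})$. First I would establish that $\mu_{\pm}(z)$, hence $a(z)$ and $b(z)$, are continuously differentiable in $z$ on $\mathbb{R}\setminus\{0,\pm1\}$ with controlled derivatives: differentiating \eqref{Vo1} in $z$, the derivative falls on $e^{i\lambda(z)(x-y)\hat{\sigma}_3}$ and produces a factor $\lambda'(z)(x-y)$, and the resulting integral still converges because $\Delta Q_{\pm}=Q-Q_{\pm}\in L^{1,2}(\mathbb{R})$ carries two weights, so the extra factor of $x$ is absorbed with room to spare and $\partial_z\mu_{\pm}$ solves a Volterra equation of the same type. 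This produces $r'=(b'a-ba')/a^2$ pointwise on $\mathbb{R}\setminus\{0,\pm1\}$. The tails are then controlled by differentiating the asymptotics of Lemma \ref{lsym2}, $r'(z)=\mathcal{O}(z^{-3})$ at infinity (where the decay $r\sim z^{-2}$ reflects the smoothness encoded by $q_0'\in W^{1,1}(\mathbb{R})$) and $r'(z)=\mathcal{O}(z)$ near $0$, both square integrable; since $r$ is continuous across $0$ and $r'$ is locally integrable there, the classical derivative on $\mathbb{R}\setminus\{0,\pm1\}$ serves as the weak derivative.

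The main obstacle is the behaviour at $z=\pm1$, the spectral singularities at which $a$ and $b$ blow up and where $(1-|r|^2)^{-1}$ is unbounded. Here I would not estimate $a'$ and $b'$ directly but instead exploit the cancellation of the poles. Writing $1-z^{-2}=(z-1)(z+1)/z^2$ in \eqref{azbz}, one factors $a(z)=\alpha_{\pm}(z)/(z\mp1)$ and $b(z)=\beta_{\pm}(z)/(z\mp1)$ near $z=\pm1$, where $\alpha_{\pm}(z)=z^2\det(\Phi_{+,1},\Phi_{-,2})/(z\pm1)$ and $\beta_{\pm}(z)=z^2\det(\Phi_{-,1},\Phi_{+,1})/(z\pm1)$, so that $r=\beta_{\pm}/\alpha_{\pm}$. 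Since $\alpha_{\pm}(\pm1)=\pm s_{\pm}\neq0$, this ratio is smooth across $\pm1$ provided $\alpha_{\pm}$ and $\beta_{\pm}$ are $C^1$ there, and this is where I expect the hardest work: it requires the Jost functions to be $C^1$ in $z$ up to the branch points $z=\pm1$, at which $\lambda(\pm1)=0$ and the two eigenvalues collide, forcing the degenerate integral equation \eqref{Vo2}. Establishing this $C^1$ regularity, using $q_0'\in W^{1,1}(\mathbb{R})$ to push the expansion of $\mu_{\pm}$ one order past the leading singular term, makes $r$ continuously differentiable in a neighborhood of $\pm1$, hence $r'\in L^2$ locally there. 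Patching this local estimate with the tail and interior estimates completes the proof that $r\in H^1(\mathbb{R})$.
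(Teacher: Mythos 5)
Your overall strategy --- reducing everything to regularity of the Jost functions through the Volterra equations \eqref{Vo1}--\eqref{Vo2}, spending the two weights of $L^{1,2}(\mathbb{R})$ on the $z$-derivative, using the smoothness $q_0'\in W^{1,1}(\mathbb{R})$ for decay of $r$, and handling $z=\pm 1$ by cancelling the simple poles of $a$ and $b$ --- is sound, and it is essentially the route of the references \cite{zx1,zx2} which the paper cites for this lemma (the paper itself offers no proof, only the citation). Your $L^2$ step is fine (in fact $|r|<1$ on $\mathbb{R}\setminus\{0,\pm1\}$ by \eqref{relazbz}, so only decay at infinity matters), and your weight count for $C^1$ regularity of $\mu_\pm$ up to $z=\pm1$ is correct: one weight is consumed by the $(x-y)$-growth of the degenerate kernel visible in \eqref{Vo2}, the second by the $z$-derivative.

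The genuine gap is that your treatment of $z=\pm1$ covers only the generic case: you invoke $\alpha_\pm(\pm1)=\pm s_\pm\neq 0$, but the lemma is stated for arbitrary data in the class, and in the non-generic case $s_\pm=0$, so both $\alpha_\pm$ and $\beta_\pm$ vanish at $\pm1$ and your ratio is $0/0$ there. This case is not a routine extension of your argument. Write $r=B/D$ with $B=\det(\Phi_{-,1},\Phi_{+,1})$, $D=\det(\Phi_{+,1},\Phi_{-,2})$; in the non-generic case $B(\pm1)=D(\pm1)=0$ while $|D(z)|\geq|1-z^{-2}|$ on $\mathbb{R}$ (from $|a|^2-|b|^2=1$), so that $|r'|=|B'-rD'|/|D|$. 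With only the $C^1$ control of $B,D$ near $\pm1$ that $L^{1,2}$ buys, this gives $|r'|=\mathcal{O}(|z\mp1|^{-1})$, which is \emph{not} square integrable. There is a hidden cancellation --- since $B'(\pm1)=2b(\pm1)$ and $D'(\pm1)=2a(\pm1)$ one gets $B'(\pm1)-r(\pm1)D'(\pm1)=0$ --- but exploiting it only upgrades the bound to $o(|z\mp1|^{-1})$ with no rate, still not enough for $L^2$. To close this case one needs a quantitative modulus of continuity (H\"older exponent $>1/2$) for the $z$-derivatives of the Jost determinants at $\pm1$, which costs strictly more than two weights, or a genuinely different mechanism; this is precisely the delicate analysis near the spectral singularities carried out in \cite{CJ,zx1,zx2}, and your proposal does not contain it. A secondary, smaller issue: at infinity you obtain $r'=\mathcal{O}(z^{-3})$ by ``differentiating the asymptotics'' of Lemma \ref{lsym2}; asymptotic relations cannot be differentiated, and the decay of $r'$ must instead be extracted from the integral representations of $a'$ and $b'$ by integration by parts using $q_0''\in L^1(\mathbb{R})$ --- your hypotheses suffice and any decay better than $|z|^{-1/2-\epsilon}$ is enough, but the step has to be performed, not inferred.
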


We now turn our attention to the discrete spectrum. Let  $\nu_1, \nu_2,\dots, \nu_N$ denote the $N$ zeros of $a(z)$ lying on $\mathbb{C}^{+}\cap \{z:|z|=1, \im z>0, \re z>0\}$.
The symmetries of $a(z)$ imply that the discrete spectrum is  collected as
\begin{equation}
    \mathcal{Z}=\{\nu_n, \bar{\nu}_n -\bar{\nu}_n, -\nu_n\}_{n=1}^{N},
\end{equation}
where $\nu_n$ satisfies that $|\nu_n|=1$, $\re \nu_n>0$, $\im \nu_n>0$.
Moreover, it is convenient to define that
\begin{equation}
    \eta_n=\left\{
        \begin{aligned}
        &\nu_n,\quad &n=1, \dots, N, \\
        &-\bar{\nu}_{n-N},\quad &n=N+1 ,\dots, 2N,
        \end{aligned}
        \right.
\end{equation}
from which we express the set $\mathcal{Z}$ in terms of
\begin{equation}
\mathcal{Z}=\{\eta_n, \bar{\eta}_n\}_{n=1}^{2N}.
\end{equation}
The distribution of $\mathcal{Z}$ on the $z$-plane is shown in Figure \ref{spectrumsdis}.
\begin{figure}[H]
	\centering
	\begin{tikzpicture}[node distance=2cm]
	\draw[-latex](-3.5,0)--(3.5,0)node[right]{$\re z$};
	\draw[-latex](0,-3)--(0,3)node[above]{$\im z$};
    \node   at (1.1,2.5) {\footnotesize $\mathbb{C}^+$};
    \node   at (1.1,-2.5) {\footnotesize $\mathbb{C}^-$};
	\draw[dotted] (2,0) arc (0:360:2);
	\coordinate (A) at (2,2.985);
    \coordinate (B) at (2,-2.985);
    \coordinate (C) at (-0.616996232,0.9120505887);
    \coordinate (D) at (-0.616996232,-0.9120505887);
    \coordinate (E) at (0.616996232,0.9120505887);
    \coordinate (F) at (0.616996232,-0.9120505887);
    \coordinate (G) at (-2,2.985);
    \coordinate (H) at (-2,-2.985);
	\coordinate (J) at (1.7570508075688774,0.956);
	\coordinate (K) at (1.7570508075688774,-0.956);
	\coordinate (L) at (-1.7570508075688774,0.956);
	\coordinate (M) at (-1.7570508075688774,-0.956);
    \coordinate (N) at (0, 2);
    \coordinate (O) at (0, -2);
	\fill[red] (J) circle (1pt) node[right] {$\nu_n$};
	\fill[red] (K) circle (1pt) node[right] {$\bar{\nu}_n$};
	\fill[red] (L) circle (1pt) node[left] {$-\bar{\nu}_n$};
	\fill[red] (M) circle (1pt) node[left] {$-\nu_n$};
 \fill[red] (N) circle (1pt);
\node  [red, above]  at (0.15,2) {$i$};
\fill[red] (O) circle (1pt);
\node  [red, below]  at (-0.25,-2) {$-i$};
	\end{tikzpicture}
	\caption{\footnotesize The distribution of the discrete spectrums on the unit circle $\{z:|z|=1\}$  in  the $z$-plane.
     }
	\label{spectrumsdis}
\end{figure}

 Moreover,  we have the trace formula of $a(z)$:
    \begin{equation}\label{traceformula}
        a(z)=\prod_{n=1}^{2N}\left(\frac{z-\eta_n}{z-\bar{\eta}_n}\right){\rm exp}\left(-\frac{1}{2\pi i}\int_{\mathbb{R}}\frac{{\rm log}(1-|r(\zeta)|^2)}{\zeta-z}\mathrm{d}\zeta\right), \quad z\in\mathbb{C}^+.
    \end{equation}
At any zero $z=\eta_n \in \overline{\mathbb{C}^+}$ of $a(z)$, it follows from \eqref{azbz} that the pair $\Phi_{+,1}(\eta_n)$ and $\Phi_{-,2}(\eta_n)$ are linearly related. Moreover, the symmetry \eqref{symmu} implies that  $\Phi_{+,2}(\eta_n)$ and $\Phi_{-,1}(\eta_n)$ are also linearly related. Thus, there exsits a constant $\gamma_n \in \mathbb{C}$  such that
\begin{equation}
\Phi_{+,1}(\eta_n) = \gamma_n \Phi_{-,2}(\eta_n),\quad \Phi_{+,2}(\bar{\eta}_n)= \bar{\gamma}_n \Phi_{-,1}(\bar{\eta}_n).
\end{equation}
These constants $\gamma_n$ are referred to as the connection coefficients associated with the discrete spectral values $\eta_n$.

\subsection{Time evolution of the scattering data }

In order to solve the Cauchy  problem (\ref{dmkdv})-(\ref{bdries})  for  the defocusing mKdV equation,
 we need to determine the time dependence of
the scattering data.  For $q(x,t)$, the solution to  (\ref{dmkdv}), and the time-dependent
Jost function $\Phi(z;x,t)$,  the compatibility condition for Lax pair (\ref{lax pair}) can be written in the form
\begin{align}
\frac{d}{dt} ( \partial_x-X)=[T,  \partial_x-X],\label{oepe}
\end{align}
which is applied to  the first equation of the Lax pair (\ref{lax pair}),  we obtain
$$ ( \partial_x-X )( \Phi_t(z;x,t)- T\Phi(z;x,t))=0,$$
and hence
\begin{align}
 (\partial_t  - T ) \Phi_\pm (z;x,t)=\Phi_\pm (z;x,t) C_\pm(z,t),  \label{jjd}
 \end{align}
where $C_\pm(z,t)$ is a matrix function  to be determined.  By using the transformation  (\ref{muPhirelation}),   we write  (\ref{jjd}) in the form
\begin{align}
 (\partial_t - T)\mu_{\pm}(z;x,t)  = \mu_{\pm}(z;x,t) e^{i\lambda(z)x\widehat\sigma_3} C_\pm(z,t). \label{jheh}
 \end{align}
Then, using the asymptotics
\begin{align}
  &  \mu_{\pm}(z;x,t) \to E_\pm, \ \ \partial_t \mu_{\pm}(z;x,t) \to 0,  \ \  x\to \pm \infty, \nonumber\\
    & T\to  (4k^2+2) (ik\sigma_3\pm \sigma_1), \ \  x\to \pm \infty, \nonumber
\end{align}
it follows from (\ref{jheh}) that
$$C_\pm(z,t)= - (4k^2+2) i \lambda \sigma_3. $$

Applying  $\partial_t-T$ to the scattering relation $\Phi_+ (z;x,t)=\Phi_- (z;x,t) S(z, t)  $, we get
$$ \partial_t S=2i\lambda(z)(4k^2(z)+2) [S,\sigma_3], $$
which yields
\begin{align}
&a(z,t)=a(z,0), \ b(z,t)=b(z,0)e^{-2i\lambda(z)(4k^2(z)+2)t}, \nonumber\\
&r(z,t)=r(z,0)e^{-2i\lambda(z)(4k^2(z)+2)t},\
\gamma_n(t)=\gamma_n(0) e^{-2i\lambda(z)(4k^2(z)+2)t}. \nonumber
\end{align}

\section{Inverse Scattering and  the RH Problem  }\label{rh}

\subsection{A basic RH problem}
For $z\in \mathbb{C} \setminus \mathbb{R}$ and the Jost functions $\mu_{\pm,j}(z;x,t),\, j=1,2$,
we define a sectionally meromorphic matrix as follows:
\begin{equation}
    M(z)=M(z;x,t):=\left\{
        \begin{aligned}
        \left(\frac{\mu_{+,1}(z;x,t)}{a(z)}, \mu_{-,2}(z;x,t)\right), \quad z\in \mathbb{C}^{+}, \\
        \left(\mu_{-,1}(z;x,t), \frac{\mu_{+,2}(z;x,t)}{\overline{a(\bar z)}}\right), \quad z\in \mathbb{C}^{-},
        \end{aligned}
        \right.
\end{equation}
which solves the following RH problem.

\begin{prob} \label{RHP0}
  Find  a matrix-valued function $M(z)=M(z;x,t)$ which satisfies
\begin{itemize}
\item  Analyticity: $M(z)$  is analytic in $\mathbb{C}\setminus (\mathbb{R}\cup\mathcal{Z})$ and has simple poles at the points in $\mathcal{Z}$.

\item  Jump condition: $M(z)$ satisfies the jump condition
	$$M_+(z)=M_-(z)V(z), \; z \in \mathbb{R},$$
      	where
      	\begin{equation}\label{V0}
      		V(z)=\left(\begin{array}{cc}
      			1-|r(z)|^2 & -\overline{r(z)}e^{2it \theta(z)}\\
      		r(z)	e^{-2it \theta(z)} & 1
      		\end{array}\right),
      	\end{equation}
      with
      $$ \theta(z) =\lambda(z)\left(\frac{x}{t}+4k^2(z)+2\right).$$


\item  Asymptotic behaviors:
\begin{align*}
&M(z)=I+\mathcal{O}(z^{-1}), \quad  z\rightarrow \infty,\\
&zM(z)=\sigma_2+\mathcal{O}(z), \quad  z\rightarrow 0.
\end{align*}

\item  Residue conditions:
\begin{align}
&\underset{z=\eta_n}{\rm Res}M(z)=\lim_{z\rightarrow\eta_{n}}M(z)\left(\begin{array}{cc} 0 & 0 \\ c_n e^{-2it\theta(\eta_{n})} & 0 \end{array}\right),\label{fresm1}\\
&\underset{z=\bar{\eta}_n}{\rm Res}M(z)=\lim_{z\rightarrow\bar{\eta}_{n}}M(z)\left(\begin{array}{cc} 0 & \bar{c}_ne^{2it\theta(\bar{\eta}_{n}}) \\ 0 & 0 \end{array}\right),\label{fresm2}
\end{align}
where
\begin{align}
&c_n = \frac{\gamma_n(0)}{a'(\eta_n)} = \frac{2\eta_n}{\int_{\mathbb{R}}|\Phi_{-,2}(\eta_n;x,0)|^2 \mathrm{d}x}=\eta_n|c_n|. \label{conect}
 \end{align}
\end{itemize}
\end{prob}
The potential $q(x,t)$ is given by the reconstruction formula
\begin{equation}
q(x,t)=\lim_{z\rightarrow\infty}i(zM (z ))_{21},\label{sol}
\end{equation}
where the subscript $21$ denotes the element in the  $2$-th row and $1$-th column of the matrix $M(z)$.
 Due to the symmetries contained in Lemma \ref{lsym1} and \ref{lsym2}, and the uniqueness of the solution of RH problem \ref{RHP0}, it follows that
 \begin{equation}\label{msym}
 M(z)=\sigma_1 M^*(z)\sigma_1=\overline{M(-\bar{z})}=\mp z^{-1}M(z^{-1})\sigma_2.
 \end{equation}

\subsection{Saddle points and the signature table}
   Two well-known factorizations of the jump matrix $V(z)$  in \eqref{V0} are as follows:
       \begin{equation}\label{v}
         V(z)=        \begin{cases}
           \left(\begin{array}{cc} 1&-\overline{r(z)}e^{2it\theta(z) }\\ 0&1\end{array}  \right)
           \left(\begin{array}{cc} 1&0\\   r(z)e^{-2it\theta(z) }&1\end{array}  \right),\\
           \left(\begin{array}{cc} 1 & 0 \\ \frac{r(z)}{1-|r(z)|^2}e^{-2it\theta(z) } & 1\end{array}  \right)
           \left(1-|r(z)|^2 \right)^{\sigma_3} \left(\begin{array}{cc} 1 &
             \frac{-\overline{r(z)}}{1-|r(z)|^2}e^{2it\theta(z) } \\ 0 & 1\end{array}  \right).
         \end{cases}
       \end{equation}
  The long-time asymptotics of RH problem \ref{RHP0} is affected by  the exponential
  function  $e^{\pm 2it\theta(z)}$ in the jump matrix $V(z)$ and the residue condition.
  Let $\xi:=\frac{x}{t}$ and $z=u+iv$, then direct calculation shows that
          \begin{equation}
         \re\left(2i\theta(z)\right) =-v \left((3u^2-v^2)\left(1+(u^2+v^2)^{-3}\right)+(\xi+3)\left(1+(u^2+v^2)^{-1}\right)\right).\label{theta01}
          \end{equation}
 The signature table   of $\re( 2 i\theta(z) )$ is   presented   in Figure \ref{proptheta}.  The sign of $\re (2i\theta(z))$  plays a crucial role in determining the growth/decay regions of    the  exponential function  $e^{\pm 2it\theta(z)}$. This observation motivates us to
 open the jump contour $\mathbb{R}$ using two different  factorizations  of  the jump matrix $V(z)$.

 \begin{figure}[htbp]
        \centering
              \subfigure[$ \xi<-6$]{\label{figurea}
          \begin{minipage}[t]{0.32\linewidth}
            \centering
            \includegraphics[width=1.5in]{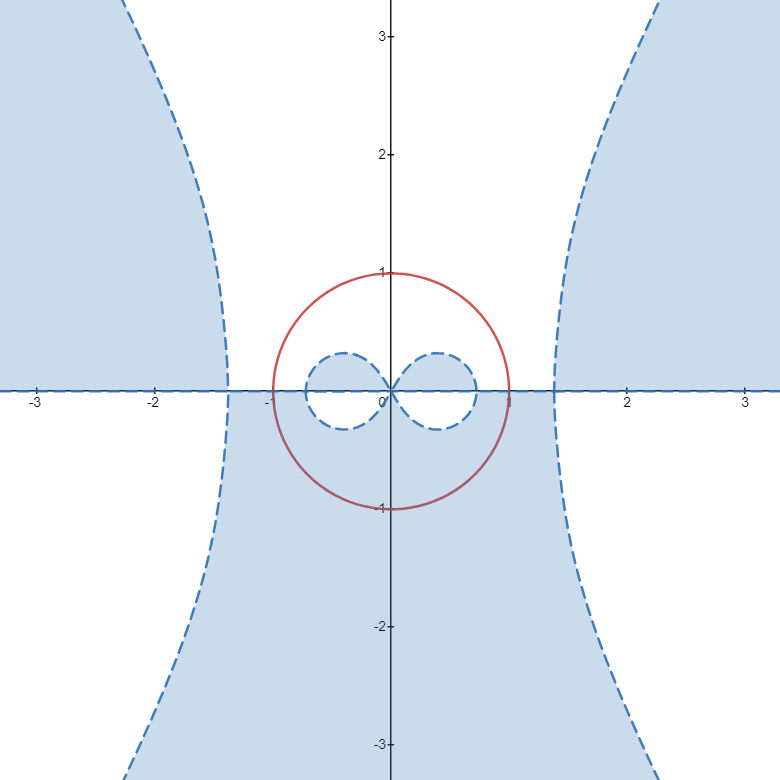}
          \end{minipage}
        }%
        \subfigure[$\xi=-6$]{\label{figureb}
          \begin{minipage}[t]{0.32\linewidth}
            \centering
            \includegraphics[width=1.5in]{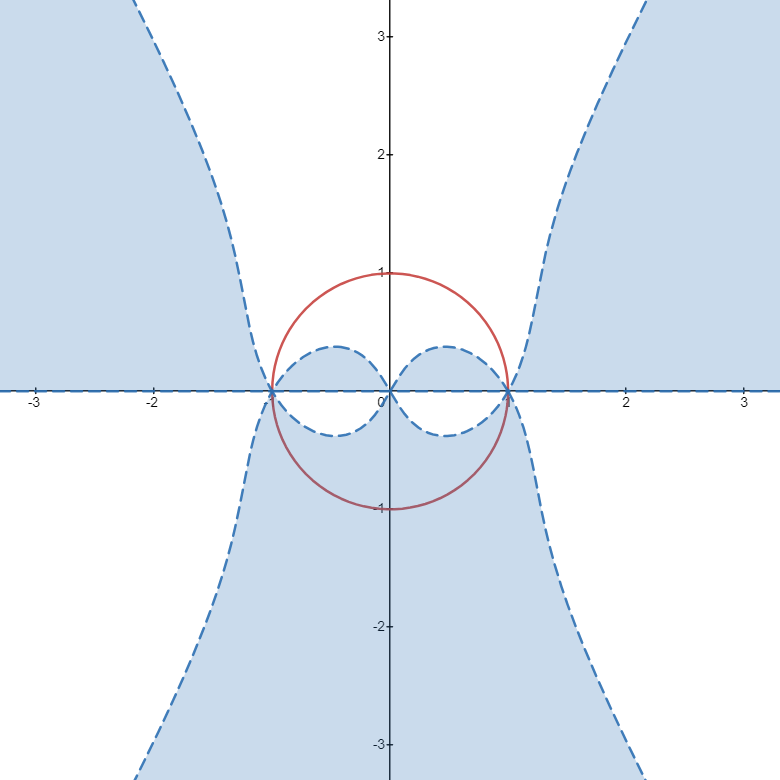}
          \end{minipage}%
        }%
         \subfigure[$-6<\xi<-2$]{\label{figurec}
        \begin{minipage}[t]{0.32\linewidth}
          \centering
          \includegraphics[width=1.5in]{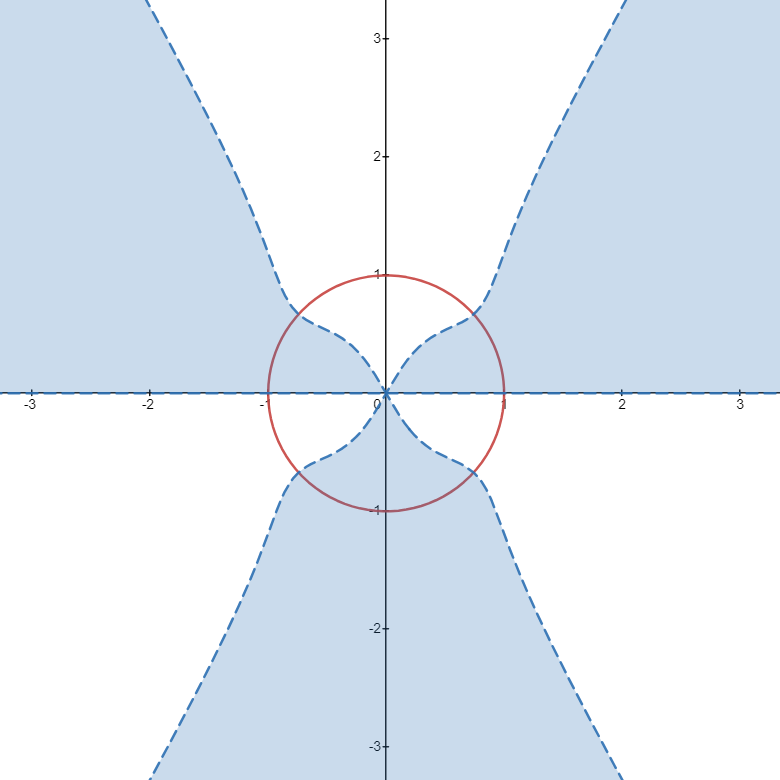}
        \end{minipage}
      }%

       \subfigure[$\xi=-2$]{\label{figured}
        \begin{minipage}[t]{0.32\linewidth}
          \centering
          \includegraphics[width=1.5in]{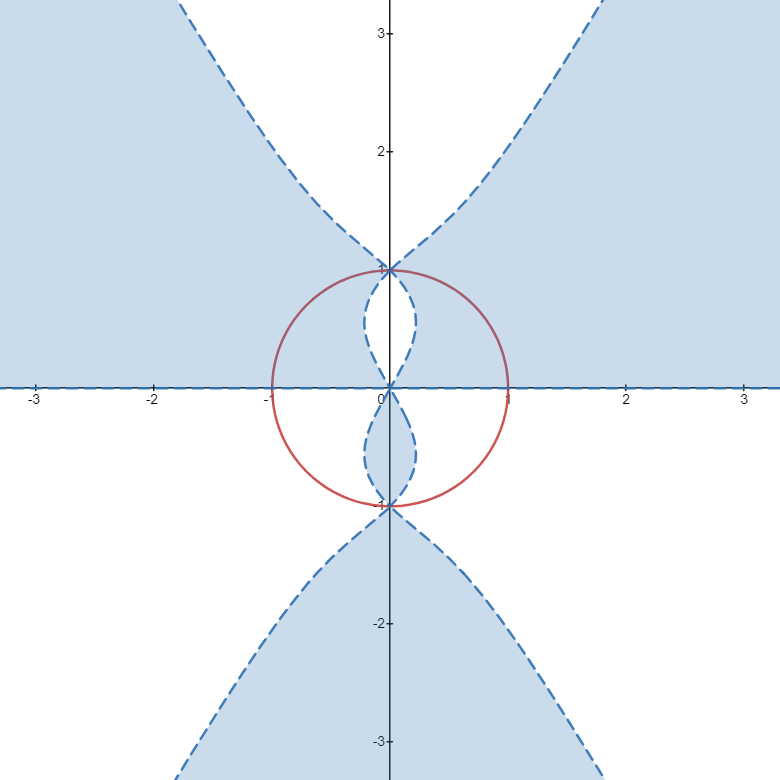}
        \end{minipage}
      }%
      \subfigure[$-2<\xi<6$]{\label{figuree}
    \begin{minipage}[t]{0.32\linewidth}
      \centering
      \includegraphics[width=1.5in]{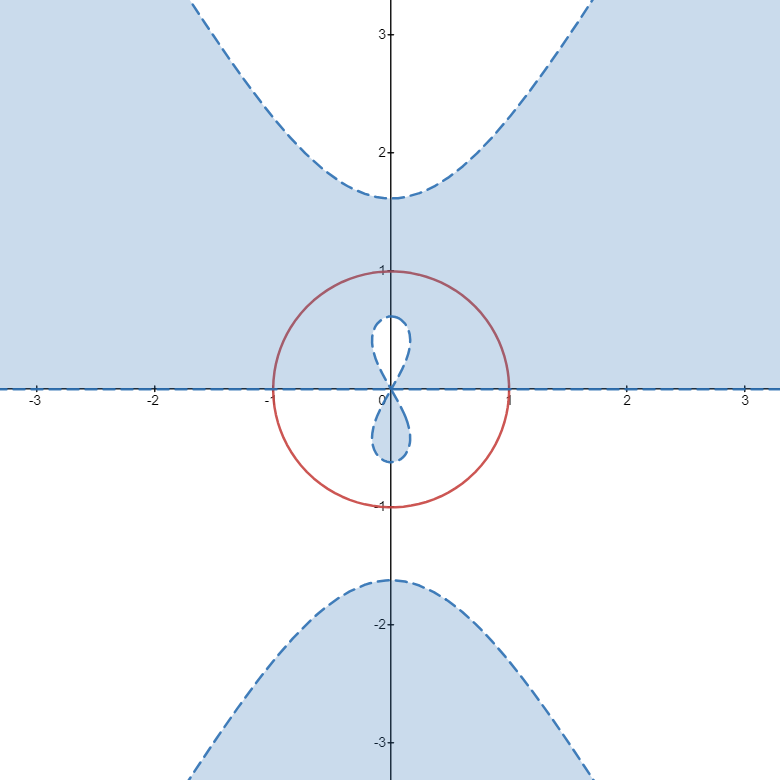}
    \end{minipage}
  }%
  \subfigure[$\xi\ge6$]{\label{figuref}
    \begin{minipage}[t]{0.32\linewidth}
      \centering
      \includegraphics[width=1.5in]{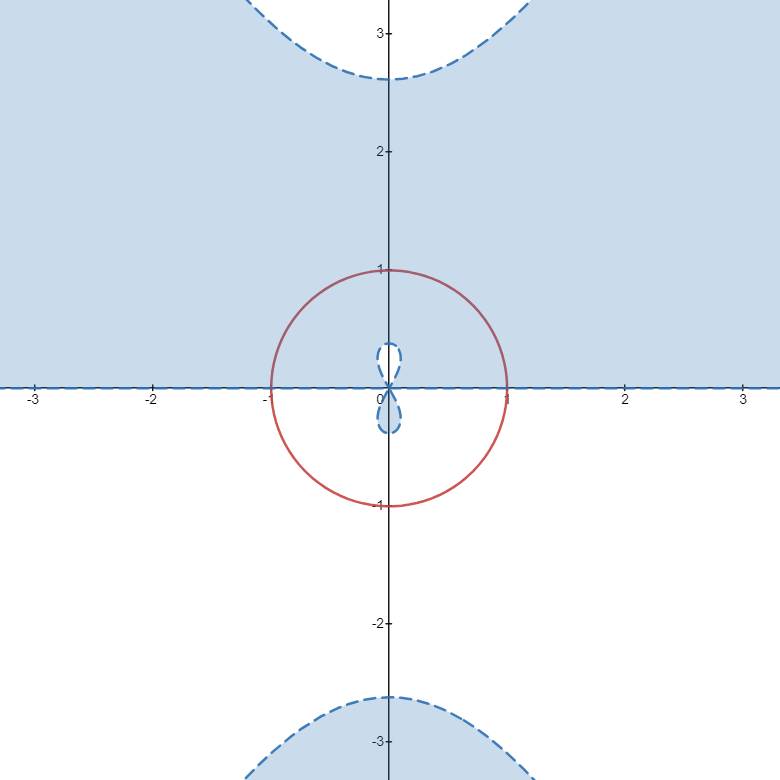}
    \end{minipage}
  }%
        \centering
        \caption{\footnotesize The distribution of saddle  points and the signature table of $\re (2i\theta(z))$, where  $\re (2i\theta(z))<0$   in blue regions and  $\re (2i\theta(z))>0$   in white regions.  Figure (a):  There are  four saddle points on $\mathbb{R}$ for the solitonless  region $\xi<-6$; Figure (c):  There are  four saddle points on the circle $|z|=1$   for the solitonic  region $-6<\xi\le-2$;  Figure (e) and (f): There are  four saddle points on $i\mathbb{R}$ for the solitonless region $\xi>-2$; Figures (b) and (d) are
        two critical cases.
       }
        \label{proptheta}
      \end{figure}

The saddle points (or stationary phase points) satisfy the following equation
  \begin{equation}
      \theta'(z)=\frac{(1+z^2)(3z^4+\xi z^2 +3)}{2z^4}=0,\label{2theta}
  \end{equation}
which  has the solutions
    \begin{equation}
  z^2 = -1 \ \ \text{or} \ \ z^2= \eta_+ \ \ \text{or}  \ \ z^2= \eta_-,\label{123}
  \end{equation}
  where
  $$\eta_\pm :=\frac{-\xi \pm \sqrt{\xi^2-36}}{6}, \ \ |\xi|> 6; \ \ \eta_\pm :=\frac{-\xi \pm i\sqrt{36-\xi^2}}{6}, \ \ |\xi|<6.$$
 Therefore,  we have two fixed saddle points $\pm i$ and
 other four saddle points which vary  with the value of $\xi$, distributed as follows:

  If  $\xi<-6$,   then  $\eta_\pm >0$  and four saddle points appear on the real axis
  \begin{equation}\label{eq219}
  z_1 =  \sqrt{\eta_+  }, \ z_2 =  \sqrt{\eta_-},\ z_3 = -\sqrt{\eta_- }, \ z_4 = - \sqrt{\eta_+}
  \end{equation}
    with $ z_4 <-1<z_3<0<z_2 <1<z_1$ and $z_1 z_2 =  z_3 z_4=1$.  See Figure \ref{figurea}.

 If  $\xi>6$,   then $\eta_\pm <0$ and  four saddle  points appear on the imaginary axis
    \begin{equation}\label{eq220}
  z_1 = i \sqrt{-\eta_-  }, \ z_2 = i \sqrt{-\eta_+  },\ z_3 = -i \sqrt{-\eta_+ }, \ z_4 = -i \sqrt{-\eta_- }
  \end{equation}
  with $ z_4 <-i< z_3<0<z_2 <i< z_1$ and $z_1 z_2 =  z_3 z_4=-1$.  See Figure \ref{figuref}.

   If  $-6<\xi<6$,    then   $   |\eta_\pm |=1$
    and   four saddle  points appear  on the  unit circle $|z|=1$
   $$ z_1= e^{ i\arg \eta_+/2},  \ \ z_2=  - e^{  i\arg \eta_+/2 }, \ \ z_3= e^{ i\arg \eta_-/2},  \ \ z_4=  - e^{  i\arg \eta_-/2 }.$$
See  Figure \ref{figurec} and \ref{figuree}.


 Denote  the critical line $\mathcal{L}:= \{z: \rm Re (2i\theta(z))=0\}$ and the unit circle $\mathcal{C}:= \{z: |z|=1\}$.
  By considering  the cross  points between   $\mathcal{L}$ and   $\mathcal{C}$,  (\ref{theta01})   simplifies to
  \begin{equation}\label{solv}
    2(u^2-v^2) +\xi+4  =0.
  \end{equation}
 From this equation,  we find that    the critical points are
   $z=\pm 1 $ on the real axis   when   $\xi=-6$  and   $z=\pm i $  on    the imaginary axis  when  $\xi=-2$, respectively.   Based on the interaction
  between   $\mathcal{L}$ and   $\mathcal{C}$,  we can classify the  asymptotic regions  as follows.

\begin{itemize}

\item {\bf Solitonless region:}   For the case $\xi<-6$  or   $	\xi>-2$,  there is no interaction between $\mathcal{L}$ and $\mathcal{C}$.
Moreover, $\mathcal{L}$ remains far way from $\mathcal{C}$, as depicted in
Figure \ref{figurea} and \ref{figuree}.  This case corresponds to two distinct solitonless regions, which have been discussed in \cite{zx1}.

\item {\bf Solitonic  region:}     For the case $\xi=-2$,  the  critical  points
  $z=\pm i$  do not appear on the contour $ \mathbb{R}$,   and   in fact are    just special  poles
when $\nu_n=-\bar \nu_n$.    Therefore,   for the case $-6<\xi\leq -2$,
 $\mathcal{L}$ interacts with $\mathcal{C}$, as shown in  Figure \ref{figurec} and \ref{figured}.
This  is a solitonic region, in which the soliton resolution and the stability  of $N$-solitons were  investigated in  \cite{zx2}.

\item {\bf Transition region:}
For the case $\xi = -6$,   the critical points are $z=\pm  1$ which arise from the pairwise coalescence of four saddle  points $z_j,\, j=1,2,3,4$,  as illustrated in  Figure  \ref{figureb}.
Moreover, for the generic case, $|r(\pm 1)|=1$,  it turns out that
the norm $(1-|r(\pm 1)|^2)^{-1}$ blows up as $z\to \pm 1$.  This indicates
the emergence of a new phenomenon  in the  transition region $\xi \approx -6$, which  will be the focus of our investigation in the present paper.

\end{itemize}

\subsection{A regular RH problem}\label{modi1}

We make two successive transformations to the basic  RH problem \ref{RHP0}  to obtain a regular RH problem without poles and  singularities.

{\bf Step 1: Removing poles.}
 Since the  poles $\eta_n$ and $\overline{\eta}_n \in \mathcal{Z}$ are finite,
 distributed on the unit circle, and  far away from the jump contour $\mathbb{R}$ and the critical line $\mathcal{L}$,
they  decay exponentially when we convert their residues  to jumps on small circles around the poles.  This allows us to modify the basic  RH problem \ref{RHP0} by removing these poles firstly.

To   open   the contour $\mathbb{R}$  by  the  second  matrix   decomposition in  (\ref{v}),
we  define  the following scalar function
      \begin{equation}\label{delta}
    \delta(z)={\rm exp}\left(-i\int_{\mathbb{R}}\frac{\nu(\zeta)}{\zeta-z}\, \mathrm{d}\zeta\right),
        \end{equation}
where   $\nu(\zeta) = -\frac{1}{2\pi} \log (1-|r(\zeta)|^2)$.
It then follows that the subsequent proposition holds.

\begin{proposition} [\cite{zx2}] \label{prodel}
            The function $\delta(z)$ defined by (\ref{delta}) possesses the following properties:
            \begin{itemize} \label{prop1}
                \item $ \delta(z)$ is  analytic  in $\mathbb{C} \setminus  \mathbb{R}$.
                \item $\delta(z)=\overline{\delta^{-1}(\bar{z})}=\overline{\delta(-\bar{z})}=\delta(z^{-1})^{-1}$.
                \item $\delta_{-}(z)=\delta_{+}(z)\left(1-|r(z)|^2\right), \ z\in \mathbb{R}$.
                \item  The asymptotic behavior as $z\to \infty$ is
                \begin{equation}
                   \delta(\infty):=\lim_{z\rightarrow\infty}\delta(z)=1.\label{Texpan}
                \end{equation}
                \item $\frac{a(z)}{\delta(z)}$ is holomorphic and its absolute value is bounded in $\mathbb{C}^+$. Moreover, $\frac{a(z)}{\delta(z)}$ extends as a continuous function  and its absolute value equals to $1$ for $z \in \mathbb{R}$.
%

            \end{itemize}
        \end{proposition}

Define
\begin{equation}\label{definerho}
 \rho <  \frac{1}{2} {\rm min}  \{   \operatorname*{min}\limits_{ \eta_n,\, \eta_j\in  \mathcal{Z} }  |\eta_n-\eta_j|,    \operatorname*{min}\limits_{\eta_n\in  \mathcal{Z}} |\im \eta_n|,  \operatorname*{min}\limits_{\eta_n\in \mathcal{Z}, \, z \in \mathcal{L} }| \eta_n-z|  \}.
 \end{equation}
  For $\eta_n, \,\bar\eta_n \in \mathcal{Z}$, we make small circles  $C_n$ and  $\bar{C}_n$ centered at  $\eta_n $ and  $\bar \eta_n$ respectively,  with a radius of $\rho$. The corresponding disks $D_n$ and  $\bar{D}_n$ lie inside the domain with  $\re (2i\theta(z))>0$ for $\im z>0$ and $\re (2i\theta(z))<0$ for $\im z<0$. The circles are oriented counterclockwise in $\mathbb{C}^+$ and clockwise in $\mathbb{C}^-$.
See Figure \ref{Djump65}.

  In order to interpolate the poles trading them for jumps on  $C_n$ and  $\bar{C}_n$,
we construct the interpolation function
            \begin{equation}
           G(z) = \begin{cases}
            \left(\begin{array}{cc} 1&0 \\ -\displaystyle { \frac{c_n e^{ -2it\theta(\eta_n) }}{z-\eta_n}}&1\end{array}  \right), \;   z \in D_n,\\
             \left(\begin{array}{cc} 1& - \displaystyle {\frac{\bar c_n e^{ 2it\theta(\bar \eta_n) }}{z-\bar \eta_n} }\\ 0&1\end{array}  \right), \;   z \in \bar{D}_n,   \\
             I, \;  \;  \; \text{elsewhere},
           \end{cases}
       \end{equation}
  where $\eta_n, \,\bar\eta_n \in \mathcal{Z}$.

 Define
\begin{equation*}
	\Sigma^{(1)}=\mathbb{R} \cup \left( \bigcup_{n=1}^{2N} \left(  C_n \cup \bar{C}_n \right) \right),
\end{equation*}
where the direction on $\mathbb{R}$ goes from left to right, as shown in Figure \ref{Djump65}.
For convenience, let
\begin{equation*}
\Gamma = (-\infty,z_4) \cup (z_3,0) \cup (0,z_2) \cup (z_1,\infty).
\end{equation*}

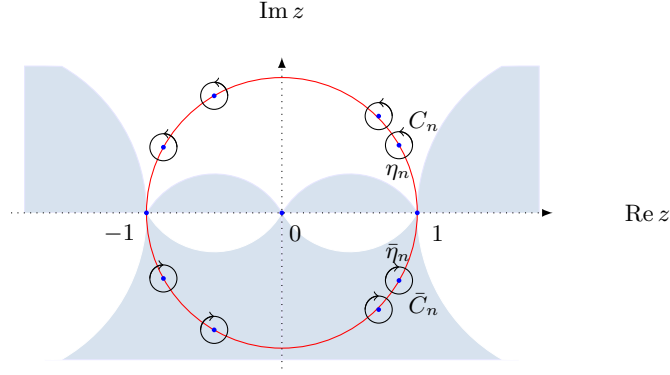
\begin{figure}
	\begin{center}
		\begin{tikzpicture}[scale=0.9]
			
			\draw[blue!10, fill=LightSteelBlue!50] (2,0) arc [start angle=30, end angle=150, radius =1.15];
			\draw[blue!10, fill=LightSteelBlue!50] (0,0) arc [start angle=30, end angle=150, radius =1.15];
			\draw[blue!10, fill=LightSteelBlue!50](-2,0) arc [start angle=0, end angle=60, radius =2.5]--(-3.5,2.17)--(-3.8,2.17)--(-3.8,0);
			\draw[blue!10, fill=LightSteelBlue!50] (2,0) arc [start angle=180, end angle=120, radius =2.5]--(3.5,2.17)--(3.8,2.17)--(3.8,0);
			\draw[blue!10, fill=LightSteelBlue!50] (-3.5,0)--(0,0)--(0,-2.17)--(-3.5,-2.17);
			\draw[blue!10, fill=LightSteelBlue!50] (3.5,0)--(0,0)--(0,-2.17)--(3.5,-2.17);
			\draw[white, fill=white](-2,0) arc [start angle=360, end angle=300, radius =2.5]--(-3.5,-2.17)--(-3.8,-2.17)--(-3.8,0);
			\draw[white, fill=white] (2,0) arc [start angle=180, end angle=240, radius =2.5]--(3.5,-2.17)--(3.8,-2.17)--(3.8,0);
			\draw[white, fill=white] (2,0) arc [start angle=330, end angle=210, radius =1.15];
			\draw[white, fill=white] (0,0) arc [start angle=330, end angle=210, radius =1.15];

			\draw [red,dotted,thick] (0, 0) circle [radius=2];
			\draw[   -latex,dotted ](-5, 0)--(5, 0);
			\draw[   -latex,dotted ](0, -2.3)--(0, 2.6);
			\node    at (5.4, 0)  {\footnotesize $\re z$};
			\node    at (0, 3)  {\footnotesize $\im z$};
			\node    at (1.72,  0.6)  {\footnotesize $\eta_n$};
           \node    at (2.1,  1.35)  {\footnotesize $C_n$};
			\node    at (1.72, -0.6)  {\footnotesize $\bar \eta_n$};
            \node    at (2.1,  -1.35)  {\footnotesize $\bar C_n$};
			\node    at (0.2, -0.3)  {\footnotesize $0$};
			\node    at (2.3, -0.3)  {\footnotesize $1$};
			\node    at (-2.4, -0.3)  {\footnotesize $-1$};
			\coordinate (A) at (1.73,  1);
			\coordinate (B) at (1.73,  -1);
			\coordinate (C) at (1.43,  1.43);
			\coordinate (D) at (1.43,  -1.43);
			\coordinate (E) at (-1.75,  0.97);
			\coordinate (F) at (-1.75,  -0.97);
			\coordinate (G) at (-1,  1.73);
			\coordinate (H) at (-1,  -1.73);
			\coordinate (K) at (0,  0);
			\coordinate (L) at (2,  0);
			\coordinate (M) at (-2,  0);
			\fill[blue] (A) circle (1pt);
			\fill[blue] (B) circle (1pt);
			\fill[blue] (C) circle (1pt);
			\fill[blue] (D) circle (1pt);
			\fill[blue] (E) circle (1pt);
			\fill[blue] (F) circle (1pt);
			\fill[blue] (G) circle (1pt);
			\fill[blue] (H) circle (1pt);
			\fill[blue] (K)  circle (1pt);
			\fill[blue] (L) circle (1pt);
			\fill[blue] (M) circle (1pt);
			\draw [] (A) circle [radius=0.2];
			\draw [  -> ]  (1.93,  1) to  [out=90,  in=0] (1.73,  1.2);	
			\draw [] (B) circle [radius=0.2];
			\draw [   -> ]  (1.53,  -1) to  [in=180,  out=90] (1.73,  -0.8);	
			\draw [] (C) circle [radius=0.2];
			\draw [   -> ]  (1.63,  1.43) to  [out=90,  in=0] (1.43,  1.63);	
			\draw [] (D) circle [radius=0.2];
			\draw [  -> ]  (1.23,  -1.43) to  [in=180,  out=90] (1.43,  -1.23);	
			\draw [] (E) circle [radius=0.2];
			\draw [  -> ]  (-1.55,  0.97) to  [out=90,  in=0]  (-1.75,  1.17);	
			\draw [] (F) circle [radius=0.2];
			\draw [  -> ]  (-1.95, -0.97) to  [in=180,  out=90]  (-1.75,  -0.77);	
			\draw [] (G) circle [radius=0.2];
			\draw [-> ]  (-0.8,  1.73) to  [out=90,  in=0]   (-1,  1.93);
			\draw [] (H) circle [radius=0.2];
			\draw [ -> ]  (-1.2,  -1.73) to  [in=180,  out=90]   (-1,  -1.53);
		\end{tikzpicture}
	\end{center}
	\caption{\footnotesize The jump contour  $\Sigma^{(1)}$  for  $M^{(1)}(z)$. In the blue regions, $\re (2i\theta(z))<0$, while in the white regions, $\re (2i\theta(z))>0$. }
	\label{Djump65}
\end{figure}

Denoting the factorization of  jump matrix  by
\begin{align}
\left(\begin{array}{cc}
		1 & 0\\
		\frac{ r(z) \delta_-(z)^{2} }{1-|r(z)|^2} e^{-2it\theta(z)} & 1
	\end{array}\right)\left(\begin{array}{cc}
	1 & -\frac{ \overline{r(z)}\delta_+(z)^{-2}}{1-|r(z)|^2} e^{2it\theta(z)}\\
	0 & 1
\end{array}\right):=B_-^{-1}B_+,  \label{opep2}
\end{align}
and  making the  transformation
       \begin{equation}
        M^{(1)}(z)= M(z) G(z)\delta(z)^{\sigma_3}, \label{trans1}
       \end{equation}
then $M^{(1)}(z)$ satisfies the symmetries of \eqref{msym} and  RH problem as follows.

 \begin{prob} \label{m1}
 Find  $M^{(1)}(z)=M^{(1)}(z;x,t)$ with properties
       \begin{itemize}
        \item   $M^{(1)}(z)$ is  analytic  in $ \mathbb{C}\setminus \Sigma^{(1)}$.
        \item  Jump condition:
        \begin{equation*}
            M^{(1)}_+(z)=M^{(1)}_-(z)V^{(1)}(z),
        \end{equation*}
        where
\begin{equation*}
	V^{(1)}(z)=\left\{\begin{array}{ll}
B_-^{-1}B_+,      & z\in \Gamma,\\[8pt]
\delta_-(z)^{-\sigma_3} V(z)\delta_+(z)^{\sigma_3},    & z\in \mathbb{R}\setminus \Gamma,\\[6pt]
		 \left(\begin{array}{cc} 1&0 \\ -\displaystyle { \frac{c_n e^{ -2it\theta(\eta_n) }\delta^2(z)}{z-\eta_n}}&1\end{array}  \right),  &z \in C_n,\, n=1,\cdots, 2N,\\[12pt]
		 \left(\begin{array}{cc} 1& \displaystyle {\frac{\bar c_n e^{ 2it\theta(\bar \eta_n) }\delta^{-2}(z)}{z-\bar \eta_n} }\\ 0&1\end{array}  \right), &z \in \bar{C}_n,\, n=1,\cdots, 2N.
	\end{array}\right.
\end{equation*}
        \item Asymptotic behaviors:
        \begin{align*}
                &M^{(1)}(z)=I+\mathcal{O}(z^{-1}),	\quad  z \to  \infty,\\
                &zM^{(1)}(z)=\sigma_2+\mathcal{O}(z), \quad z \to 0.
        \end{align*}

    \end{itemize}
\end{prob}

  Since   the jump matrices on the circles $C_n$ and $\bar C_n$ exponentially decay to the
  identity matrix as $t \to  \infty$,     RH problem \ref{m1} can be approximated by the following RH problem.
 \begin{prob} \label{m2}
 Find   $M^{(2)}(z)=M^{(2)}(z;x,t)$ with properties
       \begin{itemize}
        \item   $M^{(2)}(z)$ is  analytic  in $\mathbb{C}\setminus \mathbb{R}$.
        \item Jump condition:
        \begin{equation*}
            M^{(2)}_+(z)=M^{(2)}_-(z)V^{(2)}(z),
        \end{equation*}
        where
\begin{equation}
	V^{(2)}(z)=\left\{\begin{array}{ll}
B_-^{-1}B_+,    \; z\in \Gamma,\\[6pt]
\delta_-(z)^{-\sigma_3} V(z)\delta_+(z)^{\sigma_3},  \;  z \in \mathbb{R}\setminus \Gamma.
	\end{array}\right. \label{jumpv2}
\end{equation}

        \item Asymptotic behaviors:
        \begin{align*}
                &M^{(2)}(z)=I+\mathcal{O}(z^{-1}),	\quad  z \to  \infty,\\
                &zM^{(2)}(z)=\sigma_2+\mathcal{O}(z), \quad z \to 0.
        \end{align*}

\item    $M^{(2)}(z)$  admits the symmetries
$$M^{(2)}(z)=\sigma_1 M^{(2)*}(z)\sigma_1 =\overline{M^{(2)}(-\bar{z})}=\mp z^{-1}M^{(2)}(z^{-1})\sigma_2.$$
    \end{itemize}
\end{prob}
It can be shown that    $M^{(1)}(z)$  is  asymptotically equivalent to $M^{(2)}(z)$.
 \begin{proposition}
  \begin{equation}
  M^{(1)}(z) =  M^{(2)}(z)  \left(I+\mathcal{O}(e^{-ct}) \right),
  \end{equation}
where $c>0$ is a constant.
  \end{proposition}

{\bf Step 2: Removing singularities.}
In order to remove the   singularity  at  $z=0$, we  make a transformation
\begin{align}
M^{(2)}(z)=\left( I+ \frac{\sigma_2}{z} M^{(3)}(0)^{-1} \right ) M^{(3)}(z),\label{trans3}
\end{align}
then $M^{(2)}(z)$ satisfies the  RH problem  \ref{m2}  if $M^{(3)}(z)$ satisfies the following RH problem.

 \begin{prob} \label{ms3}
 Find  $M^{(3)}(z)=M^{(3)}(z;x,t)$ with properties
       \begin{itemize}
        \item  $M^{(3)}(z)$ is  analytic  in $\mathbb{C}\setminus \mathbb{R}$.
        \item  Jump condition: $  M^{(3)}_+(z)=M^{(3)}_-(z)V^{(2)}(z),$
        where
        $V^{(2)}(z)$ is given by   (\ref{jumpv2}).
        \item Asymptotic behavior:  $  M^{(3)}(z)=I+\mathcal{O}(z^{-1}),	\quad  z \to  \infty.$

\item  $M^{(3)}(z)$ satisfies the symmetries
$$M^{(3)}(z)=\sigma_1 M^{(3)*}(z)\sigma_1 =\overline{M^{(3)}(-\bar{z})}=\sigma_1 M^{(3)}(0)^{-1}M^{(3)}(z^{-1})\sigma_1.$$
    \end{itemize}
\end{prob}

\begin{proof}
The proof here is similar to   that of  RH problem 3.3 in \cite{wfp}. Thus, we omit it.
\end{proof}

\section{Long-time  Analysis in  the Transition Region} \label{sec3}
In this section,
 we consider the asymptotics    in the region   $ -C<(\xi+6) t^{2/3} < 0$ with $C>0$ which corresponds to Figure \ref{figurea}.
In this case,
the two saddle points $z_1$ and $z_2$ defined by (\ref{eq219}) are real and close to $z=1$ at least the speed of $t^{-1/3}$ as $t\to +\infty$.
Meanwhile, the other two saddle points $z_3$ and $z_4$ defined by (\ref{eq219})  are close to $z=-1$.

\subsection{A hybrid $\bar{\partial}$-RH problem} \label{modi2}
Fix a sufficiently small angle $\phi=\phi(\xi)$  such that  $\phi$ satisfies the following conditions:
\begin{itemize}
 \item $0<\phi < \arccos \frac{1}{\sqrt{5}-1} < \frac{\pi}{4}$;
 \item each $\Omega_j,\, j=0^\pm,1,2,3,4$ does not intersect with $\mathcal{L}$;
  \item each $\Omega_j,\, j=0^\pm,1,2,3,4$ does not intersect any small disks $D_n,\,\bar D_n,\, n=1,\cdots,2N$,
\end{itemize}
where $\Omega_j,\, j=0^\pm,1,2,3,4$ are defined by
\begin{align*}
&\Omega_{0^+}:= \{z\in\mathbb{C}:  0 \le \arg z \le \phi, \, |\re z| \le \frac{z_2}{2} \}, \ \Omega_{0^-} := \{z\in\mathbb{C}: -\bar{z} \in \Omega_{0^+} \},\\
& \Omega_1 := \{z\in\mathbb{C}: 0 \le \arg (z-z_1) \le \phi \}, \quad \Omega_4 := \{z\in\mathbb{C}: -\bar{z} \in \Omega_{1} \},\\
& \Omega_2 := \{z\in\mathbb{C}: \pi-\phi \le \arg (z-z_2) \le \pi, \, |\re (z-z_2)| \le \frac{z_2}{2}\}, \, \Omega_3 := \{z\in\mathbb{C}: -\bar{z} \in \Omega_{2} \},
\end{align*}
and   ${\Omega}^*_{j}$  denote  the conjugate regions of $\Omega_j$.
Moreover, to  open the jump contour $\Gamma$ by the $\bar{\partial}$ extension, we
define $\Sigma_{j},\, j=0^\pm,1,2,3,4$ as the boundaries of $\Omega_j$ and denote
\begin{equation*}
l \in \left(0, \frac{z_2}{2}\tan \phi\right), \quad \Sigma'_m= (-1)^{m+1}\frac{z_2}{2}+ e^{i\frac{\pi}{2}}l, \quad m=1,2.
\end{equation*}
 ${\Sigma}^*_{j},\,j=0^\pm,1,2,3,4$ and  ${\Sigma}'^*_m,\,m=1,2$  denote the  conjugate  contours above.
See   Figure \ref{signdbar}. Denote
  $$\Sigma = \bigcup_{j=0^\pm, 1,2,3,4}(\Sigma_j\cup \Sigma_j^*), \ \ \Sigma' = \bigcup_{m=1,2}(\Sigma'_m\cup \Sigma'^{*}_m),\ \ \Omega= \bigcup_{j=0^\pm, 1,2,3,4} (\Omega_j\cup {\Omega}^*_j).$$

\begin{figure}[htbp]
    \begin{center}
        \begin{tikzpicture}

        \draw[blue!10, fill=LightSteelBlue!50](-3.5,0) arc [start angle=0, end angle=60, radius =2.5]--(-4.8,2.21)--(-5.5,2.21)--(-5.5,0);
        \draw[blue!10, fill=LightSteelBlue!50] (3.5,0) arc [start angle=180, end angle=120, radius =2.5]--(4.8,2.21)--(5.5,2.21)--(5.5,0);
        \draw[blue!10, fill=LightSteelBlue!50] (2,0) arc [start angle=0, end angle=180, radius =1];
        \draw[blue!10, fill=LightSteelBlue!50] (0,0) arc [start angle=0, end angle=180, radius =1];
        \draw[blue!10, fill=LightSteelBlue!50](-3.5,0) arc [start angle=360, end angle=300, radius =2.5]--(-4.8,-2.21)--(4.8,-2.21)arc [start angle=240, end angle=180, radius =2.5]--(-3.5,0);
        \draw[white, fill=white] (2,0) arc [start angle=360, end angle=180, radius =1];
        \draw[white, fill=white] (0,0) arc [start angle=360, end angle=180, radius =1];

               \draw[-latex,dotted](-6,0)--(6,0)node[right]{ \textcolor{black}{$\re z$}};
               \draw[-latex,dotted](0,-2.5)--(0,2.5)node[right]{\textcolor{black}{$\im z$}};

               \coordinate (I) at (0,0);
               \coordinate (A) at (-3.5,0);
               \fill (A) circle (1pt) node[below] {\footnotesize $z_4$};
               \coordinate (b) at (-2,0);
               \fill (b) circle (1pt) node[below] {\footnotesize $z_3$};
               \coordinate (e) at (3.5,0);
               \fill (e) circle (1pt) node[below] {\footnotesize $z_1$};
               \coordinate (f) at (2,0);
               \fill (f) circle (1pt) node[below] {\footnotesize $z_2$};
               \draw[] (-3.5,0)--(-2,0);
               \draw[-latex] (-3.5,0)--(-2.9,0);
               \draw[] (3.5,0)--(2,0);
               \draw[-latex] (2,0)--(3.2,0);

               \coordinate (c) at (-2.65,0);
               \fill[] (c) circle (1pt) node[below] {\scriptsize$-1$};
               \coordinate (d) at (2.65,0);
               \fill[] (d) circle (1pt) node[below] {\scriptsize$1$};
               \draw[](0,0)--(1,0.8);
               \draw[-latex](0,0)--(0.5,0.4);
               \draw[](1,0.8)--(2,0);
               \draw[-latex](1,0.8)--(1.5,0.4);

               \draw[](3.5,0)--(5.5, 1.2);
               \draw[-latex](3.5,0)--(4.5,0.6);
               \draw[](0,0)--(1,-0.8);
               \draw[-latex](0,0)--(0.5,-0.4);
               \draw[](1,-0.8)--(2,0);
               \draw[-latex](1,-0.8)--(1.5,-0.4);

               \draw[](3.5,0)--(5.5, -1.2);
               \draw[-latex](3.5,0)--(4.5,-0.6);
               \draw[](0,0)--(-1,0.8);
               \draw[-latex](-1,0.8)--(-0.5,0.4);
               \draw[](-1,0.8)--(-2,0);
               \draw[-latex](-2,0)--(-1.5,0.4);

               \draw[](-3.5,0)--(-5.5, 1.2);
               \draw[  latex-](-4.5,0.6)--(-5.5, 1.2);
               \draw[](0,0)--(-1,-0.8);
               \draw[-latex](-1,-0.8)--(-0.5,-0.4);
               \draw[](-1,-0.8)--(-2,0);
               \draw[-latex](-2,0)--(-1.5,-0.4);

               \draw[](-3.5,0)--(-5.5, -1.2);
               \draw[ latex-](-4.5,-0.6)--(-5.5, -1.2);

               \draw[] (1,0)--(1,0.8);
                \draw[-latex] (1,0)--(1,0.4);
                 \draw[] (-1,0)--(-1,0.8);
                \draw[-latex] (-1,0)--(-1,0.4);
                \draw[] (1,0)--(1,-0.8);
                \draw[-latex] (1,0)--(1,-0.4);
                 \draw[] (-1,0)--(-1,-0.8);
                \draw[-latex] (-1,0)--(-1,-0.4);
               \node at (4.6,0.25)  {\scriptsize $\Omega_{1}$};
               \node at (4.6,0.9) {\scriptsize $\Sigma_{1}$};
                 \node at (4.6,-0.25)  {\scriptsize ${\Omega}^*_{1}$};
                  \node at (4.6,-0.9) {\scriptsize $\Sigma^*_{1}$};
                   \node at (-4.6,0.25)  {\scriptsize $\Omega_{4}$};
                    \node at (-4.6,0.9) {\scriptsize $\Sigma_{4}$};
                 \node at (-4.6,-0.25)  {\scriptsize ${\Omega}^*_{4}$};
                  \node at (-4.6,-0.9) {\scriptsize $\Sigma^*_{4}$};
               \node at (1.3,0.25)  {\scriptsize $\Omega_{2}$};
                \node at (1.5,0.68) {\scriptsize $\Sigma_{2}$};
                 \node at (1.3,-0.25)  {\scriptsize  ${\Omega}^*_{2}$};
                   \node at (1.5,-0.68) {\scriptsize $\Sigma^*_{2}$};
                  \node at (-1.3,0.25)  {\footnotesize $\Omega_{3}$};
                    \node at (-1.45,0.68) {\scriptsize $\Sigma_{3}$};
                 \node at (-1.3,-0.25)  {\scriptsize ${\Omega}^*_{3}$};
                  \node at (-1.45,-0.68) {\scriptsize $\Sigma^*_{3}$};
                  \node at (0.75,0.25)  {\scriptsize $\Omega_{0^+}$};
                   \node at (0.65,0.68) {\tiny $\Sigma_{0^+}$};
                 \node at (0.75,-0.25)  {\scriptsize ${\Omega}^*_{0^+}$};
                 \node at (0.65,-0.68) {\tiny $\Sigma^*_{0^+}$};
                                \node at (-0.65,0.25)  {\footnotesize $\Omega_{0^-}$};
                                \node at (-0.55,0.68) {\tiny $\Sigma_{0^-}$};
                 \node at (-0.65,-0.25)  {\scriptsize ${\Omega}^*_{0^-}$};
               \node at (-0.5,-0.68) {\tiny $\Sigma^*_{0^-}$};
               \end{tikzpicture}
          \caption{\footnotesize{Open the jump contour $\Gamma$. The regions where $\re \left(2i\theta(z) \right)<0$ are shown in blue, and those where $\re \left(2i\theta(z) \right)>0$ are in white.}}
      \label{signdbar}
    \end{center}
   \end{figure}
  To determine the decaying properties  of the oscillating factors   $ e^{\pm 2it\theta(z)}$,  we  especially  estimate
     $\re(2i\theta(z))$ in different regions.

\begin{proposition}\label{reprop1}  Let  $ -C< \left(\frac{x}{t}+6\right) t^{2/3}<0$. Denote $z=|z|e^{i\phi_0}$.
  Then the following estimates hold.
 \begin{itemize}
  \item $($corresponding to $z=0$$)$
    \begin{align}
    & \re(2i\theta(z))\leq   -c_0|\sin \phi_0| |\im z| ,\quad z\in \Omega_{0^+}\cup \Omega_{0^-},\label{estm1}\\
 & \re(2i\theta(z))\geq  c_0 |\sin \phi_0||\im z|,\quad z\in {\Omega}^*_{0^+}\cup {\Omega}^*_{0^-},
    \end{align}
    where $c_0=c_0(\phi_0,\xi)$ is a constant.
      \item $($corresponding to $z=z_j,\, j=1,4$$)$
    \begin{align}
       & \re(2i\theta(z))\leq \begin{cases}
       -c_j|\re z -z_j|^2 |\im z|,\quad z\in \Omega_{j} \cap \{ z:|z|\leq 2 \}, \label{omega1}\\
       -c_j|\im z|, \quad z \in \Omega_j \cap \{ z:|z|>2 \},
       \end{cases}\\
       & \re(2i\theta(z))\geq  \begin{cases}
      c_j|\re z -z_j|^2 |\im z|,\quad z\in  {\Omega}^*_{j} \cap \{ z:|z|\leq 2 \}, \\
    c_j|\im z|, \quad z \in {\Omega}^*_{j} \cap \{ z:|z|>2 \},
       \end{cases}
    \end{align}
 where $c_j=c_j(z_j, \phi_0,\xi)$ is a constant.
   \item $($corresponding to $z=z_j,\, j=2,3$$)$
                 \begin{align}
       &	\re(2i\theta(z))\leq -c_j|\re z -z_j|^2|\im z| ,\quad z\in \Omega_{j},	\\
      & 	\re(2i\theta(z))\geq c_j|\re z -z_j|^2|\im z| ,\quad z\in  {\Omega}^*_{j},
       \end{align}
       where $c_j=c_j(z_j, \phi_0,\xi)$ is a constant.
 \end{itemize}
  \end{proposition}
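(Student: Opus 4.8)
The plan is to work entirely from the closed-form expression (\ref{theta01}) for $\re(2i\theta(z))$ and to cut down the number of cases by exploiting the symmetries of $\theta$. Writing $z=u+iv$ and recalling $\theta(z)=\lambda(z)\left(\xi+4k^2(z)+2\right)$ with $\lambda,k$ the uniformization maps, a direct check gives $\theta(\bar z)=\overline{\theta(z)}$ and $\theta(-\bar z)=-\overline{\theta(z)}$, so that
$$\re\!\left(2i\theta(\bar z)\right)=-\re\!\left(2i\theta(z)\right),\qquad \re\!\left(2i\theta(-\bar z)\right)=\re\!\left(2i\theta(z)\right).$$
Thus the reflection $z\mapsto\bar z$ across $\mathbb{R}$ flips the sign while the reflection $z\mapsto-\bar z$ across $i\mathbb{R}$ preserves it. Since $\Omega_{0^-},\Omega_3,\Omega_4$ are the images of $\Omega_{0^+},\Omega_2,\Omega_1$ under $z\mapsto-\bar z$, and every starred region is the image of its unstarred counterpart under $z\mapsto\bar z$, it suffices to prove the three upper–half–plane bounds on $\Omega_{0^+}$, $\Omega_1$ and $\Omega_2$; all remaining estimates then follow automatically.

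For the region $\Omega_{0^+}$ near $z=0$ I would substitute $u=|z|\cos\phi_0$, $v=|z|\sin\phi_0$ into (\ref{theta01}) and isolate the most singular contribution $-v(3u^2-v^2)|z|^{-6}=-|z|^{-3}\sin\phi_0\,(3\cos^2\phi_0-\sin^2\phi_0)$. The angular constraint $0\le\phi_0\le\phi<\tfrac{\pi}{4}$ forces $3\cos^2\phi_0-\sin^2\phi_0$ to be bounded below by a positive constant, so this term is negative and dominates as $z\to0$, where it behaves like $|z|^{-4}$ against the $|z|^{-2}$ of the $(\xi+3)$-terms. The remaining, less singular, contributions are controlled on the truncated sector $|\re z|\le z_2/2$, which yields $\re(2i\theta(z))\le -c_0|\sin\phi_0|\,|\im z|$ uniformly; the threshold $\arccos\frac{1}{\sqrt5-1}$ in the hypothesis on $\phi$ enters precisely to keep the relevant brackets of definite sign once the negative $(\xi+3)$-terms (with $\xi\approx-6$) are included.

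The substantive work is in $\Omega_1$ and $\Omega_2$, where the two saddle points coalesce. Because $-C<(\xi+6)t^{2/3}<0$, the saddles $z_1,z_2$ of (\ref{eq219}) both tend to $1$ as $t\to\infty$, and at the critical value $\xi=-6$ the factor $3z^4+\xi z^2+3$ in (\ref{2theta}) degenerates to $3(z^2-1)^2$, so $\theta'$ acquires a \emph{double} zero at $z=1$. I would therefore avoid a Morse-type quadratic expansion and instead keep $\theta'(z)=\frac{(1+z^2)\,3(z^2-\eta_+)(z^2-\eta_-)}{2z^4}$ in factored form, integrate from the real saddle (where $\re(2i\theta(z_j))=0$ since $\theta$ is real on $\mathbb{R}$), and extract the genuinely cubic contribution, comparable to $\re\!\big(i\theta'''(z_j)(z-z_j)^3\big)/3$. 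For $z-z_j=\rho e^{i\alpha}$ this is of order $-\rho^3\cos^2\alpha\,|\sin\alpha|=-|\re z-z_j|^2|\im z|$, giving the bound $\re(2i\theta(z))\le -c_j|\re z-z_j|^2|\im z|$ on $\Omega_j\cap\{|z|\le 2\}$. Since $\Omega_1$ extends to infinity along $\Sigma_1$, the part $|z|>2$ is treated separately: there $\theta(z)\sim 2z^3$ dominates and the crude linear lower bound $\re(2i\theta(z))\le -c_1|\im z|$ holds at once, while $\Omega_2$ is bounded so that only the cubic estimate is needed.

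The main obstacle is the \emph{uniformity of the constants $c_1,c_2$ as the saddles merge}. The quadratic coefficient $\theta''(z_j)$ is proportional to $z_1-z_2\sim|\xi+6|^{1/2}\to0$, so any estimate resting on it degenerates; one must instead show that the cubic term, whose coefficient stays bounded away from $0$ uniformly for $\xi$ in the admissible window, dominates $\re(2i\theta)$ on the whole wedge independently of how close $\xi$ is to $-6$. Making this competition between the vanishing quadratic and the surviving cubic term precise along the rays $\Sigma_j$ is the delicate technical point, and it is exactly what forces the cubic normalization $|\re z-z_j|^2|\im z|$ in the statement and underlies the appearance of the Painlev\'e II model in the subsequent analysis.
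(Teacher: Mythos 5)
Your symmetry reduction is valid (indeed $\theta(\bar z)=\overline{\theta(z)}$ and $\theta(-\bar z)=-\overline{\theta(z)}$, so it suffices to treat $\Omega_{0^+}$, $\Omega_1$, $\Omega_2$; the paper does the same thing implicitly by treating one region ``as an example''), and your sketch near $z=0$ is morally the paper's argument, which organizes the same domination via the substitution $F(|z|)=|z|+|z|^{-1}$ and a root analysis of the bracket in (\ref{eoue}). The genuine gap is at the heart of the matter: the estimate $\re(2i\theta(z))\le -c_j|\re z-z_j|^2|\im z|$ on $\Omega_j\cap\{|z|\le 2\}$ is never actually proved. You reduce it to the claim that the ``genuinely cubic contribution'' $\re\bigl(i\theta'''(z_j)(z-z_j)^3\bigr)/3$ dominates, but that is a statement about the leading Taylor term only: you neither bound the quartic-and-higher remainder nor justify that a local expansion controls the whole wedge, where $|z-z_j|$ is of order $1$ while the Taylor series of $\theta$ about $z_j$ has radius of convergence only about $z_j\approx 1$ (the pole of $\theta$ at $z=0$). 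Moreover, you then explicitly concede that making the ``competition between the vanishing quadratic and the surviving cubic term'' precise is ``the delicate technical point'' --- that is, you name the crux (a constant that survives as $z_1,z_2\to 1$, which the later $\bar{\partial}$ and local-model estimates need because $\xi\to-6$ as $t\to\infty$) and leave it open. A proposal that defers exactly this step has not proved the proposition.

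The idea you are missing is that no expansion is needed. The paper uses the saddle-point relation (\ref{xixi1}), $\xi=-3z_1^{-2}(1+z_1^4)$, to eliminate $\xi$ exactly: writing $z=z_1+u+iv$, one has $\re(2i\theta(z))=-vF(u,v)$ with $F(u,v)=z_1^{-2}|z|^{-6}G(u,v)$, where $G$ is an explicit polynomial expression, and a pointwise inequality $G(u,v)\ge 16z_1^2u^2$, valid on all of the wedge, yields $\re(2i\theta(z))\le -16|z|^{-6}u^2v\le -\tfrac14 u^2v$ for $|z|\le 2$ --- a bound whose constant is manifestly uniform as the saddles merge. This exact algebra is what replaces your ``competition'' argument. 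Two smaller inaccuracies: the threshold $\phi<\arccos\frac{1}{\sqrt5-1}$ is not what keeps the $\Omega_{0^\pm}$ bracket of definite sign (there $\phi<\pi/4$, i.e.\ $1+2\cos2\phi_0>0$ together with $\xi<-6$, suffices); it enters only in $\Omega_1\cap\{|z|>2\}$, where the bound is also not quite ``at once'': the paper bounds $f(|z|)=\frac{|z|^6+|z|^4}{|z|^6+1}\le\frac43$ and then needs $\cos^2\phi_0>\frac{1}{(\sqrt5-1)^2}$ to conclude $F(u,v)\ge -4+8\cos^2\phi_0>0$.
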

  \begin{proof}
   For the case corresponding to $z=0$,  we take   $\Omega_{0^+}$   as an  example  to prove the estimate (\ref{estm1}).
    Others can be proven in  a similar way.

 For $z\in \Omega_{0^+}$, denote the ray $z=|z|e^{i\phi_0}=u+iv$ where $0<\phi_0<\phi$ and $u>v>0$,  and the function $F(l)=l+l^{-1}$. Then,  (\ref{theta01}) becomes
    \begin{align}
        \re\left(2i\theta(z)\right)= - F(|z|) \sin\phi_0  \left( (1+2\cos 2\phi_0)F(|z|)^2 - 6 \cos 2\phi_0 + \xi \right). \label{eoue}
    \end{align}
Considering $(1+2\cos 2\phi_0)F(|z|)^2 - 6 \cos 2\phi_0 + \xi=0$, we have
\begin{align*}
   F(|z|)^2 = 3- \frac{3+\xi}{2\cos 2\phi_0 +1}:= \alpha >4.
\end{align*}
By $F(l)=\sqrt{\alpha}$, we have $l^2 -\sqrt{\alpha}l+1=0$. Solving the above equation, we find two roots $l_j$, $j=1,2$ with
\begin{equation*}
l_1 = \frac{\sqrt{\alpha}-\sqrt{\alpha-4}}{2} < l_2 = \frac{\sqrt{\alpha}+\sqrt{\alpha-4}}{2}.
\end{equation*}
Since $|z|\le \frac{z_2 \sec \phi_0}{2}<l_1$,
\begin{align*}
(1+2\cos 2\phi_0)F(|z|)^2 &- 6 \cos 2\phi_0 + \xi \ge \\
  &(1+2\cos 2\phi_0)F\left(\frac{z_2 \sec \phi_0}{2}\right)^2 - 6 \cos 2\phi_0 + \xi>0.
\end{align*}
Thus, there exists a constant $c_0 =c_0(\phi_0)$ such that
\begin{equation*}
 \re\left(2i\theta(z)\right) \le - c_0 F(|z|)\sin \phi_0.
\end{equation*}

For the cases corresponding to $z_j,\,j=1,4$, we take $\Omega_1$ as an example and others can be easily inferred.
Let $z=z_1+u+iv$. Then (\ref{theta01})  can be rewritten as
\begin{equation}
 \re\left(2i\theta(z)\right) = -v F(u,v),
\end{equation}
where
\begin{equation}\label{F1}
F(u,v)=(\xi+3) \left(1+|z|^{-2}\right) +\left(3(z_1+u)^2-v^2\right)\left(1+|z|^{-6}\right).
\end{equation}

For $z \in \Omega_1$ and $|z|\leq 2$, from (\ref{eq219}), we have
\begin{equation}\label{xixi1}
\xi = -3z_1^{-2} (1+z_1^4).
\end{equation}
Substituting  (\ref{xixi1}) into  (\ref{F1}) yields
\begin{equation}
F(u,v)= z_1^{-2} |z|^{-6} G(u,v),
\end{equation}
where
\begin{equation}
G(u,v)=3\left(z_1^2-1-z_1^4\right) \left(|z|^6+|z|^4\right) +\left(3z_1^2(z_1+u)^2-z_1^2v^2\right)\left(|z|^6+1\right).
\end{equation}
After simplification, we obtain $$G(u,v) \geq 16  z_1^2 u^2,$$
then $$F(u,v) \ge 16 u^2 |z|^{-6} \ \text{and}\ \re\left(2i\theta(z)\right) \le - 16 |z|^{-6}u^2v.$$ Since $|z|\le 2 $, $$\re\left(2i\theta(z)\right) \le -\frac{1}{4}u^2 v.$$
Moreover, the proof for the cases corresponding to $z_j,\, j=2,3$ can be given similarly.

Next we consider the estimate for $z \in \Omega_1$ and $|z| >2$.  In the transition region, as $\xi \to -6^-$,  (\ref{F1}) reduces to
\begin{equation}
F(u,v)=\left(1+|z|^{-6} \right) \left( -3 f(|z|) +3 (z_1 +u)^2 -v^2\right),
\end{equation}
where
\begin{equation}
f(x)=\frac{x^6+x^4}{x^6+1}.
\end{equation}
Since $f(x)$ has a maximum value $f_{max}=\frac{4}{3}$,
\begin{equation*}
F(u,v) \ge -4 +3(z_1+u)^2 -v^2.
\end{equation*}
Let $z=|z|e^{i\phi_0}$ with $0<\phi_0 <\phi$. By noting that $v=(z_1+u) \tan w$ where $0<w<\phi_0$, we obtain
\begin{equation*}
F(u,v) \ge -4 +2(z_1+u)^2 \ge -4+8 \cos^2 \phi_0 \ge -4+\frac{8}{(\sqrt{5}-1)^2}.
\end{equation*}
This completes the proof of the estimate (\ref{omega1}) in the domain $\Omega_1$.
  \end{proof}

Next we open  the contour $\Gamma$  via  continuous extensions of the jump matrix $V^{(2)}(z)$
by defining   appropriate functions.

 \begin{proposition} \label{prop3}

  Let $q_0(x)- \tanh(x)\in   H^{4,4}(\mathbb{R})$. Then it is possible to define   functions $R_{j}: \overline{\Omega}_j \to \mathbb{C},\ j=0^\pm,1,2,3,4$, continuous on  $\overline{\Omega}_j$, with continuous first partials on $\Omega_j$, and boundary values

       \begin{equation*}
       R_{j}(z)= \begin{cases}
           \frac{\overline{r(z)}\delta_+(z)^{-2}}{1-|r(z)|^2}, \quad  z \in \Gamma,\\
             \gamma(z_j),\quad z \in \Sigma_{j},
            \end{cases}
       \end{equation*}
       where
     \begin{align}\label{Rz}
 \gamma(z)=\begin{cases}
  \frac{\overline{S_{21}(z)}}{S_{11}(z)} \left( \frac{a(z)}{\delta_+(z)} \right)^2, \quad \text{\rm for the generic case},\\[5bp]
    \frac{\overline{r(z)}\delta_+(z)^{-2}}{1-|r(z)|^2}, \quad \text{\rm for the non-generic case},
    \end{cases}
\end{align}
with
\begin{align}
&S_{21}(z)= {\rm det}(\Phi_{-,1}(z), \Phi_{+,1}(z)), \label{S21}\\
&S_{11}(z) = {\rm det}(\Phi_{+,1}(z), \Phi_{-,2}(z)), \label{S11}
\end{align}
and $\gamma(0)=0$,
  such that  for $j=1,2$; a fixed constant $c=c(q_0)$; and a fixed cutoff function $\varphi \in C^\infty_0(\mathbb{R},[0,1])$ with small support near $1$; we have
 \begin{align}
  &|\bar{\partial}R_{j}(z)| \le
  c\left(  | r'\left( |z| \right)|+    | z-z_j |^{-1/2} + \varphi( |z|) \right),   \;z\in \Omega_j,\label{437}\\
 &|\bar{\partial}R_{j}(z)| \le c|z-1|,\;  z \in  \Omega_j \ \text{in a small fixed neighborhood of} \ 1;\label{estn1}
\end{align}
for $j=3,4$,  we have (\ref{437}) with $|z|$ replaced by $-|z|$ in the argument of $r'$ and $\varphi$, as well as (\ref{estn1});
for $j=0^\pm$, we have
 \begin{equation}
  |\bar{\partial}R_{0^\pm}(z)| \le
  c\left(  | r'\left( \pm |z| \right)|+    | z |^{-1/2}  \right),   \;z\in \Omega_0^\pm.
\end{equation}
The similar estimate holds for $|\bar{\partial}R^*_{j}(z)|$.

Setting $R: \Omega \to \mathbb{C}$ by $R(z)|_{z\in \Omega_j} =R_j(z)$ and $R(z)|_{z\in  {\Omega}^*_j} =R^*_j(z)$, the extension can preserve the symmetry $R(z)=-\overline{R( \bar{z}^{-1})}$.
        \end{proposition}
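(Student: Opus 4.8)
The plan is to build each $R_j$ by an explicit angular interpolation between its two prescribed boundary values and then to read off the estimates for $\bar\partial R_j$ directly from the chain rule. First I would record the regularity of the scalar function $\gamma$ in \eqref{Rz}. In the non-generic case this is immediate: since $|r(\pm1)|<1$ and $r\in H^1(\mathbb{R})$, the factor $1-|r|^2$ is bounded below, so $\gamma=\frac{\overline{r}\delta_+^{-2}}{1-|r|^2}$ inherits the Sobolev regularity of $r$ together with the boundedness of $\delta_+$ from Proposition \ref{prop1}. In the generic case I would exploit the pole structure recorded just before \eqref{rzlim}: because $a(z)\sim\pm s_\pm/(z\mp1)$ and $b(z)\sim\mp s_\pm/(z\mp1)$ have matching simple poles, the quotient $\overline{S_{21}}/S_{11}=\overline{b}/a$ has the finite limit $-\overline{s_\pm}/s_\pm$ at $\pm1$, while $a/\delta_+$ is bounded with unit modulus on $\mathbb{R}$ by Proposition \ref{prop1}. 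Hence the generic line of \eqref{Rz} is a bounded, smooth rewriting of $\overline{b}\,a\,\delta_+^{-2}$; a one-line computation using \eqref{azbz} shows both lines of \eqref{Rz} equal $\overline{b}\,a\,\delta_+^{-2}$ on $\Gamma\setminus\{\pm1\}$, so $\gamma$ is a single continuous function on $\Gamma$ with $\gamma(0)=0$, and the hypothesis $q_0\in\tanh(x)+H^{4,4}(\mathbb{R})$ makes $\gamma$ of class $C^2$ in a fixed neighbourhood of $z=1$.

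With $\gamma$ in hand, for $j=1,2,3,4$ and $z\in\Omega_j$ I would set
\begin{equation*}
R_j(z)=\gamma(z_j)+\bigl(\gamma(\re z)-\gamma(z_j)\bigr)\cos\!\Bigl(\tfrac{\pi}{2\phi}\arg(z-z_j)\Bigr),
\end{equation*}
and likewise for $j=0^\pm$ with $z_j$ replaced by $0$, where $\gamma(0)=0$ leaves only the prefactor $\gamma(\re z)$. On the real edge $\arg(z-z_j)=0$ this equals $\gamma(\re z)$, matching the datum on $\Gamma$, while on $\Sigma_j$ the angle is $\phi$ and the cosine vanishes, giving the constant $\gamma(z_j)$; on the vertical truncations $\Sigma'_n$ bordering $\Omega_2,\Omega_3$ the same recipe matches the constant value, so the pieces join continuously. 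The governing computation is then
\begin{equation*}
\bar\partial R_j=\tfrac{1}{2}\gamma'(\re z)\cos(\cdots)+\bigl(\gamma(\re z)-\gamma(z_j)\bigr)\,\bar\partial\cos(\cdots),
\end{equation*}
and since $\bar\partial\arg(z-z_j)=\tfrac{i}{2}\,\overline{(z-z_j)}^{-1}$ the second term is controlled by $c\,|\gamma(\re z)-\gamma(z_j)|\,|z-z_j|^{-1}$. The embedding $H^1\hookrightarrow C^{1/2}$ on the line gives $|\gamma(\re z)-\gamma(z_j)|\le c|\re z-z_j|^{1/2}\le c|z-z_j|^{1/2}$, which produces the $|z-z_j|^{-1/2}$ term in \eqref{437}, while the first term is bounded by $|\gamma'(\re z)|$ and hence, up to the bounded factors $\delta_+^{-2}$ and $(a/\delta_+)^2$, by $|r'(|z|)|$ (and $|r'(-|z|)|$ for $j=3,4$). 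Note that the pointwise phase bounds of Proposition \ref{reprop1} play no role here; only the chain rule and the Hölder bound enter.

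To obtain the sharper estimate \eqref{estn1} near $z=1$ I would refine the interpolation using the extra smoothness. Writing $P(z)=\gamma(1)+\gamma'(1)(z-1)$ for the holomorphic first-order jet of $\gamma$ at $1$ and interpolating $\gamma(\re z)-P(\re z)$ rather than $\gamma(\re z)$, the radial term becomes $\tfrac{1}{2}\bigl(\gamma'(\re z)-\gamma'(1)\bigr)\cos$, which is $O(|z-1|)$ because $\gamma\in C^2$ forces $\gamma'\in C^1$, and the angular term carries the quadratic Taylor remainder $\gamma(\re z)-P(\re z)=O(|\re z-1|^2)$ against $\bar\partial\cos=O(|z-z_j|^{-1})$, again $O(|z-1|)$; this yields \eqref{estn1}, and the smooth transition between this inner form and the outer interpolation is exactly what the additive cutoff term $\varphi(|z|)$ in \eqref{437} records. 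The same cosine interpolation with $z_j\to0$ and $\gamma(0)=0$ gives the $|z|^{-1/2}$ bound in $\Omega_{0^\pm}$ with no cutoff, since there is no singularity at the origin. Finally, the symmetry $R(z)=-\overline{R(\bar z^{-1})}$ I would impose by defining the extensions on the upper-right sectors and propagating them to the conjugate and inverted sectors through $r(z)=-\overline{r(\bar z^{-1})}$ (Lemma \ref{lsym2}) and $\delta(z)=\delta(z^{-1})^{-1}$ (Proposition \ref{prop1}); consistency on the overlaps is routine, and the bounds for $\bar\partial R^*_j$ follow by conjugation.

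The main obstacle I anticipate is concentrated entirely in the generic case at the merge point $z=1$: one must both tame the factor $(1-|r|^2)^{-1}$, which blows up as $z\to\pm1$, by passing to the $S_{11},S_{21},a/\delta_+$ representation in \eqref{Rz}, and upgrade the generic $|z-z_j|^{-1/2}$ bound to the linear bound \eqref{estn1} demanded by the subsequent double-scaling Painlev\'e analysis. Both rest on the pole cancellation $\overline{S_{21}}/S_{11}=\overline{b}/a\to-\overline{s_\pm}/s_\pm$ and on having two genuine derivatives of $\gamma$ near $1$, which is precisely what the weighted space $H^{4,4}$ supplies; everything else is routine chain-rule bookkeeping.
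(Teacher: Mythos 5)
Your overall skeleton --- sector-wise angular interpolation, the chain rule giving a radial term plus an angular term, Cauchy--Schwarz/H\"older for the $|z-z_j|^{-1/2}$ contribution, the dual representation \eqref{Rz} of $\gamma$ with the pole cancellation $\overline{S_{21}}/S_{11}=\bar b/a\to-\bar s_\pm/s_\pm$, and a cutoff to glue the two representations --- is indeed the skeleton of the construction in \cite{zx1,zx2}, to which the paper defers (it prints no proof of its own). But your chain-rule step fails at a specific point: you compose the \emph{whole} function $\gamma$, including the factor $\delta_+^{-2}$, with $\re z$, and then assert that $|\gamma'(\re z)|$ is bounded by $|r'(|z|)|$ ``up to the bounded factors $\delta_+^{-2}$ and $(a/\delta_+)^2$''. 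Boundedness of a factor says nothing about its derivative: by the Plemelj formula, $\partial_x\bigl(\delta_+(x)^{-2}\bigr)$ contains (up to bounded factors) the Hilbert transform of $\partial_x\log\bigl(1-|r(x)|^2\bigr)$, a nonlocal quantity which lies in $L^2$ but admits no pointwise bound of the form $c\bigl(|r'(x)|+\varphi(x)\bigr)$; so your construction does not yield \eqref{437} as stated. The fix used in the cited works is to never let $\bar\partial$ touch the $\delta$'s: extend $\delta(z)^{-2}$ and $h(z)=a(z)/\delta(z)$ (a finite Blaschke product, hence analytic and bounded in $\mathbb{C}^+$) as genuine analytic factors on $\Omega_j$, compose only the ratios $\bar r/(1-|r|^2)$, resp.\ $\overline{S_{21}}/S_{11}$, with $\re z$, and control the resulting difference terms by H\"older-$\tfrac12$ continuity of the Cauchy integral up to the axis. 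With that modification your radial and angular estimates go through and give \eqref{437} and the $\Omega_{0^\pm}$ bound.

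The second, independent gap is your route to \eqref{estn1}. Subtracting the holomorphic jet $P(z)=\gamma(1)+\gamma'(1)(z-1)$ and interpolating $\gamma(\re z)-P(\re z)$ changes the boundary value on $\Sigma_j$ from the constant $\gamma(z_j)$ to $P(z)$ plus a constant, i.e.\ to a function linear in $z$. That is not permissible: the constancy of $R_j$ on $\Sigma_j$ is exactly what makes the jump $V^{(4)}$ in \eqref{V3} have the constant coefficient $\gamma(z_j)$, which is later identified with the constant-coefficient Painlev\'e II jump $b_\pm$; a $z$-dependent boundary value would wreck the matching with the model problem. Once the constant value on $\Sigma_j$ is restored, the angular term $\bigl(\gamma(\re z)-\gamma(z_j)\bigr)\bar\partial\cos(\cdot)$ is of size $\sup|\gamma'|=O(1)$ near the vertex $z_j$, and since $z_1,z_2\to1$ in the transition region this is \emph{not} $O(|z-1|)$: your proposal therefore proves only the $O(1)$ bound already absorbed by $\varphi$ in \eqref{437}, not \eqref{estn1}. (In \cite{zx1,zx2} the linear bound arises because there $\pm1$ are interior points of the real boundary of the lenses and the extension near $\pm1$ is an almost-analytic extension of the smooth function $\overline{S_{21}}/S_{11}\cdot h^2$, so that $\bar\partial\sim\im z$; in the present geometry, where the lens vertices merge at $\pm1$, this requires a separate argument that the proposal does not supply.) A smaller inaccuracy of the same kind: your claim that the pieces ``join continuously'' across $\Sigma'_n$ is false for your formula --- what \eqref{V3} actually requires on $\Sigma'_n$ is that the \emph{difference} of the adjacent extensions equal the constant $\gamma(z_2)$ (resp.\ $\gamma(z_3)$), a compatibility your construction does not verify.
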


\begin{proof}
The proof follows a similar methodology to that outlined in \cite{CJ}.
For brevity, we omit it in this context.
\end{proof}

Define
 \begin{align}\label{R3}
    	R^{(3)}(z)=\begin{cases}
          \left( \begin{array}{cc}
    	1&R_j(z) e^{2it\theta(z)}\\0&1
    \end{array} \right),\quad z \in \Omega_j,\;  j=0^\pm,1,2,3,4, \\
    \left(\begin{array}{cc}
    	1&  0\\ {R}^*_j(z)e^{-2it\theta(z)}  &1      \end{array} \right), \quad z \in {\Omega}^*_{j},\;  j=0^\pm,1,2,3,4, \\
    I, \quad {\rm elsewhere},
    	\end{cases}
    \end{align}
and
$$\Sigma^{(4)}= \Sigma \cup \Sigma' \cup  [z_4,z_3]\cup [z_2, z_1].$$
See Figure \ref{jumpm4}.

  Then the new matrix-valued function
    \begin{equation}\label{trans4}
        M^{(4)}(z)=M^{(3)}(z)R^{(3)}(z)
    \end{equation}
 satisfies the   following hybrid  $\bar{\partial}$-RH problem.

\begin{prob2}\label{dbarrhp}
    Find    $M^{(4)}(z)=M^{(4)}(z;x,t)$  with properties
    \begin{itemize}
        \item  $M^{(4)}(z)$ is continuous in $\mathbb{C}\setminus  \Sigma^{(4)} $ and takes continuous boundary values $M^{(4)}_+(z)$ $($respectively  $M^{(4)}_-(z))$ on $\Sigma^{(4)}$ from the left $($respectively right$)$.
        \item $M^{(4)}(z)$ satisfies the  jump condition
        \begin{equation*}
            M^{(4)}_+(z)=M^{(4)}_-(z)V^{(4)}(z), \quad z \in\Sigma^{(4)},
        \end{equation*}
     \end{itemize}
        where
       \begin{align}\label{V3}
       	V^{(4)}(z)=\begin{cases}
           \left( \begin{array}{cc}
    	1& -\gamma(z_j) e^{2it\theta(z)}\\0&1
    \end{array} \right), \  z \in \Sigma_j, \ j=1,2,3,4,\vspace{2mm}\\
    \left(\begin{array}{cc}
    	1&0\\ \overline{\gamma(z_j)}e^{-2it\theta(z)}&1
    \end{array} \right), \  z \in {\Sigma}^*_j,\ j=1,2,3,4,\vspace{2mm}\\
     	\left(	\begin{array}{cc}
     	1& -\gamma(z_2)e^{2it\theta(z) }\\
     	0& 1
     \end{array}\right),\ z\in \Sigma'_{1}, \vspace{2mm}\\
            \left(		\begin{array}{cc}
       	1& 0\\
        \overline{\gamma(z_2)}e^{-2it\theta(z)} & 1
       \end{array}\right),\  z\in {\Sigma}^{*'}_{1},\vspace{2mm}\\
            	\left(	\begin{array}{cc}
     	1& \gamma(z_3)e^{2it\theta(z) }\\
     	0& 1
     \end{array}\right),\ z\in \Sigma'_{2}, \vspace{2mm}\\
            \left(		\begin{array}{cc}
       	1& 0\\
       - \overline{\gamma(z_3)}e^{-2it\theta(z)} & 1
       \end{array}\right),\  z\in {\Sigma}^{*'}_{2},\vspace{2mm}\\
       \delta_-(z)^{-\sigma_3} V(z) \delta_+(z)^{\sigma_3},\quad z\in \mathbb{R}\setminus \Gamma.
       	\end{cases}
       \end{align}
    \begin{itemize}
        \item $ M^{(4)}(z)=I+\mathcal{O}(z^{-1}),	\quad  z \to  \infty.$

        \item For $z\in \mathbb{C}$, we have
        \begin{equation*}
            \bar{\partial}M^{(4)}(z)= M^{(4)}(z) \bar{\partial}R^{(3)}(z),
        \end{equation*}
        where
        \begin{equation}\label{parR2}
            \bar{\partial}R^{(3)}(z)= \begin{cases}
                \left( \begin{array}{cc}
    	0& \bar{\partial}R_j(z)e^{2it\theta(z)}\\0&0
    \end{array} \right), \quad  z \in \Omega_j,\; j=0^\pm,1,2,3,4,\\
    \left(\begin{array}{cc}
    	0&  0\\\bar{\partial} {R}^*_j(z)e^{-2it\theta(z)}&0
    \end{array} \right), \quad z \in \Omega^*_j, \ j=0^\pm,1,2,3,4,\\
      0, \quad \text{elsewhere}.
            	\end{cases}
           \end{equation}

\end{itemize}
    \end{prob2}

Until now we have obtained the hybrid  $\bar{\partial}$-RH problem \ref{dbarrhp} for  $M^{(4)}(z)$  to analyze the long-time asymptotics of the original RH problem \ref{RHP0} for $M(z)$.
Next, we will construct the solution $M^{(4)}(z)$ as follows:
\begin{itemize}
 \item We first remove the  $\bar{\partial}$ component of the solution $M^{(4)}(z)$ and demonstrate the existence of a solution of the resulting pure RH problem. Furthermore, we calculate its asymptotic expansion.
  \item Conjugating off the solution of the first step,  a pure $\bar{\partial}$-problem can be obtained. Then, we establish the existence of a solution to this problem and  bound its magnitude.
 \end{itemize}


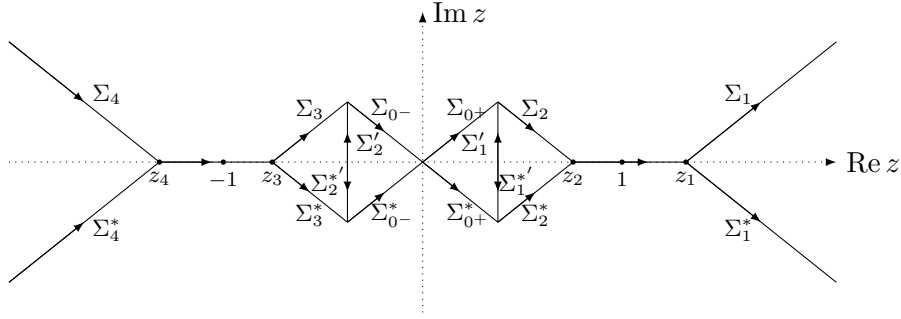
\begin{figure}[htbp]
    \begin{center}
        \begin{tikzpicture}
               \draw[-latex,dotted](-5.5,0)--(5.5,0)node[right]{ \textcolor{black}{$\re z$}};
               \draw[-latex,dotted](0,-2)--(0,2)node[right]{\textcolor{black}{$\im z$}};
               \coordinate (I) at (0,0);
               \coordinate (A) at (-3.5,0);
               \fill (A) circle (1pt) node[below] {\footnotesize $z_4$};
               \coordinate (b) at (-2,0);
               \fill (b) circle (1pt) node[below] {\footnotesize $z_3$};
               \coordinate (e) at (3.5,0);
               \fill (e) circle (1pt) node[below] {\footnotesize $z_1$};
               \coordinate (f) at (2,0);
               \fill (f) circle (1pt) node[below] {\footnotesize $z_2$};
               \draw[](-3.5,0)--(-2,0);
               \draw[-latex](-3.5,0)--(-2.8,0);
                \draw[](3.5,0)--(2,0);
               \draw[-latex](2,0)--(3,0);
               \coordinate (c) at (-2.65,0);
               \fill[] (c) circle (1pt) node[below] {\scriptsize$-1$};
               \coordinate (d) at (2.65,0);
               \fill[] (d) circle (1pt) node[below] {\scriptsize$1$};
               \draw[](0,0)--(1,0.8);
               \draw[-latex](0,0)--(0.5,0.4);
               \draw[](1,0.8)--(2,0);
               \draw[-latex](1,0.8)--(1.5,0.4);

               \draw[](3.5,0)--(5.5, 1.6);
               \draw[-latex](3.5,0)--(4.5,0.8);
               \draw[](0,0)--(1,-0.8);
               \draw[-latex](0,0)--(0.5,-0.4);
               \draw[](1,-0.8)--(2,0);
               \draw[-latex](1,-0.8)--(1.5,-0.4);

               \draw[](3.5,0)--(5.5, -1.6);
               \draw[-latex](3.5,0)--(4.5,-0.8);
               \draw[](0,0)--(-1,0.8);
               \draw[-latex](-1,0.8)--(-0.5,0.4);
               \draw[](-1,0.8)--(-2,0);
               \draw[-latex](-2,0)--(-1.5,0.4);

               \draw[](-3.5,0)--(-5.5, 1.6);
               \draw[latex-](-4.5,0.8)--(-5.5, 1.6);
               \draw[](0,0)--(-1,-0.8);
               \draw[-latex](-1,-0.8)--(-0.5,-0.4);
               \draw[](-1,-0.8)--(-2,0);
               \draw[-latex](-2,0)--(-1.5,-0.4);

               \draw[](-3.5,0)--(-5.5, -1.6);
               \draw[latex-](-4.5,-0.8)--(-5.5, -1.6);

               \draw[] (1,0)--(1,0.8);
                \draw[-latex] (1,0)--(1,0.4);
                 \draw[] (-1,0)--(-1,0.8);
                \draw[-latex] (-1,0)--(-1,0.4);
                \draw[] (1,0)--(1,-0.8);
                \draw[-latex] (1,0)--(1,-0.4);
                 \draw[] (-1,0)--(-1,-0.8);
                \draw[-latex] (-1,0)--(-1,-0.4);
               \node at (4.2,0.9)  {\footnotesize $\Sigma_{1}$};
                 \node at (4.2,-0.9)  {\footnotesize $\Sigma^*_{1}$};
                   \node at (-4.2,0.9)  {\footnotesize $\Sigma_{4}$};
                 \node at (-4.2,-0.9)  {\footnotesize $\Sigma^*_{4}$};
               \node at (1.5,0.7)  {\footnotesize $\Sigma_{2}$};
                 \node at (1.5,-0.7)  {\footnotesize $\Sigma^*_{2}$};
                \node at (-1.5,0.7)  {\footnotesize $\Sigma_{3}$};
                 \node at (-1.5,-0.7)  {\footnotesize $\Sigma^*_{3}$};
                  \node at (0.5,0.73)  {\footnotesize $\Sigma_{0^+}$};
                 \node at (0.6,-0.73)  {\footnotesize $\Sigma^*_{0^+}$};
                \node at (-0.4,0.73)  {\footnotesize $\Sigma_{0^-}$};
                 \node at (-0.4,-0.73)  {\footnotesize $\Sigma^*_{0^-}$};
                  \node at (0.7,0.25)  {\footnotesize $\Sigma'_{1}$};
                 \node at (1.25,-0.25)  {\footnotesize $\Sigma^{*'}_{1}$};
                 \node at (-0.7,0.25)  {\footnotesize $\Sigma'_{2}$};
                 \node at (-1.25,-0.25)  {\footnotesize $\Sigma^{*'}_{2}$};

               \end{tikzpicture}
            \caption{ \footnotesize{ The jump contour $\Sigma^{(4)}$ for  $M^{(4)}(z) $ and  $M^{rhp}(z) $.  }}
      \label{jumpm4}
        \end{center}
    \end{figure}

\subsection{Contribution from a pure RH problem} \label{modi3}
In this subsection, we first consider the  pure RH problem. Dropping the $\bar\partial$ component of $M^{(4)}(z)$,  $M^{rhp}(z)$  satisfies the  following pure RH problem.
\begin{prob}\label{mrhp}
    Find  $M^{rhp}(z)=M^{rhp}(z;x,t)$ which satisfies
	  \begin{itemize}
	  	\item  $M^{rhp}(z)$ is analytic in $\mathbb{C}\setminus \Sigma^{(4)}$. See   Figure \ref{jumpm4}.
	  	\item $M^{rhp}(z)$  satisfies the jump condition
\begin{equation*}
	  		M^{rhp}_+(z)=M^{rhp}_-(z)V^{(4)}(z),
	  	\end{equation*}
	  	where $V^{(4)}(z)$ is given by (\ref{V3}).
	  	\item   $M^{rhp}(z)$  has the same asymptotics with  $M^{(4)}(z)$.

	  \end{itemize}
\end{prob}

Based on the property of $V^{(4)}-I$, we analyze the local model $M^{loc}(z)$ of  $M^{rhp}(z)$ in the neighborhood of $z=\pm 1$.

\subsubsection{Local paramatrix} \label{localp}

Let  $t$  be  large enough so that $\sqrt{2C} (3t)^{-1/3+\tau}<\rho$ where $\tau$ is a constant with $0<\tau<\frac{1}{30}$ and $\rho$ has been defined in (\ref{definerho}).
For a fixed constant  $\varepsilon \leq \sqrt{2C}$, define  two open disks
  \begin{align}
&   \mathcal{U}_{r}  = \{z \in \mathbb{C}: |z-1|< (3t)^{-1/3+\tau}\varepsilon\},\ \ \mathcal{U}_{l }  = \{z \in \mathbb{C}: |z+1|< (3t)^{-1/3+\tau}\varepsilon\}.\nonumber
\end{align}
Denote the local jump contour
\begin{equation*}
  \Sigma^{loc}:= \Sigma^{(4)} \cap \left( \mathcal{U}_r  \cup \mathcal{U}_l \right),
\end{equation*}
as depicted in Figure \ref{scalingzk0}.
The local model  $M^{loc}(z)$  satisfies the following RH problem.

\begin{figure}[htbp]
    \begin{center}
        \begin{tikzpicture}
          \draw[-latex,dotted](-5.5,0)--(5.5,0)node[right]{ \textcolor{black}{$\re z$}};
               \draw[-latex,dotted](0,-2.5)--(0,2.5)node[right]{\textcolor{black}{$\im z$}};
               \coordinate (I) at (0,0);
               \coordinate (A) at (-3.2,0);
               \fill (A) circle (1pt) node[below] {\footnotesize $z_4$};
               \coordinate (b) at (-2.2,0);
               \fill (b) circle (1pt) node[below] {\footnotesize $z_3$};
               \coordinate (e) at (3.2,0);
               \fill (e) circle (1pt) node[below] {\footnotesize $z_1$};
               \coordinate (f) at (2.2,0);
               \fill (f) circle (1pt) node[below] {\footnotesize $z_2$};

               \coordinate (c) at (-2.7,0);
               \fill[] (c) circle (1pt) node[below] {\scriptsize$-1$};
               \coordinate (d) at (2.7,0);
               \fill[] (d) circle (1pt) node[below] {\scriptsize$1$};

               \draw[](-3.2,0)--(-2.2,0);
               \draw[-latex](-3.2,0)--(-2.8,0);
               \draw[](3.2,0)--(2.2,0);
               \draw[-latex](2.2,0)--(3,0);

               \draw[](1.6,0.48)--(2.2,0);
               \draw[-latex](1.8,0.32)--(2,0.16);
               \draw[](1.6,-0.48)--(2.2,0);
               \draw[-latex](1.6,-0.48)--(2,-0.16);
               \draw[](-1.6,0.48)--(-2.2,0);
               \draw[-latex](-2.2,0)--(-1.8,0.32);
               \draw[](-1.6,-0.48)--(-2.2,0);
               \draw[-latex](-2.2,0)--(-1.8,-0.32);

               \draw[](3.2,0)--(3.8,0.49);
               \draw[-latex](3.2,0)--(3.6,0.32);
               \draw[](3.2,0)--(3.8,-0.49);
               \draw[-latex](3.2,0)--(3.6,-0.32);

               \draw[](-3.2,0)--(-3.8,0.49);
               \draw[-latex](-3.8,0.49)--(-3.4,0.16);
               \draw[](-3.2,0)--(-3.8,-0.49);
               \draw[-latex](-3.8,-0.49)--(-3.4,-0.16);

               \node at (4,0.7)  {\footnotesize $\Sigma_{1}$};
                 \node at (4,-0.7)  {\footnotesize $\Sigma^*_{1}$};
                   \node at (-4,0.7)  {\footnotesize $\Sigma_{4}$};
                 \node at (-4,-0.7)  {\footnotesize $\Sigma^*_{4}$};
               \node at (1.4,0.7)  {\footnotesize $\Sigma_{2}$};
                 \node at (1.4,-0.7)  {\footnotesize $\Sigma^*_{2}$};
                \node at (-1.4,0.7)  {\footnotesize $\Sigma_{3}$};
                 \node at (-1.4,-0.7)  {\footnotesize $\Sigma^*_{3}$};

              \draw[dotted] (2.7,1.2) arc (450:90:1.2);
              \draw[dotted] (-2.7,1.2) arc (450:90:1.2);
              \node[]  at (2.7, 1.5) {\footnotesize $\partial \mathcal{U}_{r}$};
                \node[]  at (-2.7, 1.5) {\footnotesize $\partial \mathcal{U}_{l}$};


               \end{tikzpicture}
            \caption{ \footnotesize{ The jump contour $\Sigma^{loc}$ for $M^{loc}(z)$.  }}
      \label{scalingzk0}
        \end{center}
    \end{figure}
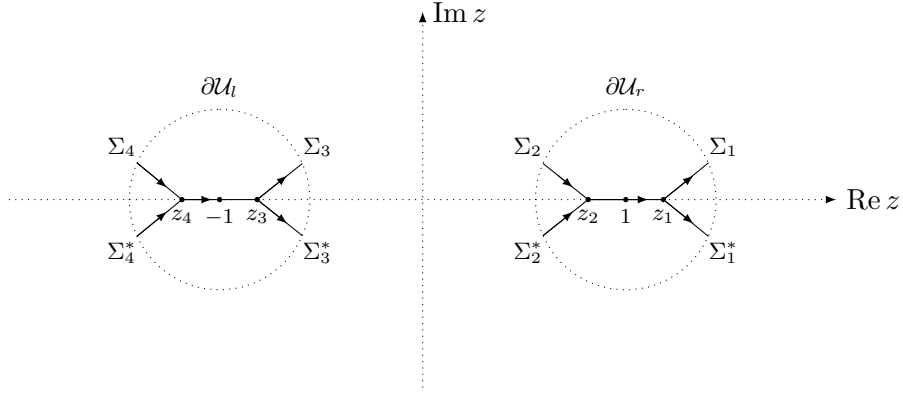

\begin{prob}\label{mloc}
    Find  $M^{loc}(z)=M^{loc}(z;x,t)$ which satisfies
	  \begin{itemize}
	  	\item  $M^{loc}(z)$ is analytic in $\mathbb{C}\setminus \Sigma^{loc}$.
	  	\item $M^{loc}(z)$  satisfies the jump condition
\begin{equation*}
	  		M^{loc}_+(z)=M^{loc}_-(z)V^{loc}(z), \ \ z \in \Sigma^{loc},
	  	\end{equation*}
	  	where $V^{loc}(z) := V^{(4)}(z)|_{ z \in \Sigma^{loc}}$.
	  	\item   $M^{loc}(z)$  has the same asymptotics with  $M^{rhp}(z)$.

	  \end{itemize}
\end{prob}

Based on the theorem of Beals-Coifman, we know as $t \to \infty$, the solution  $M^{loc}(z)$ is approximated by the sum of the separate local model in the neighborhood of $1$ and $-1$ respectively.

\begin{prob}\label{mr}
    Find  $ M^{j}(z)=M^{j}(z;x,t)$, $j\in \{r,l\}$ with properties
	  \begin{itemize}
	  	\item  $M^{j}(z)$ is analytic in $\mathbb{C} \setminus  {\Sigma}_{j}$  where  $\Sigma_{j}:= \Sigma^{(4)} \cap \mathcal{U}_{j}, \, j\in\{r,l\}$,
	  	\item  $M^{j}(z)$  satisfies the jump condition
\begin{equation*}
	  		M^{j}_+(z)=M^{j}_-(z) {V}^{j}(z), \ z\in  {\Sigma}_{j},
\end{equation*}
where ${V}^{j}(z)=V^{(4)}(z)|_{z\in \Sigma_j}, \, j \in \{r,l \}$.
\item As $z\to \infty$ in $\mathbb{C} \setminus  {\Sigma}_{j}$, $M^{j}(z)=I+\mathcal{O}(z^{-1})$.
\end{itemize}

\end{prob}

In the region $ -C < (\xi+6) t^{2/3}< 0$ with $C>0$,  we notice that
 $ \xi \to -6^-$ as $t \to  \infty$.
 From
(\ref{eq219}),  this leads to the coalescence of saddle points:
 $z_1$ and $z_2$  merge to $z=1$, while $z_3$ and $z_4$ collide at  $z=-1$.
The phase faction $t \theta(z)$ can be approximated with the help of scaled spectral variables:
\begin{itemize}
\item For $z$ close to  $1$,
\begin{equation}\label{tthe1}
 t \theta(z)  =  4t (z-1)^3 +(6t+x)(z-1) +A(z),
\end{equation}
where
\begin{equation}\label{tthea}
A(z) =  \frac{1}{2} \left( 3t(z-1)^2 -7t(z-1)^3 +  \sum_{n=2}^\infty  (-1)^{n+1}  \left( x+ \frac{n^2+3n+8}{2} t\right)  (z-1)^n  \right).
\end{equation}
Observing the characteristics of the above expansion,  we introduce  the following scaled spectral variables to match with the coefficients of the exponential terms in the Painlev\'{e} \uppercase\expandafter{\romannumeral2} model RH problem, as defined in \ref{appx}:

Define $s$ be the space-time parameter and  $\hat{k}$  be the scaled parameter
\begin{equation}\label{scaled1}
s = \frac{1}{3}(\xi+6)(3t)^{2/3}, \quad  \hat{k} = (3t)^{1/3} (z-1),
\end{equation}
then it can be proven that $A(z)$ converges and $A(z)=\mathcal{O}(t^{-1/3} \hat{k}^4)$.
Therefore, (\ref{tthe1}) becomes
\begin{equation}
t \theta(z) = \frac{4}{3} \hat{k}^3 + s \hat{k} + \mathcal{O}(t^{-1/3} \hat{k}^4).\label{asymn1}
\end{equation}

\item For $z$ close to  $-1$,
\begin{equation}\label{tthe2}
 t \theta(z)  =   4t(z+1)^3+ (6t+x)(z+1) +B(z),
\end{equation}
where
\begin{equation}
B(z) =  \frac{1}{2} \left( -3t(z+1)^2 -7t(z+1)^3 +  \sum_{n=2}^\infty \left( x+ \frac{n^2+3n+8}{2} t\right)  (z+1)^n  \right).
\end{equation}
Similar to the case for $z \to 1$, the space-time parameter $s$ is defined as  (\ref{scaled1}) and we define the scaled parameter $\check{k}$ as
\begin{equation}
  \check{k} = (3t)^{1/3} (z+1). \label{scaled2}
\end{equation}
Then, $B(z)$ converges and $B(z)= \mathcal{O}(t^{-1/3}\check{k}^4)$.
Therefore, (\ref{tthe2}) becomes
\begin{equation}
 t \theta(z)  =    \frac{4}{3} \check{k}^3 + s \check{k} + \mathcal{O}(t^{-1/3}\check{k}^4).\label{asymn2}
\end{equation}

\end{itemize}


\begin{remark}
 For the case of  the mKdV equation \eqref{dmkdv}   with  ZBCs,    the phase function is
\begin{equation}\label{phazbc}
t\theta(z)=4tz^3+xz.
\end{equation}
We carry out the following scaling:
$$ z \to k t^{-1/3},$$
 and (\ref{phazbc}) becomes
 \begin{equation*}
 t\theta(k t^{-1/3})=4k^3+xk t^{-1/3} = 3( \frac{4}{3}k^3 +sk),
 \end{equation*}
 where $s = \frac{1}{3}\xi t^{-2/3}$ with $\xi = x/t$. This indicates that, under this condition, the coefficients of the exponential terms in the local model can exactly match those of  the
   Painlev\'e II model RH problem.
However, in  the   case  of NZBCs  \eqref{dmkdv}-\eqref{bdries},  the phase function can only approximate  the exponential term coefficients of the Painlev\'e II  model RH problem with an error of $\mathcal{O}(t^{-1/3}k^4)$.

\end{remark}

Next we define two open disks associated with the scaled parameters $\hat{k} $ and $\check{k}$
  \begin{equation*}
 \widehat{\mathcal{U}}_r = \{\hat{k} \in \mathbb{C}: |\hat{k}|< (3t)^{\tau}\varepsilon \},\ \  \widecheck{\mathcal{U}}_l = \{\check{k} \in \mathbb{C}: |\check{k}|< (3t)^{\tau}\varepsilon \},
\end{equation*}
whose   boundaries are  oriented counterclockwise. Then  the transformation given by (\ref{scaled1})
defines a map $z \mapsto \hat{k}$, which maps  $ \mathcal{U}_{r}$  onto  $\widehat{ \mathcal{U}}_r$  in the $\hat{k}$-plane,
while the transformation given by (\ref{scaled2}) maps  $ \mathcal{U}_{l}$  onto $\widecheck{\mathcal{U}}_l$ in the $\check{k}$-plane.

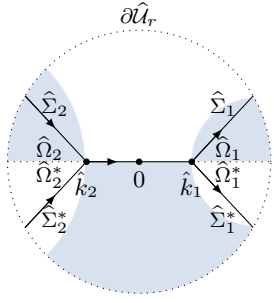
\begin{figure}
 \begin{center}
   \begin{tikzpicture}[scale=0.7]
    \draw[blue!10, fill=LightSteelBlue!50] (5,0) arc [start angle=180, end angle=90, radius =1.21]--(6.1651,1.25) arc [start angle=30, end angle=0, radius =2.5]--(6,0) --(5,0);

     \draw[blue!10, fill=LightSteelBlue!50] (1.5,0) arc [start angle=180, end angle=135, radius =2.5]--(2.23,1.77) arc [start angle=45, end angle=0, radius =2.5]--(3,0) --(1.5,0);
     \draw[blue!10, fill=LightSteelBlue!50] (1.5,0) arc [start angle=180, end angle=360, radius =2.5]--(6.5,0);

      \draw[white, fill=white] (5,0) arc [start angle=180, end angle=270, radius =1.21]--(6.1651,-1.25) arc [start angle=330, end angle=360, radius =2.5]--(6,0) --(5,0);
     \draw[white, fill=white] (1.5,0) arc [start angle=180, end angle=225, radius =2.5]--(2.23,-1.77) arc [start angle=315, end angle=360, radius =2.5]--(3,0) --(1.5,0);
      \draw[dotted                           ](4,0) circle (2.5);
        \draw[](1.8349,1.25)--(3,0);
        \draw [-latex] (1.8349,1.25)--(2.41745,0.625);
        \draw[](1.8349,-1.25)--(3,0);
        \draw [-latex] (1.8349,-1.25)--(2.41745,-0.625);
        \draw[](3,0)--(5,0);
        \draw[-latex](3,0)--(3.6,0);
        \node[shape=circle,fill=black, scale=0.13]  at (3,0){0};
        \node[below] at (3,0) {\footnotesize $\hat{k}_2$};
        \node[shape=circle,fill=black, scale=0.13]  at (4,0){0};
        \node[below] at (4,0) {\footnotesize $0$};
        \node[shape=circle,fill=black, scale=0.13]  at (5,0){0};
        \node[below] at (5,0) {\footnotesize $\hat{k}_1$};
        \draw[](5,0)--(6.1651,1.25);
        \draw [-latex] (5,0)--(5.58255,0.625);
        \draw[](5,0)--(6.1651,-1.25);
        \draw [-latex] (5,0)--(5.58255,-0.625);
        \draw [dotted] (-0.5, 0)--(3, 0);
        \draw [-latex,dotted] (5, 0)--(8.5, 0) node[right]{ \textcolor{black}{$\re \hat k$}};
           \draw [-latex,dotted] (4, -3)--(4, 4) node[right]{ \textcolor{black}{$\im \hat k$}};
        \node  at (2.4,1.1) {\scriptsize $\widehat{\Sigma}_{2}$};
        \node  at (2.4,-1.1) {\scriptsize $\widehat{\Sigma}^*_{2}$};
        \node  at (5.6,1.1) {\scriptsize $\widehat{\Sigma}_{1}$};
        \node  at (5.6,-1.1) {\scriptsize $\widehat{\Sigma}^*_{1}$};
        \node  at (2.3,0.25) {\scriptsize $\widehat{\Omega}_{2}$};
        \node  at (2.3,-0.25) {\scriptsize $\widehat{\Omega}^*_{2}$};
        \node  at (5.7,0.25) {\scriptsize $\widehat{\Omega}_{1}$};
        \node  at (5.7,-0.25) {\scriptsize $\widehat{\Omega}^*_{1}$};
        \node[]  at (3.2,2.8) {\scriptsize $\partial \widehat{\mathcal{U}}_{r}$};

    \end{tikzpicture}
    \caption{\footnotesize{The jump contour $\widehat{\Sigma}_r$ for $M^r(\hat{k})$ on  $\widehat{\mathcal{U}}_r$.  }  }
      \label{scalingzk}
  \end{center}
\end{figure}

First, we construct the local parametrix $M^r(z)$.
Define the contour $\widehat{\Sigma}_r$ in the $\hat{k}$-plane
\begin{equation*}
\widehat{\Sigma}_r := \bigcup_{j=1,2} \left( \widehat{\Sigma}_j \cup \widehat{\Sigma}^*_j \right)  \cup (\hat{k}_2,\hat{k}_1),
\end{equation*}
which corresponds to  $\Sigma_r$ after scaling $z$ to the scaled parameter $\hat{k}$. The corresponding regions in the $\hat{k}$-plane can be  seen in  Figure \ref{scalingzk}.
Correspondingly, the saddle points $z_1$ and $z_2$ in the $z$-plane are rescaled to $\hat{k}_1$ and $\hat{k}_2$ respectively in  the $\hat{k}$-plane with $\hat{k}_j = (3t)^{1/3}(z_j-1),\, j=1,2$.
Moreover, (\ref{scaled1}) also reveals that
\begin{equation*}
z = (3t)^{-1/3} \hat{k}+1.
\end{equation*}
Therefore, the jump matrix $V^r(z)$ transforms to  the following $V^r(\hat{k})$ in the $\hat{k}$-plane
\begin{align*}
        V^{r}(\hat{k})= \begin{cases}
       e^{it\theta\left( (3t)^{-1/3} \hat{k}+1\right)\widehat\sigma_3  }   \left( \begin{array}{cc}
       		1& -\gamma(z_j)\\
       		 0   & 1
       	\end{array}\right),\  \hat{k} \in \widehat{\Sigma}_{j}, \ j=1,2, \\
  e^{it\theta\left( (3t)^{-1/3} \hat{k}+1\right)\widehat\sigma_3  }  \left(	\begin{array}{cc}
       			1& 0\\
       			\overline{\gamma(z_j)} & 1
       		\end{array}\right) ,\ \hat{k} \in \widehat{\Sigma}^*_{j}, \ j=1,2,\\
 \delta_-\left( (3t)^{-1/3} \hat{k}+1\right)^{-\sigma_3} V\left( (3t)^{-1/3} \hat{k}+1\right)\delta_+\left( (3t)^{-1/3} \hat{k}+1\right)^{\sigma_3},  \   \hat{k}\in [\hat{k}_2,  \hat{k}_1].
       	\end{cases}
       \end{align*}

In the generic case, $z_j \to 1$ and $r(z_j) \to -i$ for $j=1,2$ as $t \to \infty$, which causes the appearance of the singularity of $\frac{r(z_j)}{1-|r(z_j)|^2}$. However, this singularity can be balanced by the factor $\delta(z)^{-2}$.
 Define a cutoff function $\chi(z)\in C_0^\infty(\mathbb{R},[0,1])$
satisfying
\begin{equation}
\chi (z)=1, \ z\in  \mathbb{R}\cap \mathcal{U}_r,
\end{equation}
and  a new  reflection coefficient  $\tilde{r}(z)$  satisfying
\begin{equation}
\tilde{r}(z)=(1-\chi (z)) \frac{\overline{r(z)} }{1-|r(z)|^2}  \delta _+( z)^{-2}+\chi (z)  f(z)  h(z)^2,
\label{Rz2}
\end{equation}
where
$$f(z) := \frac{ \overline{S_{21} (z)} }{S_{11} (z)},  \ \ \  h(z):=\frac{ a(z) }{\delta_+(z)}, $$
and $S_{21}(z)$ and  $S_{11}(z)$ are defined by (\ref{S21}) and (\ref{S11}) respectively,
while in the non-generic case,
\begin{equation}
\tilde{r}(z)= \frac{\overline{r(z)} }{1-|r(z)|^2}  \delta_+ ( z)^{-2}.
\label{Rz22}
\end{equation}
Moreover, we have $\tilde{r}(z_j)=\gamma(z_j)$ as $z_j \to 1, \ j=1,2$.

Next we will show that  in  $\widehat{\mathcal{U}}_{r}$,   RH problem for $M^r(\hat{k})$ can be explicitly approximated by the following model RH problem for  $\widehat{M}^{r}(\hat{k})$, and then prove the solution $\widehat{M}^{r}(\hat{k})$ is associated to the Painlev\'e II equation.
\begin{prob}
    Find  $ \widehat{M}^{r}(\hat{k})=\widehat{M}^{r}(\hat{k};x,t)$ with properties
	  \begin{itemize}
	  	\item  $\widehat{M}^{r}(\hat{k})$ is analytic in $\mathbb{C} \setminus  \widehat{\Sigma}_{r}$.
	  	\item  $\widehat{M}^{r}(\hat{k})$  satisfies the jump condition
\begin{equation*}
	  		\widehat{M}^{r}_+(\hat{k})=\widehat{M}^{r}_-(\hat{k}) \widehat{V}^{r}(\hat{k}), \ \hat{k}\in  \widehat{\Sigma}_{r},
\end{equation*}
\end{itemize}
where
\begin{align*}
        \widehat{V}^{r}(\hat{k})= \begin{cases}
       e^{i \left( \frac{4}{3} \hat{k}^3 +  s \hat{k} \right)\widehat\sigma_3  }   \left( \begin{array}{cc}
       		1& -\tilde{r}(1)\\
       		 0   & 1
       	\end{array}\right):=b_+^{-1},\  \hat{k}\in \widehat{\Sigma}_{j}, \ j=1,2, \\
  e^{i \left( \frac{4}{3} \hat{k}^3 +  s \hat{k} \right)\widehat\sigma_3  }  \left(	\begin{array}{cc}
       			1& 0\\
       			\overline{\tilde{r}(1)} & 1
       		\end{array}\right):=b_- ,\ \hat{k} \in \widehat{\Sigma}^*_{j}, \ j=1,2,\\
b_-b_+^{-1},  \   \hat{k}\in [\hat{k}_2,  \hat{k}_1].
       	\end{cases}
       \end{align*}

	  \begin{itemize}
         \item   $\widehat{M}^{r}(\hat{k})\to I, \ \hat{k}\to \infty$.
	  \end{itemize}
\end{prob}

Define $N(\hat{k}):=M^r(\hat{k}) (\widehat{M}^r(\hat{k}))^{-1}$ which satisfies the following RH problem.
\begin{prob}
 Find  $ N(\hat{k})=N(\hat{k};x,t)$  such that
 \begin{itemize}
   \item $N(\hat{k})$ is analytic in $\mathbb{C} \setminus  \widehat{\Sigma}_{r}$.
   \item  $N(\hat{k})$ satisfies the following jump condition
   \begin{equation*}
     N_+(\hat{k}) = N_-(\hat{k})V^N(\hat{k}),
   \end{equation*}
   where
   \begin{equation*}
   V^N(\hat{k}) = \widehat{M}_-^r(\hat{k}) V^r(\hat{k}) \widehat{V}^r(\hat{k})^{-1} \widehat{M}_-^r(\hat{k})^{-1}.
   \end{equation*}
 \end{itemize}
\end{prob}

\begin{proposition}\label{locpain}
As $t \to \infty$, $N(\hat{k})$ exists and satisfies
\begin{equation}\label{esN}
 N(\hat{k}) = I + \mathcal{O}(t^{-\frac{1}{3}+4\tau}),
\end{equation}
where $\tau$ is a constant with $0<\tau<\frac{1}{30}$.
\end{proposition}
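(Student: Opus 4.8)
The plan is to view $N(\hat k)$ as the solution of a \emph{small-norm} Riemann--Hilbert problem and to deduce \eqref{esN} from the Beals--Coifman theory. Since $M^r$ and $\widehat M^r$ solve RH problems on the \emph{same} contour $\widehat\Sigma_r$ with the same normalization $I$ at infinity, the quotient $N=M^r(\widehat M^r)^{-1}$ is analytic in $\mathbb C\setminus\widehat\Sigma_r$, tends to $I$ at infinity, and carries the jump
\begin{equation*}
V^N=\widehat M^r_-\,\bigl(V^r(\widehat V^r)^{-1}\bigr)\,(\widehat M^r_-)^{-1}.
\end{equation*}
By the standard theory, once one shows $\|V^N-I\|_{L^1\cap L^2\cap L^\infty(\widehat\Sigma_r)}=\mathcal O(t^{-1/3+2\epsilon})$, the singular integral operator $I-C_{V^N}$ is invertible for all large $t$, the density $\mu=(I-C_{V^N})^{-1}(C_{V^N}I)$ obeys $\|\mu\|_{L^2}\lesssim\|V^N-I\|_{L^2}$, and the Cauchy-integral reconstruction of $N$ yields \eqref{esN} uniformly. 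Thus the proposition reduces to a single jump estimate.

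First I would control the conjugating factors. The model $\widehat M^r$ is exactly the Painlev\'e II parametrix of \ref{appx}, whose jump data are the \emph{constant} coefficient $\tilde r(1)$ and the cubic phase $\tfrac43\hat k^3+s\hat k$. Because $-C<(\xi+6)t^{2/3}<0$ forces $s=\tfrac13(\xi+6)(3t)^{2/3}$ to lie in a fixed compact interval on which the relevant Painlev\'e II transcendent is pole-free, the solution $\widehat M^r$ exists, is uniformly bounded, and satisfies $\widehat M^r\to I$ as $\hat k\to\infty$. Hence $\|\widehat M^r_\pm\|_{L^\infty(\widehat\Sigma_r)}$ and $\|(\widehat M^r_\pm)^{-1}\|_{L^\infty(\widehat\Sigma_r)}$ are $\mathcal O(1)$, and $V^N-I$ is controlled pointwise by $V^r(\widehat V^r)^{-1}-I$.

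The heart of the argument is the estimate of $V^r(\widehat V^r)^{-1}-I$. On each ray $\widehat\Sigma_j$, $j=1,2$, a direct multiplication produces an upper-triangular matrix whose only nontrivial entry is $e^{2i\phi}\bigl(\tilde r(1)-\tilde r(z_j)e^{2i\delta}\bigr)$, where $\phi=\tfrac43\hat k^3+s\hat k$ and $\delta:=t\theta\bigl((3t)^{-1/3}\hat k+1\bigr)-\phi$; here I use $\gamma(z_j)=\tilde r(z_j)$ coming from \eqref{Rz2}. Separating the two error sources,
\begin{equation*}
\bigl|\tilde r(1)-\tilde r(z_j)e^{2i\delta}\bigr|\le\bigl|\tilde r(1)-\tilde r(z_j)\bigr|+|\tilde r(z_j)|\,\bigl|1-e^{2i\delta}\bigr|,
\end{equation*}
the smoothness of $\tilde r$ near $z=1$ (from $q_0\in\tanh(x)+H^{4,4}(\mathbb R)$) together with $|z_j-1|=\mathcal O(t^{-1/3})$ gives a coefficient error $\mathcal O(t^{-1/3})$, while the phase expansion \eqref{asymn1} gives $\delta=\mathcal O(t^{-1/3}\hat k^2)$, so on the scaled disk $|\hat k|\le(3t)^\epsilon\varepsilon$ the phase error is $|1-e^{2i\delta}|=\mathcal O(t^{-1/3}|\hat k|^2)=\mathcal O(t^{-1/3+2\epsilon})$, which dominates. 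The orientation of the rays ensures $\re(2i\phi)\le0$, so $|e^{2i\phi}|\le1$ and in fact decays like a Gaussian near $\hat k_j$ and cubically for large $|\hat k|$; integrating the pointwise bound against this decay keeps all three norms of order $\mathcal O(t^{-1/3+2\epsilon})$. The middle segment $[\hat k_2,\hat k_1]$, of length $\mathcal O(1)$, is handled identically by comparing the $\delta$-conjugated jump with $b_-b_+^{-1}$.

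Combining these estimates gives $\|V^N-I\|=\mathcal O(t^{-1/3+2\epsilon})$, which tends to $0$ since $0<\epsilon<\tfrac19$, and the small-norm machinery then yields \eqref{esN}. The main obstacle is the phase estimate: establishing $\delta=A\bigl((3t)^{-1/3}\hat k+1\bigr)=\mathcal O(t^{-1/3}\hat k^2)$ \emph{uniformly} on the disk (convergence of the rescaled series \eqref{tthea}) and then balancing the polynomial growth $|\hat k|^2$ against the exponential decay of $e^{2i\phi}$ on a contour whose length grows like $(3t)^\epsilon$, so that the sharp order $t^{-1/3+2\epsilon}$ survives in $L^1$ and $L^2$ and not merely in $L^\infty$. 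A secondary point requiring care is the uniform boundedness, i.e. pole-freeness, of the Painlev\'e II parametrix $\widehat M^r$ for $s$ in the transition interval.
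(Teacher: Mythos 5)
Your proposal is correct and takes essentially the same route as the paper: reduce to a small-norm RH problem for $N(\hat k)$, use the boundedness of the Painlev\'e II parametrix $\widehat M^r$ (the paper cites \eqref{pre1}, \eqref{mr} and \eqref{mPbounded}), split the jump discrepancy into a coefficient error $|\tilde r(z_j)-\tilde r(1)|$ plus a phase error $|e^{2i\delta}-1|$ with $\delta=\mathcal O(t^{-1/3}\hat k^2)$ dominating at order $t^{-1/3+2\epsilon}$, and verify the sign structure $\re(2i\phi)\le 0$ on the rays (the paper's $w(z_1)$ monotonicity argument) before invoking Beals--Coifman/small-norm theory. The only cosmetic difference is that the paper bounds the coefficient error through the explicit factorization $\tilde r=f h^2$, estimating $|f(z)-f(1)|$ by a H\"older-$\tfrac12$ argument and $|h(z)-h(1)|$ via the trace-formula product, whereas you invoke smoothness of $\tilde r$ directly; both contributions are subdominant, so the conclusion is unaffected.
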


\begin{proof}
 Suppose that $\widehat{M}^r(\hat{k})$ is bounded, which we will show in (\ref{pre1}) and (\ref{mrt}), we only need to estimate the error between $V^r(\hat{k})$ and $\widehat{V}^r(\hat{k})$.
 For $\hat{k} \in \left(\hat{k}_2, \hat{k}_1\right)$, $\left|  e^{2i t   \theta (z) } \right|=\left| e^{  i(\frac{8}{3} \hat{k}^3+ 2s \hat{k})  } \right|=1$.
 Direct calculations show that
\begin{align}
&\left| \tilde{r}(z)   e^{2i t   \theta \left(z\right) } - \tilde{r}(1)  e^{  i(\frac{8}{3} \hat{k}^3+ 2s \hat{k})  }    \right| \nonumber\\
&\leq \left| \tilde{r}(z) - \tilde{r}(1) \right|
+ \left|   \tilde{r}(1) \right|       \left|   e^{2i t   \theta \left(z\right) }-   e^{  i(\frac{8}{3} \hat{k}^3+ 2s \hat{k})  } \right|\nonumber\\
& \leq \left|h(z)\right|^2\left|f(z)-f(1) \right| +\left|f(1)\right| \left|h^2(z)-h^2(1) \right|\nonumber\\
 &+ \left|   \tilde{r}(1) \right|       \left|   e^{2i t   \theta \left(z\right) }-   e^{  i(\frac{8}{3} \hat{k}^3+ 2s \hat{k})  } \right|.\label{poe1}
 \end{align}
Following the idea of Proposition 3.2 in \cite{zx1},  with the H\"{o}lder  inequality, we have
  \begin{align}
& \left| f(z) - f(1) \right|  \leq  \|r \|_{H^1(\mathbb{R}) } |z-1|^{\frac{1}{2}}     t^{-\frac{1}{6}}\lesssim t^{-\frac{1}{3}+\frac{\tau}{2}}, \label{poe2}\\
&  \left|   e^{i\mathcal{O}(t^{-\frac{1}{3}} \hat{k}^4)}-   1  \right| \leq e^{|\mathcal{O}(t^{-\frac{1}{3}} \hat{k}^4)|}- 1    \lesssim t^{-\frac{1}{3}+4\tau}. \label{poe3}
 \end{align}
 From (\ref{traceformula}) and (\ref{delta}), it is straightforward to check that
 \begin{equation}
 h(z) = \prod_{n=1}^{2N}\frac{z-\eta_n}{z-\bar{\eta}_n},
 \end{equation}
 then,
  \begin{equation}\label{hzh1}
 |h(z)-h(1)| \leq \left| \prod_{n=1}^{2N}\frac{1-\eta_n}{1-\bar{\eta}_n}\right| \left|  \prod_{n=1}^{2N}\frac{z-\eta_n}{z-\bar{\eta}_n} \frac{1-\bar{\eta}_n}{1-\eta_n}-1\right| \lesssim t^{-\frac{2}{3} + 2\tau}.
  \end{equation}
 Substituting (\ref{poe2}), (\ref{poe3}), and (\ref{hzh1}) into (\ref{poe1}),  we obtain
 \begin{equation}\label{esr1}
 \Big| \tilde{r}(z)  e^{2it\theta \left(z\right)}- \tilde{r}(1)   e^{i \left( \frac{8}{3} \hat{k}^3 + 2 s \hat{k} \right)}  \Big|   \lesssim t^{-\frac{1}{3}+4\tau}.
 \end{equation}
 For $\hat{k} \in  \widehat{\Sigma}_{1}$,  denote $\hat{k} = \hat{k}_1 +\hat{u}+i\hat{v}$ and $z = z_1 + u+iv$.
 To prove that $\re i\left(\frac{8}{3}\hat{k}^3 +2s \hat{k}\right) \le - \frac{16}{3}\hat{u}^2 \hat{v}$ holds, we only need to prove the following inequality holds
 \begin{equation}
 \re i\left(8t(z-1)^3 +2(x+6t)(z-1)\right) \le - 16 u^2v.
 \end{equation}
From (\ref{xixi1}), it is easy to infer that
 \begin{equation}
 \re i\left(8t(z-1)^3 +2(x+6t)(z-1)\right) \le -16tu^2v -2tv w(z_1),
  \end{equation}
where
\begin{equation}
w(z_1) =12(z_1-1)^2+24(z_1-1)u-\frac{3 \left( z_1^4+1 \right)}{z_1^2}+6.
\end{equation}
Since $w'(z_1) \ge 0$ on the interval $[1,+\infty)$, $w(z_1) \ge w(1)=0$ and then
 \begin{equation}
 \re i\left(8t(z-1)^3 +2(x+6t)(z-1)\right) \le -16tu^2v.
  \end{equation}
Therefore, $ \left| e^{i \left( \frac{8}{3} \hat{k}^3 + 2 s \hat{k} \right)}\right|$ is bounded.
Similarly to the case on the real axis, we can obtain
\begin{equation}\label{esrz}
\Big|\tilde{r}(z_j)   e^{2it\theta \left(z\right)}-\tilde{r}(1) e^{i \left( \frac{8}{3} \hat{k}^3 + 2 s \hat{k} \right) }  \Big|   \lesssim t^{-\frac{1}{3}+4\tau},    \ \hat{k} \in  \widehat{\Sigma}_{1}.
\end{equation}
The estimate on other jump contours can be given in a similar way. (\ref{esr1}) and (\ref{esrz}) implies that $\Vert V^N-I \Vert_{L^1 \cap L^2 \cap L^\infty} \lesssim t^{-\frac{1}{3}+4\tau}$ uniformly. Therefore, the existence and uniqueness of $N(\hat{k})$ can be proven by the  theorem of the small-norm RH problem \cite{PX3}, which also yields (\ref{esN}).

\end{proof}

Therefore, the solution  $ \widehat{M}^{r}(\hat{k})$ is crucial to our analysis.  Next we show it is related with the Painlev\'{e} II equation via an  appropriately equivalent  deformation.
For this purpose, we add four auxiliary lines $L_j,\, j=1,2,3,4$  passing through the point $\hat{k}=0$
at  the angle  $\pi/3$  with real axis,
which together with the original contour  $\widehat{\Sigma}_r$
 divide the complex plane into 10 regions $\widetilde{\Omega}_j,\, j=1,\cdots,6$ and $\widehat{\Omega}_j \cup \widehat{\Omega}^*_j,\, j=1,2$. See Figure \ref{desc45}.

We further define
\begin{align}
&P(\hat{k}) =\left\{\begin{matrix}
b_+,\  \ & \hat{k}\in \widetilde{\Omega}_1\cup\widetilde{\Omega}_3,\cr
b_-,\  \ & \hat{k}\in \widetilde{\Omega}_4\cup\widetilde{\Omega}_6,\cr
I,\  \ & \text{elsewhere},
\end{matrix}\right. \nonumber
\end{align}
and  make a transformation
\begin{equation}
\widetilde M^{r} (\hat{k})=\widehat M^{r} (\hat{k})P(\hat{k}),\label{pre1}
\end{equation}
then we obtain the following RH problem.

\begin{prob}\label{mp2}
    Find   $\widetilde M^{r} (\hat{k})=\widetilde M^{r} (\hat{k};s )$ with properties
    \begin{itemize}
        \item  $\widetilde M^{r} (\hat{k})$ is analytic in $\mathbb{C} \setminus  \widetilde \Sigma^{P} $, where $\widetilde \Sigma^{P}=\cup_{j=1}^4 L_{j}. $ See  Figure  \ref{desc45}.
        \item  $\widetilde M^{r}( \hat{k})$ satisfies the  jump condition \begin{equation*}
       \widetilde M^{r}_+( \hat{k})=\widetilde M^{r}_-(\hat{k}) \widetilde V^{P}(\hat{k}), \ \ \hat{k}\in \widetilde \Sigma^{P},
        \end{equation*}
        where
       \begin{align}\label{vp}
   \widetilde V^{P}(\hat{k})= \begin{cases}
               b_+^{-1},\quad \hat{k}\in L_{1}\cup L_{2},\\
               b_-, \quad \hat{k} \in L_{3} \cup L_{4}.
              	\end{cases}
       \end{align}

        \item   $\widetilde{M}^{r}(\hat{k})\to I, \ \hat{k}\to \infty$.

\end{itemize}
    \end{prob}

\begin{figure}
\begin{center}
\begin{tikzpicture}\label{Fig3}

\draw [](-6.5,0)--(-5,0);
\draw [](-0.5,0)--(-2,0);
\draw [ ](-5,0)--(-2,0);
 \draw [    ](-5,0)--(-6.5,1.5);
 \draw [   -latex](-5,0)--(-4,0);
\draw [  -latex](-4,0)--(-2.7,0);
  \draw [  -latex ] (-6.5,1.5)--(-5.75,1.5/2);
 \draw [  ](-5,0)--(-6.5,-1.5);
   \draw [  -latex ] (-6.5,-1.5)--(-5.75,-1.5/2);
  \draw [  ](-2,0)--(-0.5,1.5 );
     \draw [  -latex ] (-2,0)--(-1.1, 0.9);
 \draw[  ](-2,0)--(-0.5,-1.5);
      \draw [  -latex ] (-2,0)--(-1.1,-0.9);

 \draw[dotted ](-4.5,-1)--(-2.5, 1 );
 \draw[dotted,-latex ](-4.5,-1)--(-4, -0.5 );
 \draw[dotted,-latex ](-3.5,0)--(-2.9,  0.6 );
  \draw[dotted ](-4.5, 1)--(-2.5,-1 );
   \draw[dotted,-latex ](-4.5, 1)--(-4,  0.5 );
   \draw[dotted,-latex ](-3.5,0)--(-2.9, -0.6 );

  \node    at (-1.4, 0.22)  {\scriptsize $\hat{\Omega}_1$};
    \node    at (-1.4, -0.22)  {\scriptsize $\hat{\Omega}^*_1$};
     \node    at (-5.6, 0.22)  {\scriptsize $\hat{\Omega}_2$};
  \node    at (-5.6, -0.22)  {\scriptsize $\hat{\Omega}^*_2$};
    \node    at (-2.3, 0.5)  {\scriptsize $\tilde{\Omega}_1$};
 \node    at (-3.5,0.6)  {\scriptsize $\tilde{\Omega}_2$};
 \node    at (-4.7, 0.5)  {\scriptsize $\tilde{\Omega}_3$};

  \node    at (-4.7,-0.5)  {\scriptsize $\tilde{\Omega}_4$};
  \node    at (-3.5, -0.6)  {\scriptsize $\tilde{\Omega}_5$};
    \node    at (-2.3, -0.5)  {\scriptsize $\tilde{\Omega}_6$};
     \node  [below]  at (-1,1.8) {\scriptsize $\hat{\Sigma}_1$};
     \node  [below]  at (-6,1.8) {\scriptsize $\hat{\Sigma}_2$};
   \node  [below]  at (-6,-1.3) {\scriptsize $\hat{{\Sigma}}^*_2$};
       \node  [below]  at (-1,-1.3) {\scriptsize $\hat{\Sigma}^*_1$};

    \node  [below]  at (-2.3,1.5) {\scriptsize $L_1$};
        \node  [below]  at (-4.6,1.5) {\scriptsize $L_2$};

 \node  [below]  at (-4.8,-0.9) {\scriptsize $L_3$};
  \node  [below]  at (-2.2,-0.9) {\scriptsize $L_4$};


\draw[      ](1.5,-1.5)--(4.5,1.5);
\draw[  -latex ](1.5,-1.5)--(2.25,-0.75);
\draw[ -latex ](3, 0)--(3.9, 0.9);
\draw[    ](1.5,1.5)--(4.5,-1.5);
\draw[  -latex ](1.5, 1.5)--(2.25, 0.75);
\draw[  -latex ](3, 0)--(3.9,-0.9);
\node  [below]  at (1.2,1.8) {\scriptsize $L_2$};
\node  [below]  at (4.9,1.8) {\scriptsize $L_1$};
\node  [below]  at (4.9,-1.1) {\scriptsize $L_4$};
\node  [below]  at (1.2,-1.1) {\scriptsize $L_3$};
\node    at (3,-0.3)  {$0$};
 \draw [thick,->  ](-0.3,0)--(1,0);
  \node    at (-3.5, -2 )  {  $ (a)$};
    \node    at ( 3.1, -2 )  {  $ (b)$};
\end{tikzpicture}
\end{center}
\caption {\small   In plot(a),   four new dashed auxiliary lines are added to the jump contour  of $\widehat{M}^{r} (\hat{k})$, which then can be deformed
into the  jump contour of the Painlev\'e II model RH problem for $ M^{P}(\hat{k})$, as shown in plot (b).}
\label{desc45}
\end{figure}

Let $$\varphi_0=\arg \tilde{r}(1) .$$  Then $$\tilde{r}(1)= |\tilde{r}(1)| e^{i\varphi_0}.$$
 Following the idea  \cite{Fokas1}, we find the solution  $\widetilde M^{r}(\hat{k})$ can be expressed by
\begin{equation}\label{mrt}
   \widetilde M^{r}(\hat{k}) = \sigma_1 e^{- i  \left( \frac{\pi}{4}+\frac{\varphi_0}{2} \right)\widehat\sigma_3} M^P(\hat{k})\sigma_1,
\end{equation}
where $M^P(\hat{k})$  satisfies a standard Painlev\'{e}  II  model RH problem given in \ref{appx}  with the parameter $q=i|\tilde{r}(1)|$.
By Proposition \ref{locpain},  we have
\begin{align}\label{mr1}
     M^{r}_1(s) = \widetilde M^{r}_1(s)  + \mathcal{O}(t^{-\frac{1}{3}+4\tau}) =   \sigma_1 e^{- i  \left( \frac{\pi}{4}+\frac{\varphi_0}{2} \right) \widehat\sigma_3} M^P_1(s) \sigma_1 + \mathcal{O}(t^{-\frac{1}{3}+4\tau}) ,
\end{align}
where   the subscript ``$1$'' represents the coefficient  of the  $\hat{k}^{-1}$ term in the asymptotic expansion of the corresponding solution as $\hat{k} \to \infty$,
and $M^P_1(s)$ is given by (\ref{posee}).

 Finally, as $\hat{k}\to\infty$, $M^{r}(z)$ can be described by the following equation
\begin{align}\label{pre2}
M^{r}(z) = I + \frac{ M^{r}_1(s)}{(3t)^{1/3}(z-1)} + \mathcal{O}(t^{-\frac{2}{3}+2\tau}).
\end{align}
A similar process gives the solution $M^l(z)$,  which has the asymptotics: as $\check{k} \to \infty$,
\begin{equation}
M^l(z) = I + \frac{M^l_1(s)}{(3t)^{1/3}(z+1)} + \mathcal{O}(t^{-\frac{2}{3}+2\tau}),
\end{equation}
where
\begin{align}\label{ml1}
M^l_1(s) = -\sigma_1 M^r_1(s) \sigma_1.
\end{align}

Now we construct the solution $M^{rhp}(z)$ of the form
 \begin{align}\label{trans6}
 M^{rhp}(z) =
  \begin{cases}
   E(z), \ z \in \mathbb{C} \setminus \left( \mathcal{U}_r \cup \mathcal{U}_l\right),\\
   E(z) M^{r}(z), \ z \in \mathcal{U}_r,\\
   E(z) M^l(z), \ z \in \mathcal{U}_l.
  \end{cases}
 \end{align}
Since  the solutions $M^j(z), \, j \in \{r,l\}$  have been obtained, we can construct the solution $M^{rhp}(z)$ if we find the error function $E(z)$.

\subsubsection{A small-norm RH problem}\label{error}
We   consider the error function  $E(z)$ defined by (\ref{trans6}), which admits the following  RH problem.

\begin{figure}[htbp]
    \begin{center}
        \begin{tikzpicture}

               \draw[-latex,dotted](-5.5,0)--(5.5,0)node[right]{ \textcolor{black}{$\re z$}};
               \draw[-latex,dotted](0,-2.21)--(0,2.21)node[right]{\textcolor{black}{$\im z$}};

               \draw[teal] (3.8,0) arc (0.5:360:1.1);
               \draw[teal] (-1.6,0) arc (0.5:360:1.1);
               \coordinate (I) at (0,0);
               \coordinate (A) at (-3.3,0);
               \fill (A) circle (0pt) node[below] {\footnotesize $\xi_4$};
               \coordinate (b) at (-2.1,0);
               \fill (b) circle (0pt) node[below] {\footnotesize $\xi_3$};
               \coordinate (e) at (3.3,0);
               \fill (e) circle (0pt) node[below] {\footnotesize $\xi_1$};
               \coordinate (f) at (2.1,0);
               \fill (f) circle (0pt) node[below] {\footnotesize $\xi_2$};

               \coordinate (c) at (-2.65,0);
               \fill[] (c) circle (1pt) node[below] {\scriptsize$-1$};
               \coordinate (d) at (2.65,0);
               \fill[] (d) circle (1pt) node[below] {\scriptsize$1$};
               \draw[](0,0)--(1,0.8);
               \draw[-latex](0,0)--(0.6,0.48);
               \draw[](1,0.8)--(2,0);
               \draw[-latex](1,0.8)--(1.5,0.4);

               \draw[](3.5,0)--(5.5, 1.2);
               \draw[-latex](3.5,0)--(4.5,0.6);
               \draw[](0,0)--(1,-0.8);
               \draw[-latex](0,0)--(0.6,-0.48);
               \draw[](1,-0.8)--(2,0);
               \draw[-latex](1,-0.8)--(1.5,-0.4);

               \draw[](3.5,0)--(5.5, -1.2);
               \draw[-latex](3.5,0)--(4.5,-0.6);
               \draw[](0,0)--(-1,0.8);
               \draw[-latex](-1,0.8)--(-0.4,0.32);
               \draw[](-1,0.8)--(-2,0);
               \draw[-latex](-2,0)--(-1.3,0.56);

               \draw[](-3.5,0)--(-5.5, 1.2);
               \draw[ latex-](-4.5,0.6)--(-5.5, 1.2);
               \draw[](0,0)--(-1,-0.8);
               \draw[-latex](-1,-0.8)--(-0.4,-0.32);
               \draw[](-1,-0.8)--(-2,0);
               \draw[-latex](-2,0)--(-1.3,-0.56);

               \draw[](-3.5,0)--(-5.5, -1.2);
               \draw[ latex-](-4.5,-0.6)--(-5.5, -1.2);

               \draw[] (1,0)--(1,0.8);
                \draw[-latex] (1,0)--(1,0.4);
                 \draw[] (-1,0)--(-1,0.8);
                \draw[-latex] (-1,0)--(-1,0.4);
                \draw[] (1,0)--(1,-0.8);
                \draw[ -latex] (1,0)--(1,-0.4);
                 \draw[] (-1,0)--(-1,-0.8);
                \draw[-latex] (-1,0)--(-1,-0.4);
               \node at (4.6,0.9)  {\scriptsize $\Sigma_{1}$};
                 \node at (4.6,-0.9)  {\scriptsize $\overline{\Sigma}_{1}$};
                   \node at (-4.6,0.9)  {\scriptsize  $\Sigma_{4}$};
                 \node at (-4.6,-0.9)  {\scriptsize $\overline{\Sigma}_{4}$};
               \node at (1.5,0.7)  {\scriptsize $\Sigma_{2}$};
                 \node at (1.5,-0.7)  {\scriptsize $\overline{\Sigma}_{2}$};
                \node at (-1.5,0.7)  {\scriptsize $\Sigma_{3}$};
                 \node at (-1.5,-0.7)  {\scriptsize $\overline{\Sigma}_{3}$};
                  \node at (0.6,0.7)  {\scriptsize $\Sigma_{0^+}$};
                 \node at (0.6,-0.7)  {\scriptsize $\overline{\Sigma}_{0^+}$};
                \node at (-0.4,0.7)  {\scriptsize $\Sigma_{0^-}$};
                 \node at (-0.4,-0.7)  {\scriptsize $\overline{\Sigma}_{0^-}$};
                  \node at (0.7,0.25)  {\scriptsize $\Sigma'_{1}$};
                 \node at (1.25,-0.25)  {\scriptsize $\overline{\Sigma}'_{1}$};
                 \node at (-0.7,0.25)  {\scriptsize $\Sigma'_{2}$};
                 \node at (-1.25,-0.25)  {\scriptsize $\overline{\Sigma}'_{2}$};
                 \draw[white!20, fill=white] (-2.7,0) circle (1.1);
          \node[shape=circle,fill=black, scale=0.15]  at (-2.7,0){0} ;
        \draw[white!20, fill=white] (2.7,0) circle (1.1);
            \node[shape=circle,fill=black, scale=0.15]  at (2.7,0){0} ;
              \draw[-latex  ] (2.7,1.1) arc (450:90:1.1);
              \draw[-latex ] (-2.7,1.1) arc (450:90:1.1);
              \node[]  at (2.7, 1.4) {\scriptsize $\partial \mathcal{U}_{r}$};
                \node[]  at (-2.7, 1.4) {\scriptsize $\partial \mathcal{U}_{l}$};
                  \node at (2.7,-0.3)  {\footnotesize $1$};
                   \node at (-2.7,-0.3)  {\footnotesize $-1$};
               \end{tikzpicture}
                           \caption{\footnotesize{The  jump contour $\Sigma^E$  for $E(z)$. }}
                             \label{signdbiiiar}
        \end{center}
    \end{figure}
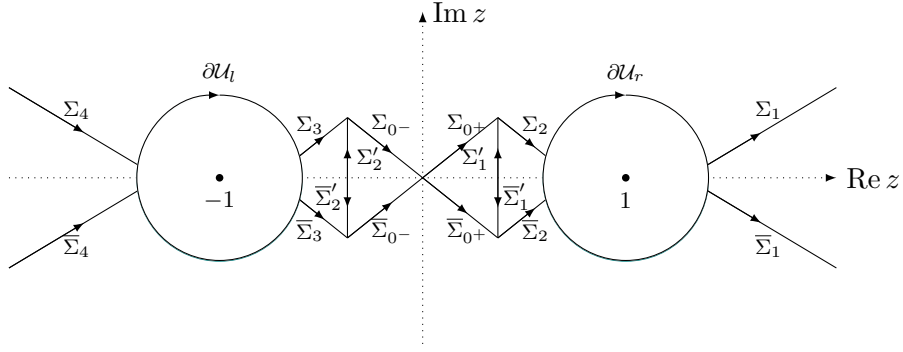

\begin{prob}\label{iew}
  Find      $E(z)$ with the properties
          \begin{itemize}
          	\item   $E(z)$ is analytic in $\mathbb{C}\setminus  \Sigma^{E}$,
          where $\Sigma^E = \left(\partial \mathcal{U}_r \cup \partial \mathcal{U}_l \right) \cup \left(\Sigma^{(4)} \setminus \left(\mathcal{U}_r \cup \mathcal{U}_l \right) \right)$. See Figure \ref{signdbiiiar}.
          	\item  $E(z)$  satisfies the jump condition
          	\begin{align*}
          		E_{+}(z)=E_-(z)V^{E}(z), \quad z\in \Sigma^{E},
          	\end{align*}
          	where the jump matrix $V^{E}(z)$ is given by
           \begin{align}
                V^{E}(z)= \begin{cases}
                   V^{(4)}(z),\quad z \in \Sigma^{(4)} \setminus \left(\mathcal{U}_r \cup \mathcal{U}_l \right),\\
                  M^{r}(z) , \quad z \in \partial \mathcal{U}_r,\\
                  M^{l}(z) , \quad z \in \partial \mathcal{U}_l. \label{ioei}
                \end{cases}
            \end{align}

          	\item       $E(z)=I+\mathcal{O}(z^{-1}),	\quad  z\to  \infty.$	

          \end{itemize}
\end{prob}

To obtain the existence of the solution $E(z)$,  we estimate the jump matrix $V^{E}(z)-I$.
\begin{proposition} Let $2 \le p \le \infty$ and $0<\tau<\frac{1}{30}$. It follows that
   \begin{equation}\label{vee}
                ||V^{E}(z)-I||_{L^p(\Sigma^E)}=\begin{cases}
                    \mathcal{O}(e^{-c t^{3\tau}}),\quad z\in \Sigma^{E} \setminus \left(\mathcal{U}_r \cup \mathcal{U}_l \right), \\
                    \mathcal{O}(t^{-\kappa_p}), \quad z\in \partial \mathcal{U}_r \cup \partial \mathcal{U}_l,
                \end{cases}
            \end{equation}
            for some positive constant $c$ with $\kappa_\infty=\tau$ and $\kappa_2=\frac{1}{6}+\frac{\tau}{2}$.
\end{proposition}
\begin{proof}
First, we consider the estimate when $p=\infty$.
For $z\in \Sigma^{E} \setminus \left(\mathcal{U}_r \cup \mathcal{U}_l \right)$, by (\ref{ioei}) and Proposition \ref{reprop1},
\begin{equation}
|V^{E}(z)-I| = |V^{(4)}(z)-I| \lesssim   e^{-c t^{3\tau}},
\end{equation}
where $c = c(\phi,\xi)$.
For $z \in \partial\mathcal{U}_r $, by (\ref{ioei}),
\begin{equation}
|V^{E}(z)-I| = |M^{r}(z)-I| \lesssim  t^{-\tau}.
\end{equation}
So does the estimate on $\partial\mathcal{U}_l$. Other cases when $2 \le p <\infty$ can be proven similarly.
\end{proof}

     This proposition establishes RH problem  \ref{iew} as a small-norm RH problem, for which there exists a well-known existence and uniqueness theorem \cite{PX3}.
     Define the integral operator $ C_{w^{E}}: L^2(\Sigma^E) \to L^2(\Sigma^E)$ by
        \begin{align*}
           C_{w^{E}}f =C_-\left( f \left( V^{E}(z)-I \right) \right),
        \end{align*}
         where $w^{E} =V^{E}(z)-I$ and  $C_-$ is the  Cauchy projection operator on $\Sigma^{E}$.
        By (\ref{vee}), a simple calculation shows that
       $$||C_{w^{E}}||_{L^2_{op}(\Sigma^{E})}\lesssim  ||C_-||_{L^2_{op}(\Sigma^{E})}||V^{E}(z)-I||_{L^\infty(\Sigma^{E})} \lesssim t^{-\tau}.$$
 According to the theorem of Beals-Coifman \cite{BC84},   the solution of RH problem \ref{iew} can be expressed in terms of
         \begin{equation}
            E(z)=I +\frac{1}{2\pi i} \int_{\Sigma^{E}} \frac{\mu_E(\zeta) \left( V^{E}(\zeta)-I\right)}{\zeta-z}\, \mathrm{d}\zeta, \nonumber
         \end{equation}
        where $\mu_E -I \in L^2\left( \Sigma^{E}\right)$   satisfies $\left(1-C_{w^{E}}\right)(\mu_E-I)=C_{w^{E}}I$.
  Furthermore, from (\ref{vee}), we have the estimates
       \begin{equation}
        ||V^{E}(z)-I||_{L^2(\Sigma^{E})}\lesssim t^{-\frac{1}{6}-\frac{\tau}{2}},\quad
          ||\mu_E-I||_{L^2(\Sigma^{E})} \lesssim  t^{-\frac{1}{6}-\frac{\tau}{2}}, \label{owej}
       \end{equation}
which imply that  RH problem \ref{iew} exists a unique solution.
On the other hand,   $\mu_E$ can be rewritten as  $$\mu_E=I+\sum_{j=1}^4 C_{w^E}^jI+(1-C_{w^{E}})^{-1}(C_{w^E}^5I),$$ where for $j=1,\dots,4$, we have the estimates
\begin{equation*}
    || C_{w^E}^jI||_{L^2(\Sigma^E)} \lesssim t^{-(\frac{1}{6}+j\tau-\frac{\tau}{2})}, \; ||\mu_E-I- \sum_{j=1}^4 C_{w^E}^jI||_{L^2(\Sigma^E)} \lesssim t^{-(\frac{1}{6}+9\tau)}.
\end{equation*}

To recover the potential $q(x, t)$ of (\ref{sol}), the behavior of $E(z)$ at $z=0$ and $z\to\infty$ must be characterized.
 We  first derive the expansion of  $E(z)$ as $z\to\infty$
        \begin{equation}
            E(z)=I + z^{-1}E_1 +\mathcal{O}\left(z^{-2}\right), \label{trans8}
        \end{equation}
      where
      $$ E_1=-\frac{1}{2\pi i} \int_{\Sigma^{E}} \mu_E(\zeta)\left( V^{E}(\zeta)-I \right)\, \mathrm{d}\zeta.$$

   \begin{proposition} \label{error1}
   $E_1$ and $E(0)$ can be estimated as follows:
	\begin{align}
		& E_1=    (3t)^{-\frac{1}{3}} \begin{pmatrix}i\int_s^\infty u(\zeta)^2\mathrm{d}\zeta & iu(s) \cos \varphi_0 \\ -iu(s) \cos \varphi_0 & -i\int_s^\infty u(\zeta)^2\mathrm{d}\zeta \end{pmatrix}+  \mathcal{O}\left(t^{-1/3-5\tau}\right), \label{e001}\\[6pt]
		& E(0) = I+(3t)^{-1/3} \begin{pmatrix}0& u(s) \sin \varphi_0 \\ u(s) \sin \varphi_0 & 0 \end{pmatrix}+ \mathcal{O}\left(t^{-1/3-5\tau}\right).\label{e00}
	\end{align}
\end{proposition}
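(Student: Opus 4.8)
The plan is to extract $E_1$ and $E(0)$ directly from the Beals--Coifman representation displayed above (\ref{trans8}), reducing each to a contour integral that localizes on the two circles $\partial\mathcal{U}_r$ and $\partial\mathcal{U}_l$ and is then evaluated by residues through the explicit local models. First I would replace the density $\mu_E$ by $I$: writing $\mu_E(V^E-I)=(V^E-I)+(\mu_E-I)(V^E-I)$ and invoking the $L^2$ bounds in (\ref{owej}), the Cauchy--Schwarz estimate $\|(\mu_E-I)(V^E-I)\|_{L^1}\le\|\mu_E-I\|_{L^2}\,\|V^E-I\|_{L^2}\lesssim t^{-1/3-\epsilon}$ shows that the $(\mu_E-I)$ contribution is absorbed into the stated error. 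Next, by the small-norm estimate (\ref{vee}), $V^E-I$ is exponentially small on $\Sigma^E\setminus(\mathcal{U}_r\cup\mathcal{U}_l)$, so only the integrals over $\partial\mathcal{U}_r$ and $\partial\mathcal{U}_l$ survive to any polynomial order; there $V^E-I$ equals $M^r-I$ and $M^l-I$ respectively.

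On these circles I would insert the local asymptotics (\ref{pre2}), $M^r(z)-I=\frac{M^r_1(s)}{(3t)^{1/3}(z-1)}+\mathcal{O}(t^{-2/3+2\epsilon})$, and its $\mathcal{U}_l$-analogue $M^l(z)-I=\frac{M^l_1(s)}{(3t)^{1/3}(z+1)}+\mathcal{O}(t^{-2/3+2\epsilon})$, and apply the residue theorem. Since a circle of radius $(3t)^{-1/3+\epsilon}\varepsilon$ has length $\mathcal{O}(t^{-1/3+\epsilon})$, the $\mathcal{O}(t^{-2/3+2\epsilon})$ remainders integrate to $\mathcal{O}(t^{-1+3\epsilon})$, which is below $t^{-1/3-\epsilon}$ for $0<\epsilon<\tfrac19$. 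Taking the (clockwise) orientation of $\partial\mathcal{U}_r,\partial\mathcal{U}_l$ from Figure \ref{signdbiiiar} into account, the simple poles of $M^r-I$ at $z=1$ and of $M^l-I$ at $z=-1$ give, for $E_1=-\frac{1}{2\pi i}\int_{\Sigma^E}(V^E-I)\,d\zeta$, the value $E_1=(3t)^{-1/3}\bigl(M^r_1(s)+M^l_1(s)\bigr)+\mathcal{O}(t^{-1/3-\epsilon})$, whereas for $E(0)=I+\frac{1}{2\pi i}\int_{\Sigma^E}\frac{V^E-I}{\zeta}\,d\zeta$ the extra factor $1/\zeta$ contributes the residue $+1$ at $\zeta=1$ and $-1$ at $\zeta=-1$, giving $E(0)=I+(3t)^{-1/3}\bigl(M^l_1(s)-M^r_1(s)\bigr)+\mathcal{O}(t^{-1/3-\epsilon})$.

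It then remains to evaluate the two matrix combinations. Using the symmetry (\ref{ml1}), $M^l_1=-\sigma_1 M^r_1\sigma_1$, one has $M^r_1+M^l_1=M^r_1-\sigma_1 M^r_1\sigma_1$ and $M^l_1-M^r_1=-(M^r_1+\sigma_1 M^r_1\sigma_1)$. Substituting the Painlev\'e II representation (\ref{mr1}), $M^r_1(s)=\sigma_1 e^{-i(\pi/4+\varphi_0/2)\widehat\sigma_3}M^P_1(s)\sigma_1$, together with the explicit coefficient (\ref{posee}) — whose off-diagonal entries are the common value $\tfrac12 u(s)$ and whose diagonal entries are $\mp\tfrac{i}{2}\int_s^\infty u(\zeta)^2\,d\zeta$ — the conjugation multiplies the $(1,2)$ and $(2,1)$ entries by $e^{\mp i(\pi/2+\varphi_0)}=\mp i e^{\mp i\varphi_0}$ before the outer $\sigma_1$ swaps them. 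A short computation yields $M^r_1=\left(\begin{smallmatrix}\frac{i}{2}\mathcal{I}&\frac{i}{2}u\,e^{i\varphi_0}\\[2pt]-\frac{i}{2}u\,e^{-i\varphi_0}&-\frac{i}{2}\mathcal{I}\end{smallmatrix}\right)$ with $\mathcal{I}=\int_s^\infty u^2$, so that $M^r_1+M^l_1$ has diagonal $\pm i\mathcal{I}$ and off-diagonal $\pm iu\cos\varphi_0$, reproducing (\ref{e001}), while $M^l_1-M^r_1$ has vanishing diagonal (by the trace-freeness of $M^P_1$) and off-diagonal $u\sin\varphi_0$, reproducing (\ref{e00}).

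The main obstacle is the sign-and-phase bookkeeping in this last step: one must track simultaneously the orientation-induced signs on the two circles, the residue factors $\pm1$ arising from the weight $1/\zeta$ at $z=\pm1$, the symmetry $M^l_1=-\sigma_1 M^r_1\sigma_1$, and the conjugation by $\sigma_1 e^{-i(\pi/4+\varphi_0/2)\widehat\sigma_3}$, and verify that these conspire so that the symmetric combination $M^r_1+M^l_1$ retains only the imaginary diagonal $\propto\int_s^\infty u^2$ and the off-diagonal $\propto\cos\varphi_0$, whereas the antisymmetric combination $M^l_1-M^r_1$ retains only the real off-diagonal $\propto\sin\varphi_0$ with zero diagonal. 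The analytic estimates themselves are routine once (\ref{owej}), (\ref{vee}) and (\ref{pre2}) are in hand.
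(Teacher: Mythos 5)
Your proposal is correct and follows essentially the same route as the paper's proof: extract $E_1$ and $E(0)$ from the Beals--Coifman formula, discard the $(\mu_E-I)(V^E-I)$ term via (\ref{owej}), localize to $\partial\mathcal{U}_r\cup\partial\mathcal{U}_l$ where $V^E-I$ equals $M^{r}-I$, $M^{l}-I$, evaluate by residues using (\ref{pre2}), and finish with (\ref{mr1}), (\ref{ml1}) and (\ref{posee}). Your sign, orientation, and phase bookkeeping (including $M^r_1+M^l_1$ giving the $\cos\varphi_0$ combination and $M^l_1-M^r_1$ the traceless $\sin\varphi_0$ combination) reproduces exactly the paper's stated identities $E_1=(3t)^{-1/3}(M^r_1+M^l_1)+\mathcal{O}(t^{-1/3-\epsilon})$ and $E(0)=I-(3t)^{-1/3}(M^r_1-M^l_1)+\mathcal{O}(t^{-1/3-\epsilon})$.
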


\begin{proof}
By   (\ref{ioei}) and  (\ref{owej}), we obtain that
	\begin{align*}
E_1 & = -\frac{1}{2\pi i} \oint_{\partial \mathcal{U}_{r}} \left( V^{E}(\zeta)-I \right) \mathrm{d}\zeta -\frac{1}{2\pi i} \oint_{\partial \mathcal{U}_{l}} \left( V^{E}(\zeta)-I \right) \mathrm{d}\zeta
+ \mathcal{O}\left(t^{-1/3-5\tau}\right)\\
		& = (3t)^{-1/3} \left(M^{r}_1(s)+M^{l}_1(s)\right)+ \mathcal{O}\left(t^{-1/3-5\tau}\right),
	\end{align*}
which gives (\ref{e001}) by  the estimates (\ref{mr1}) and (\ref{ml1}). Here, we use the fact $0<\tau<\frac{1}{30}$.

	In a similar way, we have
	\begin{align*}
		E(0) &= I + \frac{1}{2\pi i} \oint_{\partial \mathcal{U}_{r}}  \frac{ V^{E}(\zeta)-I}{\zeta}  \mathrm{d}\zeta +\frac{1}{2\pi i} \oint_{\partial \mathcal{U}_{l}}\frac{ V^{E}(\zeta)-I}{\zeta}  \mathrm{d}\zeta + \mathcal{O}\left(t^{-1/3-5\tau}\right)\\
		&= I -(3t)^{-1/3 } \left(M^{r}_1(s)-M^{l}_1(s)\right) + \mathcal{O}\left(t^{-1/3-5\tau}\right),
	\end{align*}
which yields   (\ref{e00}) by  the estimates (\ref{mr1}) and (\ref{ml1}).
\end{proof}

\subsection{Contribution from a pure $\bar{\partial}$-problem}\label{modi4}
In this subsection, we   consider the long-time   asymptotic behavior of  the pure $\bar{\partial}$-problem.
   Define the function
            \begin{equation}\label{transd}
                M^{(5)}(z)=M^{(4)}(z)\left(M^{rhp}(z)\right)^{-1},
            \end{equation}
            which  satisfies the following $\bar{\partial}$-problem.

\begin{prob3}\label{trad}
 Find  $M^{(5)}(z):=M^{(5)}(z;x,t)$ which satisfies
            \begin{itemize}
                \item   $M^{(5)}(z)$ is continuous  in $\mathbb{C}$ and analytic in $\mathbb{C}\setminus \overline{\Omega}$.
                \item    $M^{(5)}(z)=I+\mathcal{O}(z^{-1}), \ z\to  \infty$.	

                \item For $z\in \mathbb{C}$,  $M^{(5)}(z)$ satisfies the $ \bar{\partial}$-equation
                \begin{equation}
                    \bar{\partial} M^{(5)}(z) = M^{(5)}(z) W^{(5)}(z), \nonumber
                \end{equation}
                where
	\begin{align}
	W^{(5)}(z):=M^{rhp}(z)  \bar{\partial} R^{(3)}(z) \left(M^{rhp}(z)\right)^{-1}, \label{633s}
	\end{align}
 and  $ \bar{\partial}  R^{(3)}(z)$ has been given in (\ref{parR2}).
            \end{itemize}

\end{prob3}

The solution of $ \bar{\partial}$-problem   \ref{trad} can be given by
\begin{equation} \label{Im3}
	M^{(5)}(z)=I-\frac{1}{\pi}  \iint_\mathbb{C} \frac{ M^{(5)}(\zeta) W^{(5)}(\zeta)}{\zeta-z} \, \mathrm{d} A(\zeta),
\end{equation}
where $\mathrm{d} A(\zeta)$ is Lebesgue measure on the plane.
\eqref{Im3} can be  written as an operator equation
\begin{equation}
	(I-S) M^{(5)}(z)=I, \label{Sm3}
\end{equation}
where $S$ is the solid Cauchy operator
\begin{equation}
	Sf(z)=-\frac{1}{\pi} \iint_\mathbb{C} \frac{f(\zeta)W^{(5)}(\zeta)}{\zeta-z} \, \mathrm{d} A(\zeta). \label{hfu}
\end{equation}

\begin{proposition}\label{pss}
The    operator $S$ defined by (\ref{hfu})  satisfies the estimate
	\begin{equation}\label{esuf}
  \parallel S\parallel_{L^\infty(\mathbb{C})\to L^\infty(\mathbb{C})}\lesssim  t^{-1/6},
	\end{equation}
	which implies the existence of $(I-S)^{-1}$ for large $t$.
\end{proposition}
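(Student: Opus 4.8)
The plan is to reduce the operator bound to a scalar area integral and then estimate it region by region, using the pointwise bounds on $\bar\partial R^{(3)}$ from Proposition~\ref{prop3} together with the phase estimates of Proposition~\ref{reprop1}. First I would record that, by~(\ref{633s}), $W^{(5)}(z)=M^{rhp}(z)\,\bar\partial R^{(3)}(z)\,(M^{rhp}(z))^{-1}$, and that $M^{rhp}$ and its inverse are uniformly bounded on $\mathbb{C}$: indeed $M^{rhp}$ is assembled from the Painlev\'e parametrices $M^r,M^l$, which are $I+\mathcal{O}(t^{-1/3+2\epsilon})$ by~(\ref{pre2}), and the small-norm error $E$, which is $I+\mathcal{O}(t^{-\frac16-\frac\epsilon2})$ by~(\ref{owej}). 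Hence $|W^{(5)}(\zeta)|\lesssim|\bar\partial R^{(3)}(\zeta)|$, and since $\bar\partial R^{(3)}$ vanishes off $\Omega$, for any $f\in L^\infty(\mathbb{C})$
\[
\|Sf\|_{L^\infty}\lesssim\|f\|_{L^\infty}\,\sup_{z\in\mathbb{C}}\iint_{\Omega}\frac{|\bar\partial R^{(3)}(\zeta)|}{|\zeta-z|}\,\mathrm{d}A(\zeta),
\]
with $\mathrm{d}A$ the Lebesgue area measure (absorbing the factor $|\mathrm{d}\zeta\wedge\mathrm{d}\bar\zeta|=2\,\mathrm{d}A$ into the constant). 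It therefore suffices to bound this supremum by $t^{-1/6}$.

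Next I would localize to one region, say $\Omega_1$ near $z_1$, the others following identically: the conjugate regions ${\Omega}^*_j$ by the symmetry $R(z)=-\overline{R(\bar z^{-1})}$ of Proposition~\ref{prop3}, and the regions $\Omega_{0^\pm}$ near the origin giving strictly faster decay, since there $\re(2i\theta)$ decays only \emph{linearly} in $\im z$ by~(\ref{estm1}) and the $|\zeta|^{-1/2}$ singularity at $0$ is area-integrable. Writing $\zeta=z_1+u+iv$ and $z=z_1+a+ib$, the oscillatory factor gives $|\bar\partial R^{(3)}(\zeta)|\lesssim|\bar\partial R_1(\zeta)|\,e^{\re(2it\theta(\zeta))}$, and I would insert~(\ref{437}), splitting $|\bar\partial R_1|\lesssim|r'(|\zeta|)|+|\zeta-z_1|^{-1/2}+\varphi(|\zeta|)$ into three pieces $I_1,I_2,I_3$, using $\re(2it\theta)\le -c\,t\,u^2v$ on $\Omega_1\cap\{|\zeta|\le 2\}$ and $\le -c\,t\,v$ on $\Omega_1\cap\{|\zeta|>2\}$ from~(\ref{omega1}).

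For each piece I would apply H\"older's inequality in $u$ for fixed $v$, using $\big\||\zeta-z|^{-1}\big\|_{L^q_u}\lesssim|v-b|^{1/q-1}$ for $1<q<2$. The representative term is $I_2$: inside the wedge $|\zeta-z_1|^{-1/2}\sim u^{-1/2}$, and the rescaling $w=u\sqrt{tv}$ yields $\|u^{-1/2}e^{-ctu^2v}\|_{L^p_u}\lesssim(tv)^{1/4-1/(2p)}$ for $p<2$; choosing the conjugate pair $p=\tfrac65,\ q=6$ gives
\[
I_2\lesssim t^{\frac14-\frac{1}{2p}}\int_0^{v_{\max}}v^{\frac14-\frac{1}{2p}}\,|v-b|^{-1/p}\,\mathrm{d}v\lesssim t^{-1/6},
\]
where the $v$-integral is bounded uniformly in $b$ because the exponents $v^{-1/6}$ and $|v-b|^{-5/6}$ are both integrable. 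The piece $I_1$ is treated the same way, splitting $r'\in L^2(\mathbb{R})$ (recall $r\in H^1$) from the exponential, which reproduces the same $(tv)^{1/4-1/(2p)}$ scaling and hence the same $t^{-1/6}$ bound; $I_3$ is a compactly supported bounded integrand against the same decaying exponential and decays at least as fast. Summing over all regions gives $\sup_z\iint_\Omega\frac{|\bar\partial R^{(3)}|}{|\zeta-z|}\,\mathrm{d}A\lesssim t^{-1/6}$, which is~(\ref{esuf}).

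The main obstacle is the interplay between the square-root singularity $|\zeta-z_1|^{-1/2}$ and the \emph{shrinking} transition geometry: because $z_1,z_2\to 1$ at rate $t^{-1/3}$, one must check that this singularity does not blow up at the merging point. This is exactly what the refined bound~(\ref{estn1}), $|\bar\partial R_j(z)|\lesssim|z-1|$ in a fixed neighborhood of $1$, controls, so that no loss occurs from the coalescing saddles; one then verifies that the wedge estimates of Proposition~\ref{reprop1} and the H\"older exponents assemble the $t$-powers to exactly $t^{-1/6}$, uniformly in the spatial point $z$ and with the $\Omega_{0^\pm}$ contribution subdominant. Once~(\ref{esuf}) holds, for $t$ large enough $\|S\|_{L^\infty\to L^\infty}<1$, so $(I-S)^{-1}=\sum_{n\ge0}S^n$ converges in operator norm and the solution of the $\bar\partial$-problem~\ref{trad} exists and is unique.
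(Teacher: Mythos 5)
Your overall strategy mirrors the paper's: bound $|W^{(5)}|\lesssim|\bar\partial R^{(3)}|$ via boundedness of $M^{rhp}$, reduce the operator norm to $\sup_{z}\iint_{\Omega}|\bar\partial R^{(3)}(\zeta)|\,|\zeta-z|^{-1}\mathrm{d}A(\zeta)$, localize to $\Omega_1$, split according to (\ref{437}) and the two phase regimes of Proposition \ref{reprop1}, and apply H\"older in $u$ for fixed $v$. However, your key computation for the square-root term contains a genuine error. With your pair $p=6/5$, $q=6$, the Gaussian rescaling converts the entire factor $u^{-1/2}e^{-ctu^2v}$ into $(tv)^{-1/6}$, so after the bound $\|\,|\zeta-z|^{-1}\|_{L^6_u}\lesssim|v-b|^{-5/6}$ no decay in $v$ remains, and your $v$-integral is
\[
\int_0^{v_{\max}} v^{-1/6}\,|v-b|^{-5/6}\,\mathrm{d}v ,
\]
whose power singularities sum to exactly $-1$. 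This is \emph{not} bounded uniformly in $b$: for $b=0$ (i.e.\ $z$ on the real axis, which cannot be excluded since the supremum runs over all $z\in\mathbb{C}$) the integrand is $v^{-1}$ and the integral diverges logarithmically, and for small $b\neq 0$ it grows like $|\log b|$. Separate integrability of $v^{-1/6}$ and $|v-b|^{-5/6}$ does not give integrability of their product when the singularities coalesce. Nor can the defect be repaired by tuning $p$ within your scheme: pulling out the full power $(tv)^{1/4-1/(2p)}$ forces $p\le 6/5$ to reach $t^{-1/6}$, while uniform convergence of the resulting $v$-integral forces $p>6/5$.

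The paper avoids this borderline by distributing the factors differently. Since $u\gtrsim v$ inside the wedge $\Omega_1$, one first bounds $e^{\re(2it\theta(\zeta))}\le e^{-c_1tv^3}$ and keeps this exponential in the $v$-integral; then H\"older is applied with $p>2$ (not $p<2$), using $\|u^{-1/2}\|_{L^p_u(v,\infty)}\lesssim v^{1/p-1/2}$ and $\|\,|\zeta-z|^{-1}\|_{L^q_u}\lesssim|v-y|^{1/q-1}$ with $1<q<2$. The $v$-integrand becomes $v^{1/p-1/2}\,|v-y|^{1/q-1}e^{-c_1tv^3}$, whose exponents sum to $(1/p+1/q)-3/2=-1/2>-1$, hence it is integrable uniformly in $y$, and the substitution $v\mapsto t^{-1/3}v$ produces the factor $t^{-1/6}$ only at the last step. (The $|r'|$ term is handled the same way with Cauchy--Schwarz, giving $\|r'\|_{L^2}|v-y|^{-1/2}e^{-c_1tv^3}$, and the $\{|\zeta|>2\}$ pieces, where the phase decays like $e^{-c_1tv}$, give the faster rate $t^{-1/2}$.) If you rearrange your estimate this way, the rest of your argument --- the reduction via boundedness of $M^{rhp}$, the symmetry reduction to $\Omega_1$, the treatment of $\Omega_{0^\pm}$, and the Neumann-series conclusion for $(I-S)^{-1}$ --- goes through as written.
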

\begin{proof}
 We   estimate   the operator $S$  on  $\Omega_{1}$  and other cases  are   similar.  Following the argument in Lemma 6.11 \cite{CJ},  for a fixed constant  $c$, we have
 \begin{equation*}
     |Sf(z)| \le c ||f||_{L^\infty(\mathbb{C})} \int_{\Omega_1} \frac{\langle \zeta \rangle|\bar\partial R_1(\zeta) e^{2it\theta(\zeta)}|}{|\zeta-z||\zeta-1|}\, \mathrm{d} A(\zeta).
 \end{equation*}
 In fact,  to prove \eqref{esuf}, using (\ref{437}), (\ref{estn1}), (\ref{R3}),  (\ref{633s}), and (\ref{hfu}),
 it is sufficient to show that
	\begin{align*}
	|Sf|  \leq c (I_1+I_2+I_3+I_4),
	\end{align*}
where
	\begin{align*}
		&I_1  =    \iint_{\Omega_{1} \cap \{\zeta:|\zeta| \le 2\} } F(\zeta,z) \mathrm{d} A(\zeta), \ \  I_2 =  \iint_{\Omega_{1} \cap \{\zeta:|\zeta| > 2\}}  F(\zeta,z) \mathrm{d} A(\zeta),\\
&I_3  =    \iint_{\Omega_{1} \cap \{\zeta:|\zeta| \le 2\} } G(\zeta,z)\mathrm{d} A(\zeta), \  \ \ I_4=\iint_{\Omega_{1} \cap \{\zeta:|\zeta| > 2\}} G(\zeta,z)	\mathrm{d} A(\zeta),
	\end{align*}
with
$$  F(\zeta,z)= \frac{1}{|\zeta-z|}\left| f(|\zeta|)\right|e^{\re(2it\theta(z))}, \  G(\zeta,z)= \frac{1}{|\zeta-z|}\left| \zeta-z_1 \right|^{-1/2}e^{\re(2it\theta(z))}.$$
Here, $f(|\zeta|) =r'(|\zeta|)$ or $f(|\zeta|) =\varphi( |z|)$.
Let $y= \im z$ and $\zeta =z_1 + u+iv = |\zeta|e^{iw}$.
   Using Proposition \ref{reprop1} and  the Cauchy-Schwarz inequality, we have
	\begin{align*}
		 I_1  &\le \int_{0}^{2 \sin w} \int_{v}^{2\cos w-z_1} F(\zeta,z) \mathrm{d}u \mathrm{d}v 	\lesssim \int_{0}^{2 \sin w} t^{-\frac{1}{4}} |v-y|^{-\frac{1}{2}} v^{-\frac{1}{4}}e^{-c_1 tv^3} \mathrm{d}v   \lesssim t^{-1/3},\\
         I_2  &\le \int_{2 \sin w}^\infty \int_{2\cos w -z_1}^{\infty} F(\zeta,z) \mathrm{d}u \mathrm{d}v 	\lesssim \int_{2 \sin w}^\infty ||r'||_{L^2(\mathbb{R})}   |v-y|^{-1/2} e^{-c_1 tv} \mathrm{d}v   \lesssim t^{-1/2}.
	\end{align*}

In a similar way, using Proposition \ref{reprop1} and   the H\"{o}lder's inequality with $p>2$ and $1/p+1/q=1$, we obtain
	\begin{align*}
		 I_3  &\lesssim  \int_{0}^{2 \sin w} v^{1/p-1/2}|v-y|^{1/q-1}e^{-c_1 tv^3} \mathrm{d}v \lesssim t^{-1/6},\\
       I_4  &\lesssim  \int_{2 \sin w}^\infty v^{1/p-1/2}|v-y|^{1/q-1}e^{-c_1 tv} \mathrm{d}v \lesssim t^{-1/2}.
	\end{align*}
\end{proof}

 Proposition \ref{pss}  implies that   the operator  equation (\ref{Sm3})  exists a unique solution, which
can be expanded in the form
\begin{equation}
	M^{(5)}(z)=I + z^{-1} M^{(5)}_1(x,t) +\mathcal{O}(z^{-2}), \quad z\to \infty, \label{trans9}
\end{equation}
where
\begin{align}\label{expanm51}
M^{(5)}_1(x,t)=\frac{1}{\pi} \iint_{\mathbb{C}} M^{(5)}(\zeta)W^{(5)}(\zeta)\, \mathrm{d} A(\zeta).
\end{align}

Take $z=0$ in (\ref{Im3}), then
\begin{equation} \label{m50}
	M^{(5)}(0)=I-\frac{1}{\pi}  \iint_\mathbb{C} \frac{ M^{(5)}(\zeta) W^{(5)}(\zeta)}{\zeta}  \mathrm{d} A(\zeta).
\end{equation}
\begin{proposition}   We have the  following estimates
	\begin{equation}\label{m51infty}
		|M^{(5)}_1(x,t)| \lesssim t^{-1/2}, \ \ \ |M^{(5)}(0)-I| \lesssim t^{-1/2}.
	\end{equation}
\end{proposition}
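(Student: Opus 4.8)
The plan is to estimate the two area integrals (\ref{expanm51}) and (\ref{m50}) directly, using that every factor in the integrand $M^{(5)}W^{(5)}=M^{(5)}M^{rhp}\,(\bar\partial R^{(3)})\,(M^{rhp})^{-1}$ is uniformly bounded. First I would note that Proposition \ref{pss} together with the estimate (\ref{esuf}) guarantees that $(I-S)^{-1}$ exists and is bounded for large $t$, so the solution $M^{(5)}=(I-S)^{-1}I$ of (\ref{Sm3}) satisfies $\|M^{(5)}\|_{L^\infty(\mathbb{C})}\lesssim 1$; since $M^{rhp}$ and $(M^{rhp})^{-1}$ are bounded on each sector, the definition (\ref{633s}) together with (\ref{parR2}) yields the pointwise bound $|M^{(5)}W^{(5)}(\zeta)|\lesssim|\bar\partial R_j(\zeta)|\,e^{\re(2it\theta(\zeta))}$ for $\zeta\in\Omega_j$ (and symmetrically on $\Omega_j^*$). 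Consequently
$$|M^{(5)}_1|\lesssim\sum_j\iint_{\Omega_j}|\bar\partial R_j(\zeta)|\,e^{\re(2it\theta(\zeta))}\,\mathrm{d}A(\zeta),\qquad |M^{(5)}(0)-I|\lesssim\sum_j\iint_{\Omega_j}\frac{|\bar\partial R_j(\zeta)|}{|\zeta|}\,e^{\re(2it\theta(\zeta))}\,\mathrm{d}A(\zeta),$$
so it remains to bound each summand by $t^{-1/2}$.

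Next I would split each $\Omega_j$, $j=1,2,3,4$, into the part inside a fixed neighborhood of the merge point $\pm1$ and its complement, and insert the two bounds of Proposition \ref{prop3}. Away from $\pm1$ the singular factor $|\zeta-z_j|^{-1/2}$ stays $O(1)$, and since $q_0\in\tanh(x)+H^{4,4}(\mathbb{R})$ forces $r'\in L^\infty$, there $|\bar\partial R_j|\lesssim 1$; writing $\zeta=z_j+u+iv$ and invoking $\re(2it\theta(\zeta))\le -c_j t\,u^2 v$ from Proposition \ref{reprop1}, the Gaussian integration $\int e^{-c_j t u^2 v}\,\mathrm{d}u\lesssim(tv)^{-1/2}$ followed by $\int_0^\delta(tv)^{-1/2}\,\mathrm{d}v\lesssim t^{-1/2}$ gives the desired rate. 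Inside the neighborhood of $\pm1$ I would instead use the refined bound (\ref{estn1}), $|\bar\partial R_j|\lesssim|\zeta\mp1|$, which removes the $|\zeta-z_j|^{-1/2}$ singularity produced by the merging saddle points and, after the same integration, contributes the even smaller order $t^{-5/6}$. The sectors $\Omega_{0^\pm}$ near $z=0$ are treated with $|\bar\partial R_{0^\pm}|\lesssim|r'(\pm|\zeta|)|+|\zeta|^{-1/2}$ and the linear decay $\re(2it\theta(\zeta))\lesssim-c_0 t|\im\zeta|$ from (\ref{estm1}); for $M^{(5)}(0)$ the additional factor $|\zeta|^{-1}$ combines with $|\zeta|^{-1/2}$ into $|\zeta|^{-3/2}$, which is area-integrable near the origin (here $\gamma(0)=0$ keeps the extension well behaved) and, against $e^{-c_0 t|\im\zeta|}$, again yields $\lesssim t^{-1/2}$. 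The remaining parts with $|\zeta|>2$ carry the decay $e^{-c_j t|\im\zeta|}$ and are $O(t^{-1})$.

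The main obstacle is the coalescence of the saddle points $z_1,z_2\to1$ and $z_3,z_4\to-1$ at rate $t^{-1/3}$: exactly where the phase degenerates, the $\bar\partial$-extension carries the singularity $|\zeta-z_j|^{-1/2}$. The decisive point is that this singularity is confined to the fixed neighborhoods of $\pm1$, where Proposition \ref{prop3} provides the compensating vanishing $|\bar\partial R_j|\lesssim|\zeta\mp1|$; matching this local bound against the global $L^\infty$-control of $r'$ afforded by the $H^{4,4}$ regularity is what upgrades the operator-level rate $t^{-1/6}$ of Proposition \ref{pss}—degraded there by the Cauchy kernel $|\zeta-z|^{-1}$—to the sharp coefficient-level rate $t^{-1/2}$. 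Substituting the resulting bounds into (\ref{expanm51}) and (\ref{m50}) then gives (\ref{m51infty}).
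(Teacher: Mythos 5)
Your overall architecture is sound and in fact matches the paper's: bound $|M^{(5)}W^{(5)}|\lesssim|\bar\partial R^{(3)}|e^{\re(2it\theta)}$ using boundedness of $M^{(5)}$ (via Proposition \ref{pss}) and of $M^{rhp}$, then integrate sector by sector with the phase estimates of Proposition \ref{reprop1}; your rates $t^{-1/2}$, $t^{-5/6}$, $t^{-1}$ for the individual pieces all check out. The one genuine gap is your treatment of the $|r'|$ term away from $\pm 1$: you assert that $q_0\in\tanh(x)+H^{4,4}(\mathbb{R})$ \emph{forces} $r'\in L^{\infty}(\mathbb{R})$ and then use the pointwise bound $|\bar\partial R_j|\lesssim 1$. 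Nothing in the paper provides this: the only regularity established for the reflection coefficient is $r\in H^{1}(\mathbb{R})$, i.e.\ $r'\in L^{2}$ only, and Proposition \ref{prop3} is formulated accordingly --- its bound (\ref{437}) retains the factor $|r'(|z|)|$ precisely because no uniform bound on $r'$ is available. A mapping property of the scattering transform of the strength you invoke would itself require proof, and in the generic case one would additionally need to control $r'$ through the spectral singularities $\pm1$. The gap is local and repairable: replace the $L^{\infty}$ bound by the Cauchy--Schwarz step that the paper uses both in Proposition \ref{pss} and in its own proof of this proposition,
\begin{equation*}
\int \left|r'(|\zeta|)\right|e^{-c t u^{2} v}\,\mathrm{d}u \;\le\; \|r'\|_{L^{2}(\mathbb{R})}\Bigl(\int e^{-2c t u^{2} v}\,\mathrm{d}u\Bigr)^{1/2}\;\lesssim\; (tv)^{-1/4},
\end{equation*}
after which the $v$-integration, keeping the $e^{-ctv^{3}}$ part of the decay (legitimate since $u\gtrsim v$ throughout the sector $\Omega_{j}$), still produces the rate $t^{-1/4+1/12-1/3}=t^{-1/2}$.

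Apart from this, your route deviates from the paper's in a harmless --- in one respect better --- way. The paper does not split into neighborhoods of $\pm1$: it keeps the singular term $|\zeta-z_{j}|^{-1/2}$ of (\ref{437}) everywhere and integrates it directly with H\"older's inequality, splitting instead at $|z|=2$; the refined bound (\ref{estn1}) is never invoked in its proof. Your use of (\ref{estn1}) inside a fixed neighborhood of $\pm1$, yielding the $t^{-5/6}$ contribution, is correct and has the side benefit of absorbing the cutoff term $\varphi(|z|)$ of (\ref{437}), supported near $1$, which the paper's proof silently drops. Your handling of $\Omega_{0^{\pm}}$, of the region $|z|>2$, and of $M^{(5)}(0)-I$ (the kernel $|\zeta|^{-1}$ is bounded on $\Omega_{j}$, $j=1,\dots,4$, and combines to the area-integrable $|\zeta|^{-3/2}$ near the origin) agrees with the paper's conclusions.
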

\begin{proof}
	Similarly to the proof of Proposition \ref{pss}, we take $z \in \Omega_{1}$ as an example and divide the integration (\ref{expanm51}) on $\Omega_{1}$ into four parts.
Firstly, we consider the estimate of  $M^{(5)}_1(x,t)$.
By (\ref{transd}) and the boundedness of $M^{(4)}(z)$ and $M^{rhp}(z)$ on $\Omega_{1}$, we have
	\begin{align}\label{m51esti}
	|M^{(5)}_1(x,t)| \lesssim I_1+I_2+I_3+I_4,
	\end{align}
	where
	\begin{align*}
		&I_1  =    \iint_{\Omega_{1} \cap \{\zeta:|\zeta| \le 2\} } F(\zeta,z) \mathrm{d} A(\zeta), \ \  I_2 =  \iint_{\Omega_{1} \cap \{\zeta:|\zeta| > 2\}}  F(\zeta,z) \mathrm{d} A(\zeta),\\
&I_3  =    \iint_{\Omega_{1} \cap \{\zeta:|\zeta| \le 2\} } G(\zeta,z)\mathrm{d} A(\zeta), \  \ \ I_4=\iint_{\Omega_{1} \cap \{\zeta:|\zeta| > 2\}} G(\zeta,z)	\mathrm{d} A(\zeta),
	\end{align*}
with
$$  F(\zeta,z)= \left|f(|\zeta|) \right|e^{\re(2it\theta(z))}, \  G(\zeta,z)= \left| \zeta-z_1 \right|^{-1/2}e^{\re(2it\theta(z))}.   $$
Here, $f(|\zeta|) =r'(|\zeta|)$ or $f(|\zeta|) =\varphi( |z|)$.
 Let  $\zeta =z_1 + u+iv = |\zeta|e^{iw}$.  By Cauchy-Schwarz inequality and Proposition \ref{reprop1}, we have
	\begin{align*}
		I_1 & \lesssim \int_{0}^{2 \sin w} \int_{v}^{2 \cos w -z_1} \left|f(|\zeta|) \right|e^{-c_1 tu^2v} \mathrm{d}u \mathrm{d}v  \lesssim t^{-1/2},\\
I_2 & \lesssim \int_{2 \sin w}^\infty \int_{2 \cos w -z_1}^{\infty} \left|f(|\zeta|) \right| e^{-c_1 tv} \mathrm{d}u \mathrm{d}v  \lesssim t^{-1}.
	\end{align*}
	By H\"{o}lder's inequality with $p>2$ and $1/p+1/q=1$ and Proposition \ref{reprop1}, we have
	\begin{align*}
		I_3 &\lesssim \int_{0}^{2 \sin w} \int_{v}^{2\cos w -z_1}   \left| u + i v\right|^{-1/2} e^{-c_1 tu^2v} \mathrm{d}u  \mathrm{d}v \lesssim t^{-1/2-1/(3p)},\\
I_4 &\lesssim \int_{2 \sin w}^\infty \int_{2 \cos w -z_1}^{\infty}  \left| u + i v\right|^{-1/2}e^{-c_1 tv} \mathrm{d}u  \mathrm{d}v \lesssim t^{-3/2}.
	\end{align*}

To estimate $M^{(5)}(0)-I$,  we first note that $|z|^{-1} \le |z_1|^{-1}$
for all $z \in \Omega_{1}$. Combining the estimates from (\ref{m50}) and (\ref{m51esti}), we obtain $|M^{(5)}(0)-I| \lesssim t^{-1/2}$.

\end{proof}

 To recover the potential via  the reconstruction formula (\ref{sol}), we require an estimate of  $M^{(3)}(0)$.
\begin{proposition} As $t \to +\infty$, $M^{(3)}(0)$ satisfies the   estimate
	\begin{align}\label{m30}
		M^{(3)}(0)= E(0) + \mathcal{O}(t^{-1/2}),
	\end{align}
where $E(0)$ is given by (\ref{e00}).
\end{proposition}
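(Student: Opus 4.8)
The plan is to evaluate the chain of transformations (\ref{trans4}), (\ref{transd}) and (\ref{trans6}) at the single point $z=0$ and then feed in the estimates already obtained for $M^{(5)}$ and $E$. The first thing I would record is that $z=0$ lies strictly between the saddle points (in $(z_3,z_2)$) and outside the two local disks, $0\notin\mathcal{U}_r\cup\mathcal{U}_l$; hence by the definition (\ref{trans6}) of $M^{rhp}$ we have simply $M^{rhp}(0)=E(0)$.

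Next I would verify that the $\bar\partial$-dressing is trivial at the origin, namely $R^{(3)}(0)=I$. The point $z=0$ is the common vertex of the wedges $\Omega_{0^\pm}$ and their conjugates, whose boundary rays $\Sigma_{0^\pm},\Sigma_{0^\pm}^*$ all emanate from $0$. On each $\Sigma_{0^\pm}$ the boundary value of $R_{0^\pm}$ equals $\gamma(0)=0$, so on these rays the off-diagonal entry $R_{0^\pm}(z)e^{2it\theta(z)}$ of (\ref{R3}) vanishes identically and $R^{(3)}=I$; in the interiors the continuous extension obeys $R_{0^\pm}(z)\to0$ as $z\to0$ (since $r(z)\sim z^2$ and $\gamma(0)=0$), while Proposition \ref{reprop1}, through the computation (\ref{eoue}) with $F(|z|)=|z|+|z|^{-1}\to\infty$, shows $\re(2i\theta(z))\to-\infty$ so that $e^{2it\theta(z)}$ decays. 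Consequently $R^{(3)}$ is continuous at the origin with value $I$, and (\ref{trans4}) gives $M^{(4)}(0)=M^{(3)}(0)R^{(3)}(0)=M^{(3)}(0)$.

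With these two facts in hand, rearranging (\ref{transd}) at $z=0$ produces the identity
\[
M^{(3)}(0)=M^{(4)}(0)=M^{(5)}(0)\,M^{rhp}(0)=M^{(5)}(0)\,E(0).
\]
Finally I would insert the bound $|M^{(5)}(0)-I|\lesssim t^{-1/2}$ from (\ref{m51infty}) and note that $E(0)=I+\mathcal{O}(t^{-1/3})$ is bounded by (\ref{e00}); writing $M^{(5)}(0)=I+\mathcal{O}(t^{-1/2})$ then yields
\[
M^{(3)}(0)=\bigl(I+\mathcal{O}(t^{-1/2})\bigr)E(0)=E(0)+\mathcal{O}(t^{-1/2}),
\]
which is (\ref{m30}). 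The whole argument is bookkeeping once the estimates of the previous subsections are granted; the only step demanding genuine care is $R^{(3)}(0)=I$, where one must be sure that the vanishing of $R_{0^\pm}$ at the origin is not destroyed by the essential singularity of $e^{2it\theta(z)}$ at $z=0$ — precisely the role played by the sign estimate of Proposition \ref{reprop1} in the sectors $\Omega_{0^\pm}$.
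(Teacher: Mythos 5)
Your proof is correct and takes essentially the same route as the paper: both unravel the chain of transformations (\ref{trans4}), (\ref{trans6}), (\ref{transd}) to obtain $M^{(3)}(z)=M^{(5)}(z)E(z)$ wherever $R^{(3)}(z)=I$ and $z$ lies outside $\mathcal{U}_r\cup\mathcal{U}_l$, and then apply the bound $|M^{(5)}(0)-I|\lesssim t^{-1/2}$ from (\ref{m51infty}) together with the boundedness of $E(0)$ from (\ref{e00}). If anything, you are more careful than the paper, whose proof derives the identity ``for large $z$'' and then ``takes $z=0$''; your explicit verification that $R^{(3)}(0)=I$ (via $\gamma(0)=0$, $r(z)\sim z^2$, and the sign estimate of Proposition \ref{reprop1} controlling the essential singularity of $e^{2it\theta(z)}$ at the origin) supplies the continuity argument the paper leaves implicit.
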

\begin{proof}
	Reviewing the series of transformations  (\ref{trans4}), (\ref{trans6}), and (\ref{transd}) ,  for large $z$ and satisfying   $R^{(3)}(z)=I$,  the solution of $M^{(3)}(z)$ is given by
	\begin{align*}
		M^{(3)}(z) = M^{(5)}(z)E(z) .
	\end{align*}
By (\ref{trans9}) and (\ref{m51infty}), we further obtain
\begin{align*}
	M^{(3)}(z)= E(z)  + \mathcal{O} (t^{-1/2}),
\end{align*}
which yields (\ref{m30}) by taking $z=0$.
\end{proof}

\subsection{Painlev\'e asymptotics} \label{thm-result1}


In this subsection,  we  state  and prove the  main result   in this paper  as follows.

\begin{theorem} \label{th}  For initial data $q_0(x)-\tanh{(x)}\in H^{4,4}(\mathbb{R})$,
let $r(z)$ and  $\{\eta_j\}_{j=0}^{N-1} $  be the  associated
 reflection coefficient and the discrete spectrum, respectively.
 We also define  the  modified reflection coefficient  $\tilde{r}(z)$  given by (\ref{Rz2})-(\ref{Rz22}).
 Then the long-time asymptotics of the solution to the Cauchy problem (\ref{dmkdv})-(\ref{bdries})  for the
 defocusing mKdV  equation in the transition region $\left|\frac{x}{t}+6 \right|t^{2/3}< C$ with $C>0$  is given by the following formula:

 \begin{equation}\label{q1}
  q(x,t)= -1 +(3t)^{-1/3}u(s) \cos \varphi_0 +\mathcal{O}\left(t^{-1/3-5\tau}\right),
 \end{equation}
where $\tau$ is a constant with $0<\tau<1/30$,
\begin{align*}
 s=\frac{1}{3} (3t)^{2/3} \left(\frac{x}{t}+6\right),\ \varphi_0 = \arg \tilde{r}(1),
\end{align*}
and $u(s ) $ is a solution of the Painlev\'{e} \uppercase\expandafter{\romannumeral2} equation
\begin{equation}\label{pain2equ}
u_{ss}(s)  = 2u^3(s) + s u(s),
\end{equation}
which admits  the asymptotics
\begin{align}\label{pain2equa}
 u(s) \sim -|\tilde{r}(1)| \mathrm{Ai}(s) \sim  - |\tilde{r}(1)| \frac{1}{2\sqrt{\pi}}s^{-1/4}  e^{-2 s^{3/2} /3}, \ s \to +\infty,
\end{align}
 where $\mathrm{Ai}(s)$ is the classical Airy function.
\end{theorem}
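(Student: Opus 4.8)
The plan is to read $q(x,t)$ off the reconstruction formula (\ref{sol}), $q = i\lim_{z\to\infty}(zM(z))_{21}$, by unwinding the entire chain of transformations $M \to M^{(1)} \to M^{(2)} \to M^{(3)} \to M^{(4)}$ together with the splitting $M^{(4)} = M^{(5)} M^{rhp}$, keeping track of only the $1/z$ coefficient of the $(21)$-entry at infinity. Writing $N = I + N_1 z^{-1} + \mathcal{O}(z^{-2})$ for the large-$z$ expansion of a matrix $N$, the formula reads $q = i(M_1)_{21}$. First I would observe that for large $z$ the interpolation matrix $G(z)$ in (\ref{trans1}) equals $I$ and $\delta(z)^{\sigma_3}$ is diagonal, so (\ref{trans1}) leaves the $1/z$ coefficient of the $(21)$-entry unchanged; together with the relation $M^{(1)}(z) = M^{(2)}(z)(I + \mathcal{O}(e^{-ct}))$ this gives $q = i(M^{(2)}_1)_{21} + \mathcal{O}(e^{-ct})$.

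Next I would expand the singularity-removing transformation (\ref{trans3}): substituting $M^{(3)} = I + M^{(3)}_1 z^{-1} + \mathcal{O}(z^{-2})$ yields $M^{(2)}_1 = \sigma_2 M^{(3)}(0)^{-1} + M^{(3)}_1$. Since $R^{(3)}(z) = I$ for large $z$, the chain (\ref{trans4}), (\ref{trans6}), (\ref{transd}) collapses to $M^{(3)}(z) = M^{(5)}(z) E(z)$ there, so $M^{(3)}_1 = M^{(5)}_1 + E_1$. Assembling these reduces the reconstruction to three $(21)$-entries:
\[
q = i\Big[(\sigma_2 M^{(3)}(0)^{-1})_{21} + (M^{(5)}_1)_{21} + (E_1)_{21}\Big] + \mathcal{O}(e^{-ct}).
\]

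Now I would insert the estimates already in hand. By (\ref{m30}) and (\ref{e00}), $M^{(3)}(0) = I + (3t)^{-1/3} u(s)\sin\varphi_0\,\sigma_1 + \mathcal{O}(t^{-1/3-\epsilon})$; inverting and using $\sigma_2\sigma_1 = -i\sigma_3$ shows the $(3t)^{-1/3}$ correction lands on the diagonal (as $(\sigma_3)_{21}=0$) and so drops out, leaving $(\sigma_2 M^{(3)}(0)^{-1})_{21} = i + \mathcal{O}(t^{-1/3-\epsilon})$. Hence this term alone produces the constant $i\cdot i = -1$, the background value dictated by the non-zero boundary condition. Proposition \ref{error1} gives $(E_1)_{21} = -i(3t)^{-1/3} u(s)\cos\varphi_0 + \mathcal{O}(t^{-1/3-\epsilon})$, whose $i$-multiple is precisely the Painlev\'e term $(3t)^{-1/3} u(s)\cos\varphi_0$, while (\ref{m51infty}) bounds $(M^{(5)}_1)_{21} = \mathcal{O}(t^{-1/2})$. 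Because $0 < \epsilon < 1/9 < 1/6$ we have $t^{-1/2} = o(t^{-1/3-\epsilon})$, so all remainders coalesce into a single $\mathcal{O}(t^{-1/3-\epsilon})$, which is exactly (\ref{q1}).

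Finally, the identification of $u(s)$ as a Painlev\'e II transcendent comes from (\ref{mr}): up to the constant conjugation $\sigma_1 e^{-i(\pi/4+\varphi_0/2)\widehat{\sigma}_3}(\cdot)\sigma_1$, the rescaled parametrix $\widetilde M^r(\hat k)$ equals the solution $M^P(\hat k)$ of the standard Painlev\'e II RH problem of \ref{appx} with parameter $p = |\tilde r(1)| = |r(1)|$; the quantity $u(s)$ read from $M^P_1(s)$ therefore solves $u_{ss} = 2u^3 + su$, and the classical vanishing (Ablowitz--Segur) connection formula for that RH problem yields the Airy asymptotics (\ref{pain2equa}). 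I expect the main obstacle to be the bookkeeping in the middle step: threading the reconstruction formula through the non-analytic singularity-removal transformation (\ref{trans3}) and verifying that $(\sigma_2 M^{(3)}(0)^{-1})_{21}$ reproduces the constant $-1$ \emph{exactly}, without leaking into the $(3t)^{-1/3}$ order. This hinges on the precise off-diagonal form of $E(0)$ in (\ref{e00}) and the identity $\sigma_2\sigma_1 = -i\sigma_3$, so that the Painlev\'e contribution is carried cleanly and solely by $(E_1)_{21}$.
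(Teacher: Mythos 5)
Your proposal is correct and follows essentially the same route as the paper's proof: inverting the chain of transformations (\ref{trans1}), (\ref{trans3}), (\ref{trans4}), (\ref{trans6}), (\ref{transd}) to reduce the reconstruction formula (\ref{sol}) to $q = i\bigl(\sigma_2 M^{(3)}(0)^{-1} + E_1\bigr)_{21} + \mathcal{O}\bigl(t^{-\frac{1}{3}-\epsilon}\bigr)$, then inserting (\ref{e001}), (\ref{e00}), (\ref{m30}), (\ref{m51infty}) and identifying $u(s)$ through (\ref{mr}) and the Painlev\'e II model of \ref{appx}. Your bookkeeping of the $(21)$-entry (in particular the observation that $\sigma_2\sigma_1=-i\sigma_3$ keeps the $\sin\varphi_0$ correction off the relevant entry, and that $\delta(z)^{\mp\sigma_3}$ only perturbs the diagonal) is a more explicit verification of steps the paper leaves implicit, but it is the same argument.
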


\begin{proof}
	Inverting  the sequence  of transformations (\ref{trans1}), (\ref{trans3}), (\ref{trans4}),  (\ref{trans6}), (\ref{transd}),
	and especially taking $z\to \infty$ vertically such that  $R^{(3)}(z)= G(z)=I$, then  the  solution of RH problem \ref{RHP0} is given by
	\begin{align}
		& 	M(z) =\left(  I + \frac{\sigma_2}{z}M^{(3)}(0)^{-1} \right)  M^{(5)}(z) E(z)\delta(z)^{-\sigma_3} + \mathcal{O}(e^{-ct}), \nonumber
	\end{align}
    where $c$ is a positive constant.
	Furthermore, substituting asymptotic expansions  (\ref{Texpan}),  (\ref{trans8}),   and (\ref{trans9}) into  the above formula,
	the reconstruction formula (\ref{sol}) yields
	\begin{equation}\label{qori}
		q(x,t) =i  \left(\sigma_2 M^{(3)}(0)^{-1} + E_1 \right)_{21} + \mathcal{O}\left(t^{-1/3-5\tau}\right).
	\end{equation}
 Utilizing  (\ref{e001}) and  (\ref{m30}), we arrive at   the result stated  as (\ref{q1})  in Theorem \ref{th}.
\end{proof}


\appendix

\section{ Painlev\'{e} \uppercase\expandafter{\romannumeral2}   Model RH Problem} \label{appx}
The Painlev\'{e} \uppercase\expandafter{\romannumeral2} equation takes the form
\begin{align}\label{up22}
	u_{ss}(s) = 2u^3(s)  +su(s), \quad s \in \mathbb{R},
\end{align}
which can be solved by means of the solution of a  RH problem as follows.
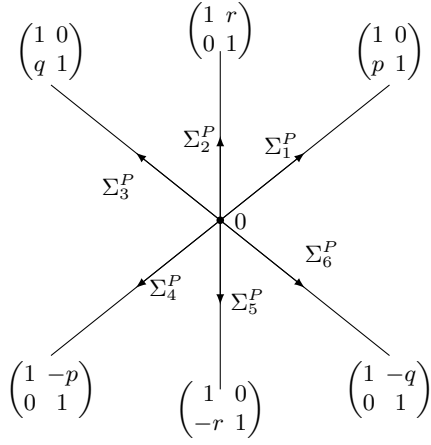
\begin{figure}
	\begin{center}
		\begin{tikzpicture}[scale=0.9]
			\node[shape=circle,fill=black,scale=0.15] at (0,0) {0};
			\node[below] at (0.3,0.25) {\footnotesize $0$};
			\draw [] (0,-2.5 )--(0,2.5);
			\draw [-latex] (0,0)--(0,1.25);
			\draw [-latex] (0,0 )--(0,-1.25);
			\draw [] (0,0 )--(2.5,2);
			\draw [-latex] (0,0)--(1.25,1);
			\draw [] (0,0 )--(2.5,-2);
			\draw [-latex] (0,0)--(1.25,-1);
			\draw [] (0,0 )--(-2.5,2);
			\draw [-latex] (0,0)--(-1.25,1);
			\draw [] (0,0 )--(-2.5,-2);
			\draw [-latex] (0,0)--(-1.25,-1);

			\node at (0.9,1.1) {\footnotesize$\Sigma^P_1$};
			\node at (1.5,-0.5) {\footnotesize$\Sigma^P_6 $};
			\node at (-1.5,0.5) {\footnotesize$\Sigma^P_3$};
			\node at (-0.8,-1) {\footnotesize$\Sigma^P_4$};
			\node at (-0.3,1.2) {\footnotesize$\Sigma^P_2$};
			\node at ( 0.4,-1.2) {\footnotesize$\Sigma^P_5$};
			
			\node  at (2.5,2.5) {\footnotesize $\begin{pmatrix} 1 & 0 \\ p & 1 \end{pmatrix}$};
			\node  at (-2.5,2.5) {\footnotesize $\begin{pmatrix} 1 & 0\\ q& 1 \end{pmatrix}$};
			\node  at (0,2.8) {\footnotesize $\begin{pmatrix} 1 & r\\ 0& 1 \end{pmatrix}$};
			\node  at (0,-2.8) {\footnotesize $\begin{pmatrix} 1 & 0 \\ -r& 1 \end{pmatrix}$};
			\node  at (2.5,-2.5) {\footnotesize $\begin{pmatrix} 1 & -q\\ 0& 1 \end{pmatrix}$};
			\node  at (-2.5,-2.5) {\footnotesize $\begin{pmatrix} 1 & -p \\ 0 & 1 \end{pmatrix}$};
			
		\end{tikzpicture}
		\caption{ \footnotesize { The jump contour $\Sigma^P$ and the corresponding jump matrix.}}
		\label{Sixrays}
	\end{center}
\end{figure}

Denote $\Sigma^P = \bigcup_{j=1}^6\left\{  \Sigma_j^P = e^{i\left(\frac{\pi}{6}+(j-1)\frac{\pi}{3}\right)} \mathbb{R}_+ \right\}$, see Figure \ref{Sixrays}.
The Painlev\'{e} \uppercase\expandafter{\romannumeral2} model RH problem satisfies the following properties:
\begin{prob}\label{mpain}
	Find   $M^{P}(k)=M^{P}(k;s)$ with properties
	\begin{itemize}
		\item Analyticity: $M^{P}(k)$ is analytic in $\mathbb{C}\setminus \Sigma^{P}$.
		\item Jump condition:
		\begin{equation*}
			M^{P}_+( k)=M^{P}_-(k)e^{-i(\frac{4}{3}k^3+sk)\widehat{\sigma}_3}V^{P}(k),
		\end{equation*}
		where
		$V^{P}(k)$ is shown in Figure \ref{Sixrays}. The parameters $p$, $q$, and $r$ in $V^{P}(k)$   satisfy the relation
		\begin{align}\label{rer}
		r=	p+q+pqr .
		\end{align}
		\item Asymptotic behaviors:
		\begin{align*}
			& M^{P}(k) = I +   \mathcal{O} \left(k^{-1}\right), \quad \text{as} \ k \to \infty,\\
			& M^{P}( k) = \mathcal{O}(1),\quad  \text{as} \ k \to 0,
		\end{align*}
and for each $C_1 > 0$,
\begin{align}\label{mPbounded}
	\sup_{k \in \mathbb{C}\setminus \Sigma^P} \sup_{s \geq -C_1} |M^P(k)|  < \infty.
\end{align}
	\end{itemize}
\end{prob}

Then
\begin{equation}\label{up2}
	u(s)=2\left(M_1^P(s)\right)_{12} = 2 \left(M_1^P(s)\right)_{21}
\end{equation}
solves the Painlev\'{e} \uppercase\expandafter{\romannumeral2} equation, where
	\begin{align*}
			& M^{P}(k) = I + k^{-1} M_1^P(s) + \mathcal{O} \left(k^{-2}\right), \quad \text{as} \ k \to \infty.
		\end{align*}
A result due to Hastings and McLeod \cite{p1} presents that, for any $a\in\mathbb{R}$, there exist a unique solution to the homogeneous
Painlev\'{e} II equation \eqref{up22} that behaves like
\begin{align}\label{us}
	u(s) = a \mathrm{Ai}(s) +\mathcal{O}\left( s^{-\frac{1}{4}}e^{-\frac{4}{3}s^{3/2}}\right),\quad s\to +\infty,
\end{align}
where  $\mathrm{Ai}(s)$ denotes the Airy function. Particularly, for $q\in i\mathbb{R}$, $|q|<1$, $p=-q$,  and $r=0$, it follows that the solution $u(s)$ has the asymptotics \eqref{us} with
$a=-\im q$  and
the matrix  $M_1^P(s)$ has the form given by
\begin{align}\label{posee}
M_1^P(s) = \frac{1}{2} \begin{pmatrix} -i\int_s^\infty u(\zeta)^2\mathrm{d}\zeta & u(s) \\ u(s) & i\int_s^\infty u(\zeta)^2\mathrm{d}\zeta \end{pmatrix}.
\end{align}
Furthermore, a special argument shows that 
for the singular case $q\in i \mathbb{R}, |q|=1$, $p=-q$, and $r=0$, \eqref{up2} also leads to a global, real solution of \eqref{up22} with the asymptotics  \eqref{us}. 
More details can be found in \cite{AblwzP1,as1,p1,p2}.



 \vspace{4mm}

    \noindent\textbf{Acknowledgments.}
    Wang is supported by the National Natural Science
    Foundation of China (Grant No. 12347141) and China
    Postdoctoral Science Foundation (Certificate No. 2023M740717).
    Xu is supported by China
    Postdoctoral Science Foundation (Certificate No. 2024M760480).
    Fan is supported by the National Natural Science
    Foundation of China (Grant No. 12271104).
    \vspace{2mm}

    \noindent\textbf{Data Availability Statements}

    The data that supports the findings of this study are available within the article.\vspace{2mm}

    \noindent{\bf Conflict of Interest}

    The authors have no conflicts to disclose.

\end{document}